\newif\ifreport\reporttrue
\DeclareMathOperator*{\argmin}{argmin}
\newtheorem{definition}{Definition}
\newtheorem{remark}{Remark}
\newtheorem{theorem}{Theorem}
\newtheorem{lemma}{Lemma}
\newtheorem{proposition}{Proposition}
\newtheorem{corollary}{Corollary}
\begin{document}
\title{Age-optimal Scheduling over Hybrid Channels}

\author{Jiayu Pan, Ahmed M. Bedewy, 
        Yin Sun, \IEEEmembership{Senior Member, IEEE,}
        and Ness B. Shroff, \IEEEmembership{Fellow, IEEE}
\thanks{This paper was presented in part at ACM MobiHoc 2021 \cite{pan2020age}.

This work was funded in part through NSF grants: CNS-1901057, CNS- 2007231, CNS-1618520, CNS-1409336, CNS-1955561, CCF-1813050, IIS- 2112471, CNS- 2106932, CNS- 1955535, the Office of Naval Research under Grant N00014-17-1-241, and ARO grant W911NF-21-1-0244.

J. Pan is with the Department of ECE, The Ohio State University,
Columbus, OH 43210 USA (e-mail: pan.743@osu.edu).

A. M. Bedewy is with the Department of ECE, The Ohio State University,
Columbus, OH 43210 USA (e-mail: bedewy.2@osu.edu).

Y. Sun is with the Department of ECE, Auburn University, Auburn, AL
36849 USA (e-mail: yzs0078@auburn.edu).

N. B. Shroff is with the Department of ECE and the Department of
CSE, The Ohio State University, Columbus, OH 43210 USA (e-mail:
shroff.11@osu.edu).}}

\IEEEtitleabstractindextext{%
\begin{abstract}
We consider the problem of minimizing the age of information when a source can transmit status updates over two heterogeneous channels. Our work is motivated by recent developments in 5G mmWave technology, where transmissions may occur over an unreliable but fast (e.g., mmWave) channel or a slow reliable (e.g., sub-6GHz) channel. The unreliable channel is modeled as a time-correlated Gilbert-Elliot channel at a high rate when the channel is in the “ON” state. The reliable channel provides a deterministic but lower data rate. The scheduling strategy determines the channel to be used for transmission in each time slot, \textcolor{black}{aiming} to minimize the time-average age of information (AoI). The optimal scheduling problem is formulated as a Markov Decision Process (MDP), which \textcolor{black}{is challenging to solve} because super-modularity does not hold in a part of the state space.
We address this challenge and show that a multi-dimensional threshold-type scheduling policy is optimal for minimizing the age. By exploiting the structure of the MDP \textcolor{black}{and analyzing the discrete time Markov chains (DTMCs) of the threshold-type policy}, we devise a low-complexity bisection algorithm to compute the optimal thresholds. We compare different scheduling policies using numerical simulations.  

\end{abstract}

\begin{IEEEkeywords}
Age of information, hybrid channels, \textcolor{black}{scheduling, and mmWave communications}.
\end{IEEEkeywords}}

\maketitle

\IEEEdisplaynontitleabstractindextext

%
\IEEEpeerreviewmaketitle

\IEEEraisesectionheading{\section{Introduction}\label{introduction}}
\IEEEPARstart{T}{imely} \textcolor{black}{updates} of the system state are of great significance in cyber-physical systems, such as vehicular networks, sensor networks, and UAV navigations. In these \textcolor{black}{systems, freshly} generated data is more \textcolor{black}{valuable} than outdated data. \emph{Age of information} (AoI), or simply \emph{age}, was introduced as an end-to-end application-layer metric to \textcolor{black}{measure information} freshness \cite{yates2020age, kaul2012real,costa2016age,inoue2019general,buyukates2020age,bedewy2019minimizing,bedewy2019age,sun2018age,sun2019sampling,sun2019samplings,bedewy2019ages,hsu2019scheduling,talak2019optimizing,bedewy2021low,joo2018wireless,lu2018age,kadota2018optimizing,qian2020minimizing,liu2019minimizing,ornee2019samplings,sun2017update,altman2019forever,el2012optimal,talak2018optimizing}. The age at time $t$ is defined as $\Delta(t)=t-U_t$, where $U_t$ is the generation time of the freshest packet that has been received by time $t$. The difference between age and classical \textcolor{black}{performance metrics of wireless networks} like delay and throughput is evident even in elementary queuing systems \cite{kaul2012real}. High throughput requires frequent status updates, which would cause a long waiting time in the queue that worsens timeliness. On the other hand, delay \textcolor{black}{and waiting time} can be greatly reduced by decreasing the update frequency, which, however, may increase the age because the status is updated infrequently. 




In future wireless networks, sub-6GHz frequency spectrum is insufficient for fulfilling the high throughput demand of emerging real-time applications such as VR/AR applications, where contents must be delivered within 5-20 ms of latency, requiring a high throughput of 400-600 Mbps \cite{qualcomm2018}.      
To address this challenge, 5G technology utilizes high-frequency millimeter wave (mmWave) bands such as 28/38 GHz, which provide a much higher data rate than sub-6GHz \cite{rappaport2013millimeter}. Verizon and Samsung demonstrated that a throughput of nearly 4Gbps was achieved in their mmWave demo system, using a 28GHz frequency band with 800MHz bandwidth \cite{verizon2018}. However, unlike sub-6GHz spectrum bands, mmWave channels are highly unreliable due to blocking susceptibility, strong atmospheric absorption, and low penetration. Real-world smartphone experiments have shown that even obstructions by hands could significantly degrade the mmWave throughput \cite{narayanan2020first}. One solution to mitigate this effect is to let sub-6GHz coexist with mmWave to form two \emph{heterogeneous} channels, so that the user equipment can offload data to sub-6GHz when mmWave communications are unfeasible \cite{pi2011introduction,pi2011system,semiari2019integrated,aziz2016architecture}. Some work has already been done based on mmWave/sub-6GHz heterogeneous networks \cite{deng2017resource,elshaer2016downlink}. 
However, how to improve information freshness in such hybrid networks has remained largely unexplored.       

In this study, we consider a hybrid status updating system where a source can transmit the update  packets over an unreliable but fast mmWave channel 
or a slow reliable sub-6GHz channel. 
Our objective is to find a dynamic channel scheduling policy that minimizes the long-term average expected age. 
The main contributions of this paper are stated as follows:    
\begin{itemize}
    \item  The optimal scheduling problem for minimizing the age over heterogeneous channels is formulated as a Markov Decision Process (MDP). The state transition of this MDP is complicated for two reasons: (i) the two channels have different data rates and packet transmission times, and (ii) the state of the unreliable mmWave channel is correlated over time. We prove that there exists a multi-dimensional \textcolor{black}{threshold-type} scheduling policy that is optimal. This optimality result holds for all possible values of the channel parameters. 
\textcolor{black}{One of the tools for proving this result is super-modularity} \cite{topkis1998supermodularity}. Because of the complicated state transitions, \textcolor{black}{super-modularity} holds in a part of the state space \textcolor{black}{but not in the rest of the state space. This is a key }difference from the scheduling problems considered earlier in \textcolor{black}{prior studies,} e.g.,  \cite{krishnamurthy2016partially,puterman1990markov,ngo2009optimality,altman2019forever,sun2017update,yao2019integrating,sun2019sampling}. \textcolor{black}{To conquer this challenge, }we develop additional techniques to show that the \textcolor{black}{optimal scheduling policy has a threshold-type structure over the entire state space, including the part of state space where super-modularity} does not hold.

    \item \textcolor{black}{The state transition of the discrete time Markov chain (DTMC) for the threshold-type scheduling policy is complicated. Nonetheless}, we show that the thresholds of the optimal scheduling policy can be evaluated efficiently, by using closed-form expressions or a low-complexity bisection search algorithm. Compared with the algorithms for calculating the thresholds and optimal scheduling policies in, e.g.,  \cite{krishnamurthy2016partially,puterman1990markov,ngo2009optimality,altman2019forever,sun2017update,yao2019integrating,sun2019sampling}, our solution algorithms have much lower computational complexities.

    \item In the special case that the state of the unreliable mmWave channel 
    is independent and identically distributed (i.i.d.) over time, the optimal scheduling policy is shown to possess \textcolor{black}{a simpler and interesting }form. 
\textcolor{black}{Finally, numerical results are provided to validate our results by comparing with several other policies}. 
\end{itemize}

\section{Related Works}\label{relatedworks}
Age of information has become a popular research topic in recent years, e.g., \cite{yates2020age, kaul2012real,costa2016age,inoue2019general,buyukates2020age,bedewy2019minimizing,bedewy2019age,sun2018age,sun2019sampling,sun2019samplings,bedewy2019ages,hsu2019scheduling,talak2019optimizing,bedewy2021low,joo2018wireless,lu2018age,kadota2018optimizing,qian2020minimizing,liu2019minimizing,ornee2019samplings,sun2017update,altman2019forever,el2012optimal,talak2018optimizing}. A comprehensive survey of the area was recently provided in \cite{yates2020age}. First, there has been substantial work on age performance analysis in queuing \textcolor{black}{systems} \cite{kaul2012real,costa2016age,inoue2019general,buyukates2020age,bedewy2019age,bedewy2019minimizing}. Average age and peak age in elementary queuing systems were analyzed in \cite{kaul2012real,costa2016age,inoue2019general}. A similar setting was considered in \cite{buyukates2020age} \textcolor{black}{where the inter-arrival times or service times follow a Gilbert-Elliot two-state Markov chain model}. 
A Last-Generated, First-Served (LGFS) policy was shown (near) optimal in \textcolor{black}{single-source, multi-server}, and multihop networks with arbitrary \textcolor{black}{packet} generation and arrival process \cite{bedewy2019age,bedewy2019minimizing}. These results were extended to multi-source multi-server \textcolor{black}{networks} in \cite{sun2018age}. 

Next, there has been a significant effort in age-optimal sampling \cite{sun2017update,sun2019sampling,sun2019samplings,ornee2019samplings,bedewy2019ages}. The optimal sampling policy was provided for minimizing a monotonic age function in \cite{sun2017update,sun2019sampling,ornee2019samplings}. Joint Sampling and scheduling in multi-source systems were analyzed in \cite{bedewy2019ages} where \textcolor{black}{the objective problem} could be decoupled into maximum age first (MAF) scheduling \cite{sun2018age} and an optimal sampling problem. Finally, age in wireless networks has been substantially explored in \cite{hsu2019scheduling,talak2019optimizing,joo2018wireless,lu2018age,kadota2018optimizing,qian2020minimizing,liu2019minimizing}. Scheduling in a broadcast network with random arrivals was provided where Whittle index policy can achieve (near) age optimality \cite{hsu2019scheduling}. Some other age-optimal scheduling for cellular networks were considered in \cite{talak2019optimizing,joo2018wireless,lu2018age,kadota2018optimizing,talak2018optimizing}. 
A class of age-optimal scheduling policies was analyzed in the asymptotic regime when the number of sources and channels both grow to infinity \cite{qian2020minimizing}. An \textcolor{black}{age-optimal} multi-path routing strategy was introduced in \cite{liu2019minimizing}.


However, the age-optimal \textcolor{black}{scheduling} problem via heterogeneous channels has been largely unexplored yet.
\textcolor{black}{Technical results for similar models} were reported in \cite{altman2019forever,el2012optimal}. \textcolor{black}{In these studies}, it is assumed that the first channel is unreliable but consumes a lower cost, and the second channel has the same delay \textcolor{black}{as the first channel, but depletes} a higher cost. \textcolor{black}{Optimal} scheduling policies were derived to achieve the optimal trade-off between age performance and cost.\begin{figure}[htbp]
\centerline{\includegraphics[width=0.5\textwidth]{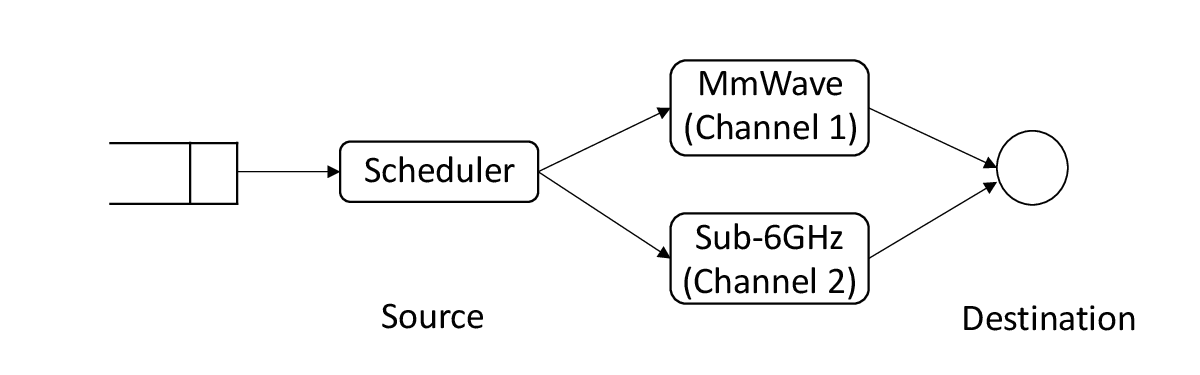}}
\caption{The system model for status updates in heterogeneous channels. The scheduler chooses mmWave (Channel 1) or sub-6GHz (Channel 2) for transmission over time.}
\label{fig1}
\end{figure}
\begin{figure}[htbp]
\centerline{\includegraphics[width=0.35\textwidth]{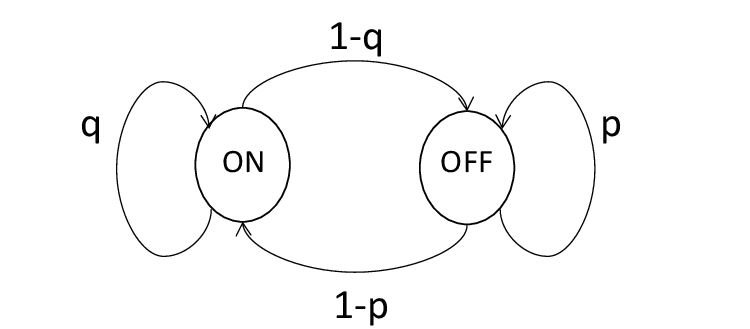}}
\caption{The Gilbert-Elliot \textcolor{black}{$ON$-$OFF$ Markov} model for Channel 1.}
\label{fig2}
\end{figure}Our study is different from \cite{altman2019forever,el2012optimal} in two aspects: (i) The study in \cite{altman2019forever,el2012optimal} show the optimality of a threshold-type policy
and efficiently \textcolor{black}{computes} the optimal threshold when the first channel is i.i.d. \cite{altman2019forever}, but our work allows a Markovian channel which \textcolor{black}{generalizes the i.i.d. channel model in \cite{altman2019forever}.}
 (ii) In addition, our study assumes that \textcolor{black}{the second sub-6GHz channel has a larger delay than the first mmWave} which complies with the property of dual mmWave/sub-6GHz channels in real applications. These two differences between mmWave and sub-6GHz make the MDP formulation more complex \textcolor{black}{than those of \cite{altman2019forever,el2012optimal}.} Thus, the techniques \textcolor{black}{in e.g., \cite{krishnamurthy2016partially,puterman1990markov,ngo2009optimality,altman2019forever,sun2017update,yao2019integrating,sun2019sampling} that can show} a nice structure of the optimal policy or solve the optimal policy with low complexity do not apply to our model. 


\section{System Model and Problem Formulation}\label{systemod}


\subsection{System Models}

Consider a single-hop network as illustrated in Fig. \ref{fig1}, where a source sends status update packets to the destination. We assume that time is slotted with slot index $t \in \{ 0,1,2...\} $. The source can generate a fresh status update packet at the beginning of each time slot. The packets can be transmitted either over the mmWave channel or over the sub-6GHz channel. The packet transmission time of the mmWave channel is $1$ time slot, whereas the packet transmission time of the sub-6GHz channel is $d$ time slots ($d\ge 2$)\footnote{If $d=1$, one can readily see that it is better to choose sub-6GHz than mmWave. Thus, in this paper we study the nontrivial case of $d \ge 2$.} because of its lower data rate.
The two channels have \textcolor{black}{different} advantages, which is the key feature of our study.

The mmWave channel, called \emph{Channel 1}, follows a two-state Gilbert-Elliot  model that is shown in Fig. \ref{fig2}. We say that Channel $1$ is $ON$ in time slot $t$, denoted by $l_1(t)=1$, if the packet is successfully transmitted to the destination in  time slot $t$; otherwise Channel $1$ is said to be $OFF$, denoted by $l_1(t)=0$. If a packet is not successfully transmitted, then it is dropped, and a new status update packet is generated at the beginning of the next time slot. The self transition probability of the $ON$ state is $q$, and the self transition probability of the $OFF$ state is $p$, where $0<q<1$ and $0<p<1$. 
We assume that at the beginning of time slot $t$, the source knows $l_1(t-1)$ perfectly.

The sub-6GHz channel, called \emph{Channel 2}, has a steady connection. As mentioned above, the packet transmission time of Channel 2 is $d$ time slots. Define $l_2(t)\in \{ 0,1,...,d-1 \} $ as the state of Channel 2 in time slot $t$, where $l_2(t)$ is the remaining transmission time of the packet being sent over Channel $2$ at the beginning of time slot $t$, and $l_2(t)=0$ means that Channel $2$ is currently idle and ready for sending the next packet. In time slot $t$, the source  has immediate knowledge about the state $l_2(t)$ of Channel $2$. On the other hand, because the packet transmission time of Channel 1 is 1 time slot, Channel 1 is always ready for transmission at the beginning of each time slot.

Following the application settings in \cite{pi2011introduction,pi2011system,semiari2019integrated,deng2017resource,aziz2016architecture}, a packet can be transmitted using only one channel at a time, i.e., the two channels cannot be used simultaneously. The scheduler decides which channel to use \textcolor{black}{for transmitting} a packet at each time slot. We also assume that the scheduler can choose idle (neither channel) since it has been shown that channel idling could reduce the average age in \textcolor{black}{some} systems \cite{sun2017update,bedewy2019ages,sun2019sampling}. \textcolor{black}{Hence, the} scheduling decision at the beginning of time slot $t$ \textcolor{black}{can be} denoted by $u(t)\in \{ 1,2,{none} \}$. The action $u(t)=1$ or $2$ means that the source generates a packet and assigns it to Channel $1$ or Channel $2$, respectively. 
The action $u(t)=none$ means that no new packet is assigned to any channel at time slot $t$. 
Hence, u(t) = $none$ can occur if (i) a packet is was assigned to Channel 2 earlier and has not completed its transmission, i.e., $l_2(t) \in \{1, 2,\ldots, d-1 \}$ such that no packet can be assigned for transmission, or (ii) $l_2(t)=0$, but both channels are kept idle on purpose. 



The age of information (AoI) $\Delta(t)$ is the time difference between the current time slot $t$ and the generation time of the freshest delivered packet \cite{kaul2012real}. By this definition, when a packet is delivered, the age drops to the transmission time duration of the delivered packet. Specifically, if Channel 1 is selected in time slot $t$ and Channel 1 is $ON$, then the age drops to $1$ at time slot $t+1$. If the remaining service time of Channel 2 at time slot $t$ is 1, then age drops to $d$ at time slot $t+1$. When there is no packet delivery at time slot $t$, the age increases by one in each time slot.
Hence, the time-evolution of the age  is given by
\begin{equation}
    \Delta(t+1) = \left\{
\begin{array}{lll}
  1 & \text{if } u(t)=1 \text{ and }   l_1(t)=1,\\
  d & \text{if } l_2(t)=1,\\
  \Delta(t)+1 & \text{Otherwise.} 
\end{array}
\right. 
\end{equation}
\subsection{Problem Formulations}\label{systemod2}

    \begin{table} \caption{Value of State Transition Probability} 
\begin{center}  
\begin{tabular}{|l|l|}  
\hline  
$P_{\textbf{ss}'}(u)$ & Action and State Transition  \\ \hline
$p$ &  $u=1, \textbf{s} = (\delta,0,0), \textbf{s}' = (\delta+1,0,0)$\\
  & $ u=2, \textbf{s} = (\delta,0,0), \textbf{s}' = (\delta+1,0,d-1)$\\
    & $ u=none, \textbf{s} = (\delta,0,0), \textbf{s}' = (\delta+1,0,0)$\\
  & $ u=none, \textbf{s} = (\delta,0,l_2), \textbf{s}' = (\delta+1,0,l_2-1), l_2\geq 2  $\\
  & $ u=none, \textbf{s} = (\delta,0,1), \textbf{s}' = (d,0,0)$\\
 \hline
$1-p$ & $u=1, \textbf{s} = (\delta,0,0), \textbf{s}' = (1,1,0)$  \\
      & $ u=2, \textbf{s} = (\delta,0,0), \textbf{s}' = (\delta+1,1,d-1)  $\\
      & $ u=none, \textbf{s} = (\delta,0,0), \textbf{s}' = (\delta+1,1,0)  $\\
      & $ u=none, \textbf{s} = (\delta,0,l_2), \textbf{s}' = (\delta+1,1,l_2-1),l_2\geq 2  $\\
      & $ u=none, \textbf{s} = (\delta,0,1), \textbf{s}' = (d,1,0)$\\ \hline  
$q$ & $u=1, \textbf{s} = (\delta,1,0), \textbf{s}' = (1,1,0)$  \\
      & $ u=2, \textbf{s} = (\delta,1,0), \textbf{s}' = (\delta+1,1,d-1)  $\\
      & $ u=none, \textbf{s} = (\delta,1,0), \textbf{s}' = (\delta+1,1,0)  $\\
      & $ u=none, \textbf{s} = (\delta,1,l_2), \textbf{s}' = (\delta+1,1,l_2-1),l_2\geq 2  $ \\
      & $ u=none, \textbf{s} = (\delta,1,1), \textbf{s}' = (d,1,0)$\\ \hline  
$1-q$ &  $u=1, \textbf{s} = (\delta,1,0), \textbf{s}' = (\delta+1,0,0)$\\
      & $ u=2, \textbf{s} = (\delta,1,0), \textbf{s}' = (\delta+1,0,d-1)$\\
      & $ u=none, \textbf{s} = (\delta,1,0), \textbf{s}' = (\delta+1,0,0)  $\\
   & $ u=none, \textbf{s} = (\delta,1,l_2), \textbf{s}' = (\delta+1,0,l_2-1),l_2\geq 2  $ \\
   & $ u=none, \textbf{s} = (\delta,1,1), \textbf{s}' = (d,0,0)$\\
 \hline
$0$ & Otherwise\\ \hline
\end{tabular}  
\end{center} 
\label{xsit}
\end{table}


We use ${\pi}=\{u(0),u(1)...\}$ to denote a scheduling policy. A scheduling policy is said to be \emph{admissible} if (i) $u(t)=none$ whenever $l_2(t)\geq 1$ and (ii) $u(t)$ is determined by the current and history information that is available at the scheduler. Let $\Delta_{{\pi}}(t)$ denote the AoI induced by policy $\pi$. The expected time-average age of policy ${\pi}$ is 
\begin{equation*}
\limsup_{T\rightarrow \infty} \frac{1}{T} \sum_{t=1}^{T} \mathbb{E}[\Delta_\pi(t)]. 
\end{equation*}
Our objective in this paper is to solve the following optimal scheduling problem for minimizing the expected time-average age:
\begin{equation}
 \bar{\Delta}_{\text{opt}} = \inf_{\pi\in \Pi} \limsup_{T\rightarrow \infty} \frac{1}{T} \sum_{t=1}^{T} \mathbb{E}[\Delta_\pi(t)],
\label{avg}
\end{equation} 
where ${\Pi}$ is the set of all admissible policies.
Problem \eqref{avg} can be equivalently expressed as \textcolor{black}{an infinite time-horizon} average-cost MDP problem \cite{bertsekas1995dynamic,puterman1990markov}, which is illustrated below.

\begin{itemize}
    \renewcommand\labelenumi{\bfseries\theenumi}
    \item \textbf{Markov State:} The system state in time slot $t$ is defined as \begin{equation}
    \textbf{s}(t)=(\Delta(t),l_1(t-1),l_2(t)), 
    \end{equation} 
    where $\Delta(t)\in \{ 1,2,3,... \}$ is the AoI in time slot $t$, $l_1(t-1)\in \{ 0,1 \}$ is the $ON-OFF$ state of Channel $1$ in time slot $t-1$, and $l_2(t)\in \{ 0,1,...,d-1 \}$ is the remaining transmission time of Channel $2$ at the beginning of time slot $t$. Let \textbf{S} denote the state space which is countably infinite. The time-evolution of $\Delta(t)$ is determined by the state and action in time slot $t-1$.
    \item \textbf{Action:} As mentioned before, if Channel $2$ is busy (i.e., $l_2(t)>0$), the scheduler always chooses an idle action, i.e., $u(t)=none$. Otherwise, the action $u(t)\in \{ 1,2, {none}\}$. 
    \item \textbf{Cost function:} Suppose that a decision $u(t)$ is applied at a time slot $t$, we encounter a cost $C(\textbf{s}(t),u(t))=\Delta(t)$. 
    \item \textbf{Transition probability:} We use $P_{\textbf{ss}'}(u)$ to denote the transition probability from state $\textbf{s}$ to $\textbf{s}'$ for action $u$. The value of $P_{\textbf{ss}'}(u)$ is summarized in Table \ref{xsit}. 
    
We provide an explanation of the transition probabilities $P_{\textbf{ss}'}(u)$ in Table \ref{xsit}. Due to the Markovian state transition properties of Channel $1$, there are four possible values of state transition probabilities: $p,1-p,q$ and $1-q$. For example, $P_{\textbf{ss}'}(u)=p$ \textcolor{black}{if} both the current and previous states of Channel $1$ is $OFF$. 
\textcolor{black}{Thus, there are two possible age state evolutions: if the remaining time slot of Channel $2$ is $1$, the age $\delta$ decreases to $d$; otherwise, the age $\delta$ increases by one time slot.}        
The transition probabilities of other cases, i.e., $P_{\textbf{ss}'}(u)=1-p,q$ and $1-q$ in Table \ref{xsit} can be \textcolor{black}{explained} in the similar way.

\end{itemize}


\section{Main Results}\label{mainresults}

In this section, we show that there exists a threshold-type policy that solves Problem \eqref{avg}. We then provide a low-complexity algorithm to obtain the optimal policy and optimal average age.   


\subsection{Optimality of Threshold-type Policies}

As mentioned in Section \ref{systemod2}, the action space of the MDP allows $u(t)=none$ even if Channel $2$ is idle, i.e., $l_2(t)=0$. 
In the following lemma, we show that the action $u(t)=none$ can be abandoned when $l_2(t)=0$. 
Define
\begin{align}
\Pi'=\{\pi\in \Pi: u(t)\neq none, \text{ if } l_2(t)=0\}.
\end{align}

\begin{lemma}\label{zerowait}
For any $\pi \in \Pi$, there exists a policy $\pi'\in \Pi'$ that is no worse than $\pi$.  
\end{lemma}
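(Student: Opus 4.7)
The plan is to prove Lemma \ref{zerowait} by a sample-path coupling argument: for any admissible policy $\pi$ that sometimes chooses $u(t)=none$ while Channel 2 is idle, I construct a policy $\pi^*$ that agrees with $\pi$ everywhere except that it replaces each such ``illegitimate idle'' with $u(t)=1$ (attempt the mmWave channel). The intuition is that tentatively firing Channel 1 is a ``free option'': at worst (Channel 1 is OFF) the next-slot age evolves exactly as under idling, and at best (Channel 1 is ON) the age collapses to $1$, while the occupancy state of Channel 2 is unaffected and the Gilbert--Elliot dynamics of Channel 1 proceed exogenously.

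The key steps, in order, are the following. First, I would formally define $\pi^*$ by the rule $u_{\pi^*}(t)=1$ whenever $u_\pi(t)=none$ with $l_2(t)=0$, and $u_{\pi^*}(t)=u_\pi(t)$ otherwise. Second, I would couple the two systems on a common probability space that carries (i) a single Gilbert--Elliot trajectory $\{l_1(t)\}$ driving Channel 1 regardless of whether a transmission is attempted, and (ii) identical randomness for any other primitive source. Third, I would prove by induction on $t$ the joint invariant
\begin{equation*}
\Delta_{\pi^*}(t)\le \Delta_\pi(t),\qquad l_1^{\pi^*}(t-1)=l_1^{\pi}(t-1),\qquad l_2^{\pi^*}(t)=l_2^{\pi}(t).
\end{equation*}
The Channel 1 and Channel 2 components of the invariant follow immediately from the coupling and from the fact that substituting $u=1$ for $none$ does not alter $l_2$; the age inequality is verified by a short case analysis on the action taken at slot $t$ using the age-evolution equation stated just before Section \ref{systemod2}. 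Fourth, using the invariant, I would conclude
\begin{equation*}
\frac{1}{T}\sum_{t=1}^{T}\mathbb{E}[\Delta_{\pi^*}(t)]\le \frac{1}{T}\sum_{t=1}^{T}\mathbb{E}[\Delta_{\pi}(t)],
\end{equation*}
pass to the $\limsup$, and note that $\pi^*\in\Pi'$ by construction.

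The only step requiring care is the induction: when $\pi$ chooses $none$ at time $t$ with $l_2(t)=0$ and $\pi^*$ instead chooses $u=1$, I need to check that the next-slot age of $\pi^*$ does not exceed that of $\pi$ in \emph{both} the ON and OFF realizations of $l_1(t)$, and that no subsequent time slot can have its action set restricted for $\pi^*$ relative to $\pi$ (which is where the preservation of $l_2^{\pi^*}(t)=l_2^{\pi}(t)$ matters, since admissibility hinges on $l_2$). Because the substitution never schedules Channel 2, the busy-period structure of $\pi$ is preserved verbatim by $\pi^*$, so $\pi^*$ is admissible and can legitimately replicate every non-idle action of $\pi$ at later slots; this is the main point I would emphasize to close the coupling argument cleanly.
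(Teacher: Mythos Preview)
Your proposal is correct and follows essentially the same route as the paper's own proof: construct $\pi^*$ by replacing every ``idle while $l_2(t)=0$'' with $u(t)=1$, couple both systems on a common Gilbert--Elliot sample path, and prove $\Delta_{\pi^*}(t)\le\Delta_\pi(t)$ by induction on $t$. The paper's argument is slightly terser (it notes once that $l_2$ is unchanged and then inducts only on the age inequality), whereas you carry the full state invariant $(l_1,l_2)$ explicitly, but the content is the same.
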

\begin{remark}
In \cite{sun2017update,bedewy2019ages,sun2019sampling}, \textcolor{black}{it was shown that in certain systems}, the zero wait policy (transmitting immediately after the previous update has been received) might not be optimal. However, in our model, the zero wait policy is indeed optimal. The reason is that in our model, the minimum non-zero waiting time is one time slot which is the same as the delay of Channel $1$. If $l_2(t)=0$, it is better to choose Channel $1$ than keeping both channels idle, because, by choosing Channel $1$, fresh packets could be delivered over Channel $1$. \end{remark}
The proof of Lemma \ref{zerowait} is provided in 
\ifreport
Appendix \ref{zerowaitapp}.
\else
Appendix A of the supplementary material.
\fi
\textcolor{black}{By Lemma \ref{zerowait}, the scheduler only needs to choose from the actions $u(t)\in \{ 1,2\}$} when $l_2(t)=0$. This lemma simplifies the MDP problem.

\textcolor{black}{The parameters of the hybrid channels are $(p,q,d)$, where $p,q$ are the self transition probabilities of Channel $1$ and $d$ is the transmission delay of Channel $2$.
For the ease of presenting our main results,} we divide the possible values of channel parameters $(p,q,d)$ into four complementary regions $\textbf{B}_1,\ldots,\textbf{B}_4$. 

\begin{figure}[htbp]
\centerline{\includegraphics[width=0.45\textwidth]{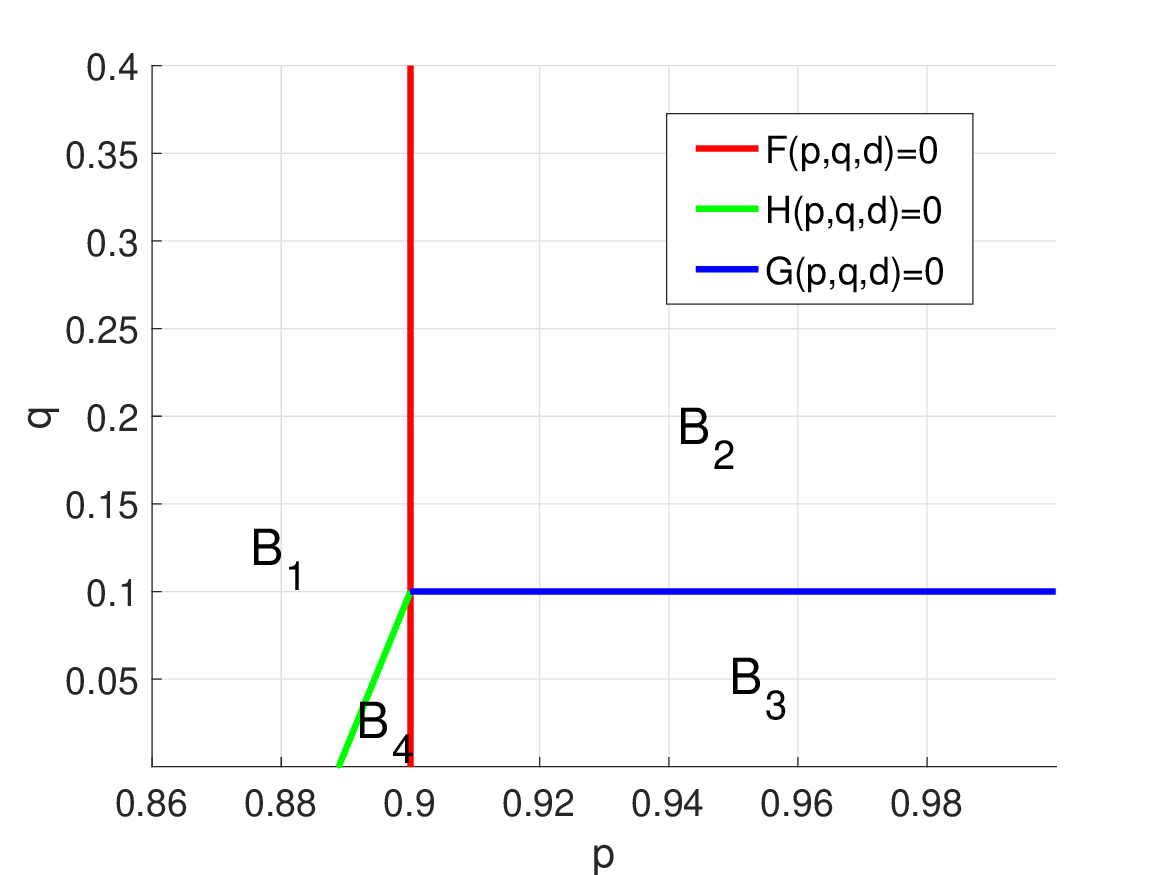}}
\caption{The Diagram of the regions $\textbf{B}_1,\ldots,\textbf{B}_4$ with an example of $d=10$. In the diagram, each function $F,G,H$ divides the whole plane $((p,q)\in (0,1)\times (0,1))$ into two half-planes respectively. Each region $\textbf{B}_1,\ldots,\textbf{B}_4$ is the intersection of some two half-plane areas. Since we emphasize the differences of the four regions, we provide the partial but enlarged diagram.}
\label{fig3}
\end{figure}

\begin{definition}
The regions $\textbf{B}_1,\ldots,\textbf{B}_4$ are defined as
\begin{equation}\label{b1-b4}
\begin{split}
    \textbf{B}_1 & =\{ (p,q,d): F(p,q,d)\le 0, H(p,q,d)\le 0 \}, \\     
    \textbf{B}_2 & =\{ (p,q,d): F(p,q,d)> 0, G(p,q,d)\le 0 \}, \\
    \textbf{B}_3 & =\{(p,q,d): F(p,q,d)> 0, G(p,q,d)> 0 \}, \\
    \textbf{B}_4 & =\{(p,q,d): F(p,q,d)\le 0, H(p,q,d)> 0 \}, 
    \end{split}\end{equation}
where 
\begin{equation}\begin{split}\label{def1}
   & F(p,q,d)=\frac{1}{1-p}-d, \\  & G(p,q,d)=1-dq, \\
   & H(p,q,d)=\frac{1-q}{1-p}+1-d. 
    \end{split}\end{equation}
\end{definition}

Note that the inequality ${1}/(1-p) > d$ also represents a comparison between the channel delay $d$ and the average \textcolor{black}{waiting time} for an $ON$ channel state given that the last channel state is $OFF$. 
Similarly, $1-dq>0$ represents a comparison between $d$ and the average \textcolor{black}{waiting time} for an $ON$ channel state given that the last channel state is $ON$. Finally, \textcolor{black}{$(1-q)/(1-p)+1>d$ represents a comparison between $d$ and the average \textcolor{black}{waiting time} of Channel $1$ under steady-state distribution of the Gilbert-Elliot model}. These comparisons interpret all the boundary functions $F,G,H$ of the regions $\textbf{B}_1 - \textbf{B}_4$. 
The four regions $\textbf{B}_1,\ldots,\textbf{B}_4$ are depicted in Fig. \ref{fig3}, \textcolor{black}{for the case that} $d=10$.

Consider a stationary policy $\mu(\delta,l_1,l_2)$. As mentioned in Lemma \ref{zerowait}, when $l_2 = 0$, the decision $\mu(\delta,l_1,0)$ can be $1$ (Channel $1$) or $2$ (Channel $2$). 
Given the value of $l_1$, $\mu(\delta,l_1,0)$ is said to be \emph{non-decreasing in the age $\delta$}, if
\begin{equation}\label{eq_increasing}
    \mu(\delta,l_1,0)=\left\{
\begin{array}{lll}
  1  & \text{if }~ \delta< \lambda;\\
  2  & \text{if }~ \delta\ge\lambda.
\end{array}
\right. 
\end{equation} 
Conversely, $\mu(\delta,l_1,0)$ is said to be \emph{non-increasing in the age $\delta$}, if 
\begin{equation}\label{eq_decreasing}
    \mu(\delta,l_1,0)=\left\{
\begin{array}{lll}
  2 & \text{if }~ \delta< \lambda;\\
  1  & \text{if }~ \delta\ge\lambda.
\end{array}
\right. 
\end{equation} %
One can observe that  scheduling policies in the form of \eqref{eq_increasing} and \eqref{eq_decreasing} are both with a \emph{threshold-type}, where $\lambda$ is the threshold on the age $\delta$ at which the value of $\mu(\delta,l_1,0)$ changes. 

One optimal solution to Problem \eqref{avg} is of a special threshold-type structure, as stated in the following theorem:

\begin{theorem}
There exists an optimal solution $\mu^*(\delta,l_1,0)$ to Problem \eqref{avg}, which satisfies the following properties: 
\begin{itemize}
\item[(a)] if $(p,q,d)\in \textbf{B}_1$, then $\mu^*(\delta, 0, 0)$ is non-increasing in the age $\delta$ and $\mu^*(\delta, 1, 0)$ is non-increasing in the age $\delta$; 

\item[(b)] if $(p,q,d)\in \textbf{B}_2$, then $\mu^*(\delta, 0, 0)$ is non-decreasing in the age $\delta$ and $\mu^*(\delta, 1, 0)$ is non-increasing in the age $\delta$; 

\item[(c)] if $(p,q,d)\in \textbf{B}_3$, then $\mu^*(\delta, 0, 0)$ is non-decreasing in the age $\delta$ and $\mu^*(\delta, 1, 0)$ is non-decreasing in the age $\delta$; 

\item[(d)] if $(p,q,d)\in \textbf{B}_4$, then $\mu^*(\delta, 0, 0)$ is non-increasing in the age $\delta$ and $\mu^*(\delta, 1, 0)$ is non-decreasing in the age $\delta$. 


\end{itemize}
\label{theorem2}
\end{theorem}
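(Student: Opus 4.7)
My plan is to prove Theorem~\ref{theorem2} via the standard vanishing-discount approach: for each $\alpha\in(0,1)$ introduce the discounted-cost value function $V_\alpha$ and action-value function $Q_\alpha$, establish a threshold structure for the optimal $\alpha$-discounted policy, and then let $\alpha\uparrow 1$. The transfer to the average-cost Problem~\eqref{avg} is standard because any policy that periodically selects Channel~2 makes the age return to $d$ in finite expected time, so the relative value function has uniformly bounded span, and standard conditions (e.g.\ Bertsekas Vol.~II, or Sennott) apply.

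The central object is the advantage function
\begin{equation*}
\Psi_\alpha(\delta, l_1) := Q_\alpha(\delta, l_1, 0, 2) - Q_\alpha(\delta, l_1, 0, 1),
\end{equation*}
because $\mu^*(\delta, l_1, 0)$ is threshold-type in $\delta$ exactly when $\Psi_\alpha(\cdot, l_1)$ has at most one sign change. Since choosing action~$2$ commits the system to $d$ consecutive slots with no further decisions, the Bellman equation unrolls to the closed forms
\begin{equation*}
Q_\alpha(\delta, l_1, 0, 2) = \sum_{k=0}^{d-1}\alpha^k(\delta + k) + \alpha^d\,\mathbb{E}\bigl[V_\alpha(d, L_1^{(d)}, 0)\bigm| l_1\bigr],
\end{equation*}
\begin{equation*}
Q_\alpha(\delta, l_1, 0, 1) = \delta + \alpha\bigl[r(l_1)V_\alpha(1,1,0) + (1 - r(l_1))V_\alpha(\delta+1, 0, 0)\bigr],
\end{equation*}
where $r(0) = 1-p$, $r(1) = q$, and $L_1^{(d)}$ is the Gilbert--Elliot state $d$ slots after $l_1$. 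Differencing in $\delta$ yields the compact identity
\begin{equation*}
\Psi_\alpha(\delta+1, l_1) - \Psi_\alpha(\delta, l_1) = \frac{1-\alpha^d}{1-\alpha} - 1 - \alpha\bigl(1 - r(l_1)\bigr) \Delta V_\alpha(\delta+1),
\end{equation*}
with $\Delta V_\alpha(\delta) := V_\alpha(\delta+1, 0, 0) - V_\alpha(\delta, 0, 0)$.

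The crux is to pin down $\Delta V_\alpha$. A value-iteration induction gives that $V_\alpha(\delta, 0, 0)$ is non-decreasing in $\delta$, and the Bellman equation at $(\delta,0,0)$ forces $\Delta V_\alpha(\delta)$ to equal one of two explicit expressions depending on the optimal action there: applying action~$1$ yields the recursion $\Delta V_\alpha(\delta) = 1 + \alpha p\,\Delta V_\alpha(\delta+1)$, whose fixed point $1/(1-\alpha p)$ tends to $1/(1-p)$ as $\alpha\uparrow 1$; applying action~$2$ yields $\Delta V_\alpha(\delta) = (1-\alpha^d)/(1-\alpha)$, which tends to $d$. Substituting these two limits into the finite difference above reproduces precisely the sign conditions defined by $F$, $G$, $H$ in Definition~\ref{def1}: $F\le 0$ governs the direction of monotonicity for $l_1=0$; given that the $l_1=0$ policy is action~$1$, $H\le 0$ governs $l_1=1$; given that it is action~$2$, $G\le 0$ governs $l_1=1$. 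This matches the case split across $\textbf{B}_1,\ldots,\textbf{B}_4$.

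The main obstacle, flagged by the authors, is that $\Delta V_\alpha$ is not globally equal to either of the two expressions above, since the optimal action at $(\delta, 0, 0)$ may itself vary with $\delta$; this is where the Topkis-supermodularity approach breaks down. I would handle it in two stages. On the portion of the state space where the $l_1=0$ optimal action is constant in $\delta$, Topkis' theorem applied to the supermodular $Q_\alpha(\delta, l_1, 0, u)$ directly yields the threshold structure for $\mu^*(\cdot, 1, 0)$. To bridge across the $l_1=0$ action switch I would use a direct interchange argument: given a stationary policy that violates the claimed monotonicity at some pair of ages, construct a modified policy by swapping the two actions; using the closed-form Q-function expressions, the resulting change in discounted cost can be written explicitly, and its sign in each region is determined by the same $F$, $G$, $H$ inequalities. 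Iterating the swap until convergence yields a threshold-type optimal policy for each $\alpha$, and passing $\alpha\uparrow 1$ (the thresholds are bounded uniformly in $\alpha$ by ratios of explicit constants) preserves the structure and establishes the claim for Problem~\eqref{avg}.
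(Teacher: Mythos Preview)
Your framework---vanishing discount, closed-form $Q$-functions, and the finite-difference identity $\Psi_\alpha(\delta+1,l_1)-\Psi_\alpha(\delta,l_1)=m-1-\alpha(1-r(l_1))\,\Delta V_\alpha(\delta+1)$ with $m=\sum_{i=0}^{d-1}\alpha^i$---is exactly the paper's, and your reading of how $F,G,H$ arise from the two candidate values $1/(1-\alpha p)$ and $m$ of $\Delta V_\alpha$ is correct. The gap is that you never actually establish the threshold structure for $l_1=0$. Your two-stage fix addresses only $\mu^*(\cdot,1,0)$ and \emph{presupposes} that $\mu^*(\cdot,0,0)$ has at most one action switch, which is the very thing to be proved; ``substituting these two limits'' is circular for the same reason. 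The paper closes this (its Lemma~\ref{l0m}) without circularity via the one-sided optimality bound $V_\alpha(\delta+1,0,0)-V_\alpha(\delta,0,0)\le Q_\alpha(\delta+1,0,0,a_\delta)-Q_\alpha(\delta,0,0,a_\delta)$, with $a_\delta$ the optimal action at $\delta$, and then iterates forward: either the optimal action at some future age $\delta+l$ is $2$, in which case the difference there equals $m$ and the telescoped bound reads $L(\delta,0,1)\le\sum_{i=0}^{l}(\alpha p)^i+(\alpha p)^{l+1}m\le m$ whenever $F(\alpha)\le0$; or the action is $1$ at every future age and the tail vanishes, giving $L(\delta,0,1)\le\sum_{i\ge0}(\alpha p)^i\le m$. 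Either way $L(\delta,0,1)\le m$ for \emph{all} $\delta$, so supermodularity for $l_1=0$ is global; the case $F>0$ is the mirror image with the reversed inequality. This is the missing ingredient in your plan.

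For $l_1=1$ the super/submodularity does fail on part of the age axis, but the paper's remedy is not an interchange argument. It proves a switch-type comparison between the two channel states (its Lemma~\ref{lemma-switch}): if $p+q\ge1$ then $\mu^*(\delta,0,0)=1\Rightarrow\mu^*(\delta,1,0)=1$, and dually $\mu^*(\delta,0,0)=2\Rightarrow\mu^*(\delta,1,0)=2$ when $p+q\le1$. Taking $s$ to be the now-established threshold of $\mu^*(\cdot,0,0)$, on one side of $s$ the $l_1=0$ action is constant, so $\Delta V_\alpha$ equals one of the two closed forms and the $(\delta,u)$-modularity for $l_1=1$ is verified from $G$ or $H$; on the other side of $s$ the switch-type criterion pins $\mu^*(\cdot,1,0)$ to a constant. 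Your swap-and-iterate proposal, by contrast, is not obviously well-posed: swapping the actions at two ages is not a local perturbation (the trajectory passes through intermediate $(\delta',0,0)$ states whose continuation values depend on the full policy), you give no mechanism to sign the resulting cost change in each region, and there is no argument that the iteration terminates.
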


\begin{proof}
\ifreport
See Section \ref{proof2} for the proof. 
\else
See Section 7.2 of our supplementary material.
\fi
\end{proof}

As is shown in Theorem \ref{theorem2}, for all possible parameters \textcolor{black}{$(p,q,d)$} of the two channels, the optimal action $\mu^*(\delta,l_1,0)$ of channel selection is a monotonic function of the age $\delta$. Whether $\mu^*(\delta,l_1,0)$ is non-decreasing or non-increasing in $\delta$ depends on \textcolor{black}{the region of} the channel parameters $(p,q,d)$ and the previous state  $l_1$ of Channel 1. 



The study in \cite{altman2019forever} assumed that the first channel is unreliable and consumes a lower cost, and the second channel the same delay as the first channel but a higher cost. They studied the scheduling policy for optimizing the trade-off between age and cost.
The optimal scheduling policy in Theorem \ref{theorem2} is quite different from that in \cite{altman2019forever}: The study in \cite{altman2019forever} assumes the first channel to be i.i.d., but our result allows a Markovian Channel $1$, which is a generalization of the i.i.d. case. \textcolor{black}{Observe that in \cite{altman2019forever}, the first channel is no better than the second channel with regard to delay and reliability. However, in our study, the two channels (i.e., Channel $1$ and $2$) have their own advantages in delay and reliability. Therefore, the optimal solution in our study is non-decreasing in age for some values of $(p,q,d)$ and non-increasing in age for the remaining values of $(p,q,d)$}. In conclusion, our study allows for general channel parameters $(p,q,d)$ and \textcolor{black}{our optimal decision $\mu^*(\delta,l_1,0)$ is non-increasing in age or non-decreasing in age depending on the choices of channel parameters}. 

\subsubsection{Insights Behind the Regions $\textbf{B}_1 - \textbf{B}_4$}  


The regions $\textbf{B}_1- \textbf{B}_4$ were introduced in Theorem \ref{theorem2} for proving that the action value function $Q(\textbf{s},u)$ is  \emph{super-modular} or \emph{sub-modular}, where $\textbf{s}=(\delta, l_1,0)$ denotes the state of the MDP and $u$ is the action. For example, in the case of $l_1 = 0$, if ${1}/(1-p) > d$ and ${1}/{q} \le d$ (i.e., $(p,q,d)\in \textbf{B}_2$), 
\ifreport
Lemma \ref{l0m} in Section \ref{proof2}
\else
Lemma 9 in Section 7.2 of our supplementary material
\fi
showed that $Q(\delta,0,0,u)$ is sub-modular in $(\delta,u)$ (in the discounted case).
As a result, the optimal action $\mu^*(\delta,0,0)$ is increasing in $\delta$.

However, in the case $l_1 = 1$ of Theorem \ref{theorem2}, there are additional technical challenges: For example, if $(p,q,d)\in \textbf{B}_2$, $Q(\delta,1,0,u)$ is \textcolor{black}{neither super-modular nor} sub-modular. A new method was developed in 
\ifreport
Lemma \ref{l1m} in Section \ref{proof2}
\else
Lemma 10 in Section 7.2 of our supplementary material
\fi
to conquer this challenge. Technically, super-/sub-modularity is a sufficient but not necessary condition for the monotonicity of $\mu^*(\delta,l_1,0)$. \textcolor{black}{When} neither super-modularity nor sub-modularity holds, we are able to show that the optimal decision $\mu^*(\delta,l_1,0)$ does not change with $\delta$.
By this, we proved the monotonicity of $\mu^*(\delta,l_1,0)$ \textcolor{black}{for all values of $\delta$ and $l_1$}, without requiring $Q(\textbf{s},u)$ to be super-modular or sub-modular over the entire state space $\textbf{s}\in  \textbf{S}$. 

The following is one of the key technical contributions of the paper: we proved that the optimal action $\mu^*(\delta,l_1,0)$ is monotonic in $\delta$ even if super-/sub-modularity does not hold. This is a key difference from prior studies, e.g., \cite{krishnamurthy2016partially,puterman1990markov,ngo2009optimality,altman2019forever,sun2017update,sun2019sampling}, where super-modularity (or sub-modularity) holds for the entire state space.

\subsection{Optimal Scheduling Policy}\label{C}


In Theorem \ref{theorem2}, we have characterized the threshold structure for an optimal policy in region $\textbf{B}_1,\ldots,\textbf{B}_4$. A threshold-type policy is fully identified by its thresholds $\lambda_0,\lambda_1$, where $\lambda_0$ is the threshold given that \textcolor{black}{the} previous state of Channel $1$ is $OFF$ (i.e., $l_1=0$) and $\lambda_1$ is the threshold given that the previous state of Channel $1$ is $ON$ (i.e., $l_1=1$). Thus, for a given region $\textbf{B}_i$ $(i=1,\ldots,4)$, the MDP problem \eqref{avg} reduces to 
\begin{equation}\label{help}
   \bar{\Delta}_{\text{opt}} = \min_{\lambda_0 \in \mathbb{N}^+,\lambda_1 \in \mathbb{N}^+} \bar{\Delta}_i (\lambda_0,\lambda_1),
\end{equation}    
where $\bar{\Delta}_i (\lambda_0,\lambda_1)$ is the long term average cost of the threshold-type policy such that: $(1)$ the threshold (monotone) structure is determined by Theorem \ref{theorem2} and $\textbf{B}_i$; $(2)$ the thresholds are $\lambda_0,\lambda_1$. Note that a threshold type policy is stationary and thus can be modeled as a \textcolor{black}{discrete-time Markov chain (DTMC). Then, \eqref{help} can be solved by deriving the steady-state} distribution of the DTMC.  

We use $\lambda_0^*$ and $\lambda_1^*$ to denote the thresholds of $\mu^*(\delta,0,0)$ and $\mu^*(\delta,1,0)$, respectively. In this section, we provide the optimal scheduling policy and the thresholds.

\subsubsection{Optimal Scheduling Policy for $(p,q,d)\in \textbf{B}_1$}\label{C1}
\begin{theorem}\label{theorem2a}
If $(p,q,d)\in \textbf{B}_1$, then an optimal scheduling policy is 
\begin{align}
&\mu^*(\delta,0,0)= 1, \delta \ge 1;\\
&\mu^*(\delta,1,0)=1, \delta \ge 1. 
\end{align}
In this case, the optimal objective value of \eqref{avg} is
\begin{equation}
   \bar{\Delta}_{\text{opt}} = \frac{(1-q)(2-p)+(1-p)^2}{(2-q-p)(1-p)}.
   \label{always1}
\end{equation}    

\end{theorem}
We provide an insight to Theorem \ref{theorem2a}:
\textcolor{black}{As will be shown by} 
\ifreport
Lemma \ref{mu00in1and4} and Lemma \ref{mu110for1} in Section \ref{proof3}, 
\else
Lemma 11 and Lemma 12 in Section 7.3 of our supplementary material,
\fi
if $(p,q,d)\in \textbf{B}_1$, then $\mu^*(1,0,0) =1 $ and $ \mu^*(1,1,0)=1$. According to Theorem \ref{theorem2} (a), if $(p,q,d)\in \textbf{B}_1$, $\mu^*(\delta,0,0)$ and $\mu^*(\delta,1,0)$ are \textcolor{black}{both non-increasing in $\delta$}. Thus, $\mu^*(\delta,0,0)=1$ and $\mu^*(\delta,1,0)=1$ for all $\delta\geq 1$. That is, the optimal \textcolor{black}{scheduler} always chooses Channel $1$. The DTMC for a policy always choosing Channel $1$ is easy to analyze. We omit the derivation steps and \textcolor{black}{provide}  
\begin{equation}
   \bar{\Delta}_{\text{opt}} = \bar{\Delta}_1 (1,1)= \frac{(1-q)(2-p)+(1-p)^2}{(2-q-p)(1-p)}.
\end{equation} 
This result directly implies Theorem \ref{theorem2a}.

\subsubsection{Optimal Scheduling Policy for $(p,q,d)\in \textbf{B}_2$}\label{C2}

While the result of case $(p,q,d)\in \textbf{B}_1$ is easy to describe, the result of case $(p,q,d)\in \textbf{B}_2$ is not. \textcolor{black}{As shown by Theorem \ref{theorem2b}, the optimal decision $\mu^*(\delta,l_1,0)$ is not constant in age $\delta$}.  

\begin{theorem}\label{theorem2b}

If $(p,q,d)\in \textbf{B}_2$, then an optimal scheduling policy is 
\begin{align}
    \mu^*(\delta,0,0)=\left\{
\begin{array}{lll}
  1 & \text{if }~ \delta< \lambda_0^*;\\
  2  & \text{if }~ \delta\ge\lambda_0^*, 
\end{array}
\right. \\
    \mu^*(\delta,1,0)=\left\{
\begin{array}{lll}
  2 & \text{if }~ \delta< \lambda_1^*;\\
  1  & \text{if }~ \delta\ge\lambda_1^*, 
\end{array}
\right. 
\end{align} 

\begin{algorithm}[!htbp]\label{alg1}
\caption{Bisection method for solving \eqref{beta}}
\textbf{Given} function $h_i$. $l=0$, $l'$ sufficiently large, tolerance $\epsilon$ small. The value
$i\in \{1,2,3,4\}$.

\textbf{repeat}

\qquad $\beta=\frac{1}{2}(l+l')$

\qquad \textbf{if} $h_i(\beta)<0$: $l'=\beta$. \textbf{else} $l=\beta$

\textbf{until} $l'-l<\epsilon$

\textbf{return} $\beta_{i}=\beta$
\end{algorithm}

\noindent where $\lambda_0^*$ is unique, but $\lambda_1^*$ may take multiple values, given by  
\begin{equation}\label{theorem2bb}
 \!\!\!\!\!   \left\{
\begin{array}{lll}
  \lambda_0^*=s_{1}(\beta_{1}), &\lambda_1^*=1  & \text{if }~ \bar{\Delta}_{\text{opt}} =  \beta_{1},\\
  \lambda_0^*=s_{2}(\beta_{2}), &\lambda_1^*=1  & \text{if }~ \bar{\Delta}_{\text{opt}} =  \beta_{2},\\
 \lambda_0^*=1, & \lambda_1^*\in \{2,3,\ldots,d 
\}  & \text{if }~ \bar{\Delta}_{\text{opt}} = f_0/g_0,\\
\lambda_0^*=1, & \lambda_1^*\in \{d+1,\ldots 
\} & \text{if }~\bar{\Delta}_{\text{opt}} =  (3/2)d-1/2,
\end{array}
\right. \!\!\!\!\!
\end{equation}
$\bar{\Delta}_{\text{opt}}$ is the optimal objective value of \eqref{avg}, determined by
\begin{equation}
   \bar{\Delta}_{\text{opt}} =  \min \Big{\{}   \beta_{1}, \beta_{2}, \frac{f_0}{g_0},\frac{3}{2}d-\frac{1}{2} \Big{\} },
\end{equation} 
$s_{1}(\cdot)$, $s_{2}(\cdot)$, $\beta_1$, and $\beta_2$ are given in Definition \ref{defthm} below, and
\begin{align}
&   f_0 = q\sum_{i=1}^{d}i + (1-q)\sum_{i=d+1}^{2d}i +\Big{(} \frac{b'_d q+b_d}{1-b_d}+1 \Big{)} \sum_{i=d+1}^{2d}i,\\
& g_0 = \frac{b'_d q+b_d}{1-b_d} d + d+1,\\
& \begin{bmatrix} b'_d \\ b_d \end{bmatrix} = \begin{bmatrix} q & 1-q \\ 1-p & p \end{bmatrix}^d \begin{bmatrix} 0 \\ 1 \end{bmatrix}.\label{eq_matrix}
\end{align}

\end{theorem}

\begin{proof}
\ifreport
See Section \ref{proof3}.
\else
See Section 7.3 of our supplementary material.
\fi
\end{proof}


\textcolor{black}{In order to prove Theorem \ref{theorem2b}, we have conducted steady-state analysis of four DTMCs, each of which corresponds to one case in \eqref{theorem2bb}. These four DTMCs have diverse state transmission matrices and have to be analyzed separately.}

For each case, the optimal thresholds $\lambda_0^*$ and $\lambda_1^*$ can be either expressed in closed-form, or computed by using a low-complexity bisection search method to compute the root of \eqref{beta} given in below.

\begin{definition}\label{defthm} The value of $\beta_i$ is the root of
\begin{equation}\begin{split}
f_{i}\big{(}s_i(\beta_i)\big{)}-\beta_i  g_{i}\big{(}s_i(\beta_i)\big{)}= 0,  ~i\in\{1,2,3,4\},
\label{beta}
\end{split}
\end{equation}
where 
\begin{eqnarray}
\nonumber  &&  s_{i}(\beta_i)= \max \left\{ \left\lceil \frac{-k_{i}(\beta_i)}{1-d(1-p)}\right\rceil , d \right\},~ i\in \{ 1,3,4\}, \\
   &&    s_{2}(\beta_2)=\max \left\{ \min \left\{ \left\lceil \frac{-k_{2}(\beta_2)}{1-d(1-p)}\right\rceil , d \right\}  , 2 \right\},
\label{threshold1}\\
  &&  k_{i}(\beta_i)=l'_{i}-\beta_i o_{i},\label{threshold2}
\end{eqnarray}  
and $\lceil x \rceil$ is the smallest integer that is greater or equal to $x$. 
For the ease of presentation, $16$ closed-form expressions of  $f_{i}(\cdot)$, $g_{i}(\cdot)$, $l'_{i}$, and $o_{i}$ for $i=1,\ldots, 4$
\ifreport 
are provided in Table \ref{fg}. 
\else
are provided in Table 2 of our supplementary material.
\fi
\end{definition}

Note that $\beta_3$ and $\beta_4$ in Definition \ref{defthm} will be used later \textcolor{black}{when} $(p,q,d)\in \textbf{B}_3$. 
For notational simplicity, we define 
\begin{eqnarray}
 h_i(\beta)= f_{i}(s_i(\beta))-\beta  g_{i}(s_i(\beta)),  ~i\in\{1,2,3,4\}.
\end{eqnarray}
The function $h_i(\beta)$
has the following nice property: 
\begin{lemma}
For all $i\in \{ 1,2,3,4 \}$, the function $h_{i}(\beta)$ satisfies the following properties: 

(1) $h_{i}(\beta)$ is continuous, concave, and strictly decreasing on $\beta$;

(2) $h_{i}(0)>0$ and $\lim_{\beta\rightarrow \infty}h_{i}(\beta)=-\infty$. \label{concave}
\end{lemma}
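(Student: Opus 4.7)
The plan is to represent $h_i(\beta)$ as the pointwise minimum of a family of affine functions of $\beta$, and then read off all four claimed properties from that representation. Concretely, I expect the formula for $s_i(\beta)$ in Definition \ref{defthm} to be precisely the (integer) minimizer of $s\mapsto f_i(s)-\beta g_i(s)$ over the admissible threshold range, clamped to the range $[1,d]$ or $[d,\infty)$ via the $\max\{\cdot\}$ and $\min\{\cdot\}$ brackets. If so, then by definition
\begin{equation*}
h_i(\beta) \;=\; \min_{s}\bigl[f_i(s)-\beta\, g_i(s)\bigr],
\end{equation*}
where $s$ ranges over the same admissible set. The first step is therefore to verify this min-representation: the ceiling in $s_i(\beta)=\lceil -k_i(\beta)/(1-d(1-p))\rceil$ arises from integer minimization of a piecewise expression whose derivative in $s$ is $1-d(1-p)$ away from the break at $s=d$, and $k_i(\beta)=l_i-\beta o_i$ collects the $\beta$-dependent linear coefficients in the first-order condition. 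This identification is the main structural step.

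Once $h_i(\beta)=\min_s[f_i(s)-\beta g_i(s)]$ is established, continuity and concavity are immediate: each map $\beta\mapsto f_i(s)-\beta g_i(s)$ is affine (hence continuous and concave), and a pointwise minimum of continuous concave functions is continuous and concave. For strict monotonicity I will use the fact that $g_i(s)$ represents a positive expected cumulative time (equivalently, the denominator of an average-cost ratio under a threshold policy), hence $g_i(s)>0$ uniformly bounded away from zero on the admissible range. Since every affine member has slope $-g_i(s)<0$, the pointwise minimum is strictly decreasing in $\beta$.

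For part (2), at $\beta=0$ we get $h_i(0)=\min_s f_i(s)$, and $f_i(s)$ is a sum of positive age contributions under a threshold policy, so $h_i(0)>0$. For the limit as $\beta\to\infty$, fix any feasible $s_0$; then $h_i(\beta)\le f_i(s_0)-\beta g_i(s_0)\to -\infty$, giving $\lim_{\beta\to\infty}h_i(\beta)=-\infty$.

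I expect the main obstacle to be the first step, namely a clean verification that the prescribed $s_i(\beta)$ really is the integer argmin of $f_i(s)-\beta g_i(s)$ on the correct range, including the special boundary behavior (the clamping to $d$ for $i\in\{1,3,4\}$ and the additional clamp to $\{1,\dots,d\}$ for $i=2$). Once this is done the remaining claims in Lemma \ref{concave} follow from standard envelope-type arguments for minima of affine families, together with the positivity of the $f_i$'s and $g_i$'s.
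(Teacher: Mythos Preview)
Your proposal is correct and follows essentially the same route as the paper. The paper likewise establishes the representation $h_i(\beta)=\min_{s\in N(i)}\bigl[f_i(s)-\beta g_i(s)\bigr]$ (this is exactly the content of Lemma~\ref{fracsolution}, proved in Appendix~\ref{app2} by showing that the first difference $r_i(\beta,s+1)-r_i(\beta,s)$ is, up to the positive factor $p^{s-1}$, linear in $s$ with positive slope $1-d(1-p)>0$ on $\textbf{B}_2\cup\textbf{B}_3$, which yields the ceiling formula and the clamping), and then deduces concavity, continuity, strict decrease, and the sign/limit claims from the positivity of $f_i$ and $g_i$ exactly as you outline. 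Your argument for $\lim_{\beta\to\infty}h_i(\beta)=-\infty$ via a fixed $s_0$ is in fact cleaner than the paper's one-line appeal to strict monotonicity.
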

\begin{proof} 
\ifreport
See Appendix \ref{concaveapp}.
\else
See Appendix B of our supplementary material.
\fi
\end{proof}

Lemma \ref{concave} implies that \eqref{beta} has a unique root on $[0,\infty)$. Therefore, we can use a low-complexity bisection method to compute 
$\beta_1,\ldots,\beta_4$ 
as illustrated in Algorithm \ref{alg1}.

Lemma \ref{concave} is motivated by Lemma $2$ in  \cite{sun2019samplings} and Lemma $2$ in \cite{ornee2019samplings}. In \cite{sun2019samplings} and \cite{ornee2019samplings}, since the channel is error free, the age state at the end of each transmission is independent with history information. Thus, Lemma $2$ in \cite{sun2019samplings} and Lemma $2$ in \cite{ornee2019samplings} are related with a per-sample (single transmission) control. However, our study does not have such a property and Lemma \ref{concave} arises from \textcolor{black}{solving \eqref{help} for optimizing the thresholds $\lambda^*_0$ and $\lambda^*_1$. }     

The advantage of Theorem \ref{theorem2a}, Theorem \ref{theorem2b} is that the solution is easy to implement. In Theorem \ref{theorem2a}, we showed that the optimal policy is a constant policy that always chooses Channel $1$. 
 In Theorem \ref{theorem2b}, $\bar{\Delta}_{\text{opt}}$ is expressed as the minimization of only a few precomputed values, \textcolor{black}{and the optimal threshold-type policy are then obtained based on the value of} $\bar{\Delta}_{\text{opt}}$.  

Since we can use a low complexity algorithm such as bisection method to obtain $\beta_1,\beta_2$ in Theorem \ref{theorem2b}, Theorem \ref{theorem2b} provides a solution that has much lower complexity than other solutions for MDPs such as \textcolor{black}{relative value iteration and policy iteration}.


 We now provide the sketch of the proof when $(p,q,d)\in \textbf{B}_2$: 

First, by computing the \textcolor{black}{steady-state distributions of some DTMCs with different thresholds, we have obtained the average age performance for four cases, given by} 
\begin{align}
& \bar{\Delta}_2 (\lambda_0,1) = \left\{
\begin{array}{lll}
   f_1(\lambda_0)/g_1(\lambda_0) & \lambda_0 \in \{ d+1, \ldots \}, \\
   f_2(\lambda_0)/g_2(\lambda_0) &\lambda_0 \in \{ 2, \ldots d \}, 
\end{array}
\right.  \label{discuss-24} \\
& \bar{\Delta}_2 (1,\lambda_1) = \left\{
\begin{array}{lll}
   (3/2)d - 1/2  & \lambda_1 \in \{ d+1, \ldots \}, \\
  f_0/g_0 &\lambda_1 \in \{ 1, \ldots d \}.
\end{array}
\right.  \label{discuss-25}
\end{align}
\textcolor{black}{Note that each one of the four expressions in \eqref{discuss-24} and \eqref{discuss-25} corresponds to each one of the four cases in \eqref{theorem2bb}, respectively.
One of our technical contributions is that only studying the steady-state analysis of the $4$ types of DTMCs in \eqref{discuss-24}, \eqref{discuss-25} is sufficient to solve \eqref{help}. The proof of this statement and the detailed expressions of the DTMC structure of the four cases in \eqref{discuss-24}, \eqref{discuss-25} are} relegated to 
\ifreport
Section \ref{proof3}
\else
Section 7.3 of our supplementary material\fi\footnote{Although \eqref{help} is a two-dimensional optimization problem in $(\lambda_0,\lambda_1)$, \eqref{help} has been simplified as \eqref{discuss-24} or \eqref{discuss-25}, which are one-dimensional optimization problem. For example, as is shown in \eqref{discuss-25}, the threshold-type policies with different $\lambda_1$ may have the same DTMC.}.
\textcolor{black}{Therefore, the optimal average age $\bar{\Delta}_{\text{opt}}$ chooses the smallest value of the four cases from \eqref{discuss-24}, \eqref{discuss-25},}
\begin{equation}\label{discuss-delta1}
   \bar{\Delta}_{\text{opt}} =  \min \Big{\{}   \beta'_{1}, \beta'_{2}, \frac{f_0}{g_0},\frac{3}{2}d-\frac{1}{2} \Big{\} },
\end{equation} 
 where $\beta'_1,\beta'_2$ are defined as follows: 
\begin{align}
\beta'_{i}& =
\min_{\lambda_0 \in \{ d+1,...\} }\frac{f_{i}(\lambda_0)}{g_{i}(\lambda_0)}, i\in \{1,3,4\} , \label{discuss-beta1}\\
\beta'_{2}&=
\min_{\lambda_0 \in \{2,...d\} }\frac{f_{2}(\lambda_0)}{g_{2}(\lambda_0)}.  \label{discuss-beta2}
\end{align}
\textcolor{black}{Note that $\beta_3'$ and $\beta_4'$ in \eqref{discuss-beta1} and \eqref{discuss-beta2} will be used later when $(p,q,d)\in \textbf{B}_3$}.
\ifreport
Finally, in Section \ref{proof3}, we show that
\else
Finally, in Section 7.3 of our supplementary material, we show that
\fi
\begin{equation}\label{discuss-beta1-beta}
\beta_i' = \beta_i, \  i\in \{1,2,3,4\}.
\end{equation}  

Thus, Theorem \ref{theorem2b} is solved by \eqref{discuss-24} $-$ \eqref{discuss-beta1-beta}.

\subsubsection{Optimal Scheduling Policy for $(p,q,d)\in \textbf{B}_3$}\label{C3}
\textcolor{black}{According to Theorem \ref{theorem2}, the optimal decision $ \mu^*(\delta,l_1,0)$ is non-decreasing in age $\delta$.
Similar to the case $(p,q,d)\in \textbf{B}_2$ in Theorem \ref{theorem2b}, the optimal solution $ \mu^*(\delta,l_1,0)$ is not constant. Therefore, we need to solve the optimal thresholds $\lambda_0^*$ and $\lambda_1^*$ by deriving the steady-state distribution of the DTMC. The final result is presented as follows: }
\begin{theorem}\label{theorem2c}
If $(p,q,d)\in \textbf{B}_3$, then an optimal scheduling policy is 
\begin{align}
    \mu^*(\delta,0,0)=\left\{
\begin{array}{lll}
  1 & \text{if }~ \delta< \lambda_0^*;\\
  2  & \text{if }~ \delta\ge\lambda_0^*, 
\end{array}
\right. \\
    \mu^*(\delta,1,0)=\left\{
\begin{array}{lll}
  1 & \text{if }~ \delta< \lambda_1^*;\\
  2  & \text{if }~ \delta\ge\lambda_1^*, 
\end{array}
\right. 
\end{align} 
where $\lambda_0^*$ is unique, but $\lambda_1^*$ may take multiple values, given by 
\begin{equation}\label{theorem2cc}
\!\!\!\!    \left\{
\begin{array}{lll}
  \lambda_0^*=s_{1}(\beta_{1}), &\lambda_1^*\in \{d+1,\ldots 
\}  & \text{if~} \bar{\Delta}_{\text{opt}} =  \beta_{1},\\
  \lambda_0^*=s_{2}(\beta_{2}), &\lambda_1^*\in \{d+1,\ldots 
\}  & \text{if~} \bar{\Delta}_{\text{opt}} =  \beta_{2},\\
 \lambda_0^*=s_{3}(\beta_{3}), & \lambda_1^*\in \{2,\ldots,d 
\} & \text{if~} \bar{\Delta}_{\text{opt}} =  \beta_{3},\\
\lambda_0^*=s_{4}(\beta_{4}), & \lambda_1^*\in \{2,\ldots,d 
\} & \text{if~}\bar{\Delta}_{\text{opt}} =  \beta_{4},\\ \lambda_0^*=1,& \lambda_1^*\in \{1,2,\ldots,d\},& \text{if~} \bar{\Delta}_{\text{opt}} =  (3/2)d-1/2, 
\end{array}
\right. \!\!\!\!
\end{equation}
 $\bar{\Delta}_{\text{opt}}$ is the optimal objective value of \eqref{avg}, determined by
\begin{equation}\label{2cmin}
   \bar{\Delta}_{\text{opt}} =  \min \Big{\{}   \beta_{1}, \beta_{2}, \beta_{3}, \beta_{4}, \frac{3}{2}d-\frac{1}{2} \Big{\} },
\end{equation}
$s_{1}(\cdot),\ldots,s_{4}(\cdot)$ and $\beta_1,\ldots,\beta_4$ are given in Definition \ref{defthm}.
\end{theorem}
\begin{proof}
\ifreport
See Section \ref{proof3}.
\else
See Section 7.3 of our supplementary material.
\fi
\end{proof}
\textcolor{black}{To show Theorem \ref{theorem2c}, we have analyzed five different DTMCs. Each of the DTMC corresponds to one case in \eqref{theorem2cc}. As is explained in 
\ifreport
Section \ref{proof3}, 
\else
Section 7.3 of our supplementary material,
\fi
the solution to each case in \eqref{theorem2cc} is closed-form or related with a one-dimensional optimization problem. Different from Theorem \ref{theorem2b} which needs to compute $\beta_1$ and $\beta_2$ in \eqref{theorem2bb}, Theorem \ref{theorem2c} needs to compute $\beta_1,\ldots,\beta_4$ in \eqref{theorem2cc}. 
By Definition \ref{defthm} and Lemma \ref{concave}, $\beta_1,\ldots,\beta_4$ can be solved by using low complexity bisection search algorithm (Algorithm~\ref{alg1}). Therefore, despite Theorem \ref{theorem2c} containing a number of cases, the optimal thresholds described in \eqref{theorem2cc} can be efficiently solved. }  

\subsubsection{Optimal Scheduling Policy for $(p,q,d)\in \textbf{B}_4$}\label{C4}
From Theorem \ref{theorem2}, $ \mu^*(\delta,0,0)$ is non-increasing in age $\delta$ and $\mu^*(\delta,1,0)$ is non-decreasing in $\delta$.
The result of $(p,q,d)\in \textbf{B}_4$ is similar to that of Theorem \ref{theorem2a}. 

\begin{theorem}\label{theorem2d}
If $(p,q,d)\in \textbf{B}_4$, then an optimal scheduling policy is
\begin{align}
&\mu^*(\delta,0,0)=1,\delta \ge 1, \\
&   \mu^*(\delta,1,0)=\left\{
\begin{array}{lll}
  1,  \delta\ge 1 & \text{if }~  \bar{\Delta}_{\text{opt}}=\bar{\Delta};\\
  2, \delta\ge 1 & \text{if }~ \bar{\Delta}_{\text{opt}}=f'_0/g'_0,
\end{array}
\right. 
\end{align} 
where $\bar{\Delta}_{\text{opt}}$ is the optimal objective value of \eqref{avg}, determined by
\begin{equation}\label{theorem2cdelta}
      \bar{\Delta}_{\text{opt}} =  \min \Big{\{ }   \bar{\Delta}, \frac{f'_0}{g'_0} \Big{ \} },
\end{equation}
the constants $\bar{\Delta},f'_0,g'_0$ are given by
\begin{align}
\bar{\Delta}&  = \frac{(1-q)(2-p)+(1-p)^2}{(2-q-p)(1-p)},\\
f'_0 & = \sum_{i=1}^{d} i+\frac{1-b'_d}{b'_d} \times \sum_{i=d}^{2d-1} i+\sum_{i=d}^{\infty}ip^{i-d}, \label{f'0} \\ 
g'_0 & = \frac{d}{b'_d}+1/(1-p) \label{g'0}.
\end{align}

\end{theorem}
\begin{proof}
\ifreport
See Section \ref{proof3}.
\else
See Section 7.3 of our supplementary material.
\fi
\end{proof}
\textcolor{black}{As is illustrated in Theorem \ref{theorem2d}, the proposed optimal decision $\mu^*(\delta,0,0)$ for $(p,q,d)\in \textbf{B}_4$ is constant in age $\delta$, depending on whether $ \bar{\Delta}_{\text{opt}}=\bar{\Delta}$ or $\bar{\Delta}_{\text{opt}}=f'_0/g'_0$ from \eqref{theorem2cdelta}. The value $\bar{\Delta}$ is the expected age of the steady-state DTMC that always chooses Channel $1$. The value $f'_0/g'_0$ is the expected age of the steady-state DTMC that chooses Channel $1$ if $l_1=0$ and chooses Channel $2$ if $l_1=1$}. If $ \bar{\Delta}_{\text{opt}}=\bar{\Delta}$, then it is optimal to always choose Channel $1$; if $\bar{\Delta}_{\text{opt}}=f'_0/g'_0$, then we will select Channel $1$ when $l_1=0$ and Channel $2$ when $l_1=1$.  

We briefly summarize the results for Theorem \ref{theorem2a}---\ref{theorem2d}: An optimal solution to \eqref{avg} is presented for the 4 complementary regions $\textbf{B}_1,\ldots,\textbf{B}_4$ of the channel parameters $(p,q,d)$. If $(p,q,d)\in \textbf{B}_1 \cup \textbf{B}_4$, the solution is constant in age (Theorem~\ref{theorem2a} and Theorem~\ref{theorem2d}). \textcolor{black}{Otherwise, for $(p,q,d)\in \textbf{B}_2 \cup \textbf{B}_3$, there exists an optimal scheduling policy that has a threshold structure depending on the current age value and the previous state of Channel $1$ (Theorem \ref{theorem2b} and Theorem \ref{theorem2c}). Further, the optimal thresholds can be computed efficiently. }    

\subsection{Optimal Scheduling policy for i.i.d. Channel}

We finally consider a special case in which Channel $1$ is i.i.d., i.e., $p+q=1$.
\textcolor{black}{First, according to the following lemma, if $p+q=1$, the $4$ regions $ \textbf{B}_1,\ldots, \textbf{B}_4$ will reduce to 2 regions $ \textbf{B}_1, \textbf{B}_3$.} 
\begin{lemma}
 If $p+q=1$, then $(p,q,d)\in \textbf{B}_1$ or $(p,q,d)\in \textbf{B}_3$. Moreover, 
 \begin{align}
   \textbf{B}_1 & = \left\{ (p,q,d): \frac{1}{1-p}\le d \right\}, \label{textb1iid} \\ 
   \textbf{B}_3 & =\left\{ (p,q,d): \frac{1}{1-p}> d \right\}. \label{textb3iid}
 \end{align}
\end{lemma}
\begin{figure}[htbp]
\centerline{\includegraphics[width=0.4\textwidth]{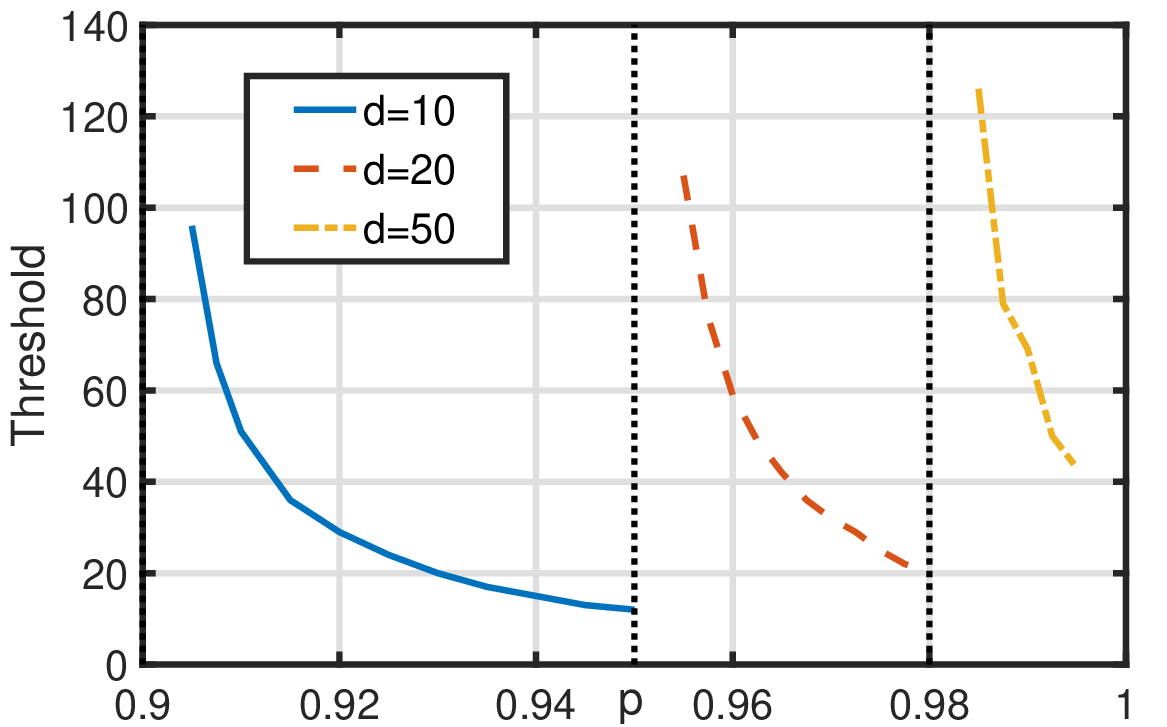}}
\caption{Thresholds of the optimal scheduling policy for i.i.d. mmWave channel state, where the packet transmission time of the sub-6GHz channel is $d= 10,20,50$.}
\label{fig8}
\end{figure}
\begin{figure}[htbp]
\centerline{\includegraphics[width=0.4\textwidth]{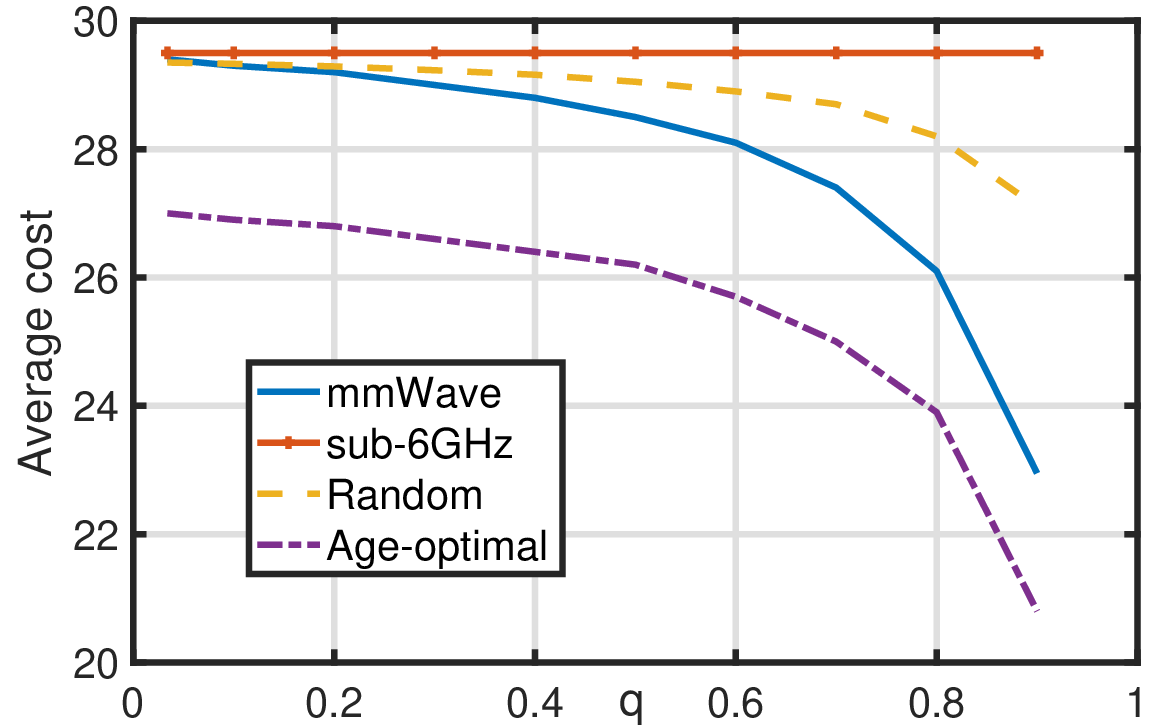}}
\caption{Time-average expected age vs. the parameter $q$ of the mmWave channel, where $d=20$ and $p=0.966$.}
\label{fig4}
\end{figure}
\begin{proof}
By \eqref{def1} and $1-q = p$, we have: (i) $F(p,q,d) = H(p,q,d)$, and (ii) $F(p,q,d)> 0$ is equivalent to $G(p,q,d)> 0$. From the two above results and the definition of  $ \textbf{B}_1,\ldots, \textbf{B}_4$ in \eqref{b1-b4}, we directly get \eqref{textb1iid} and \eqref{textb3iid}. Moreover, the definitions of \eqref{textb1iid} and \eqref{textb3iid} imply that $(p,q,d)\in \textbf{B}_1$ or $(p,q,d)\in \textbf{B}_3$. 
\end{proof}
From Theorem \ref{theorem2a}, if $(p,q,d)\in \textbf{B}_1$, then the optimal policy is always choosing Channel $1$. From Theorem \ref{theorem2c}, if $(p,q,d)\in \textbf{B}_3$, then the optimal policy chooses one of the five cases that are depicted in \eqref{theorem2cc}. However, we can reduce the five cases to two cases:  
If Channel $1$ is i.i.d., then the state information of Channel $1$ is not useful. Thus, $\lambda_0^* = \lambda_1^*$. Note that from Definition \ref{defthm}, we have $s_2(\beta)\le d$ and $s_i(\beta)\ge d+1$ for $i\in \{1,3,4\}$. Thus, only the first case and the last case in \eqref{theorem2cc} \textcolor{black}{can possibly appear} for i.i.d. channel. 

So in i.i.d. case, Theorem \ref{theorem2a} and Theorem \ref{theorem2c} reduce to the following:


\begin{corollary}
Suppose that $p+q=1$, i.e., Channel $1$ is i.i.d., then

(a) If $1-p\ge 1/d$, then the optimal policy is always choosing Channel $1$.
In this case, the optimal objective value of \eqref{avg} is $\bar{\Delta}_{\text{opt}} = 1/(1-p)$. 

(b) If $1-p<1/d$, then the optimal policy is non-decreasing in age and the optimal thresholds $\lambda^*_0=\lambda^*_1$. The threshold $\lambda^*_0$ may take multiple values, given by
\begin{equation}
   \left\{
\begin{array}{lll}
 \lambda^*_0 = s_{1}(\beta_{1})  & \text{if }~ \bar{\Delta}_{\text{opt}} = \beta_{1}, \\
  \lambda^*_0\in \{ 1,2,\ldots,d \}  & \text{if }~ \bar{\Delta}_{\text{opt}} = (3/2)d-1/2,

\end{array}
\right. 
\end{equation}
 \begin{figure}[htbp]
\centerline{\includegraphics[width=0.4\textwidth]{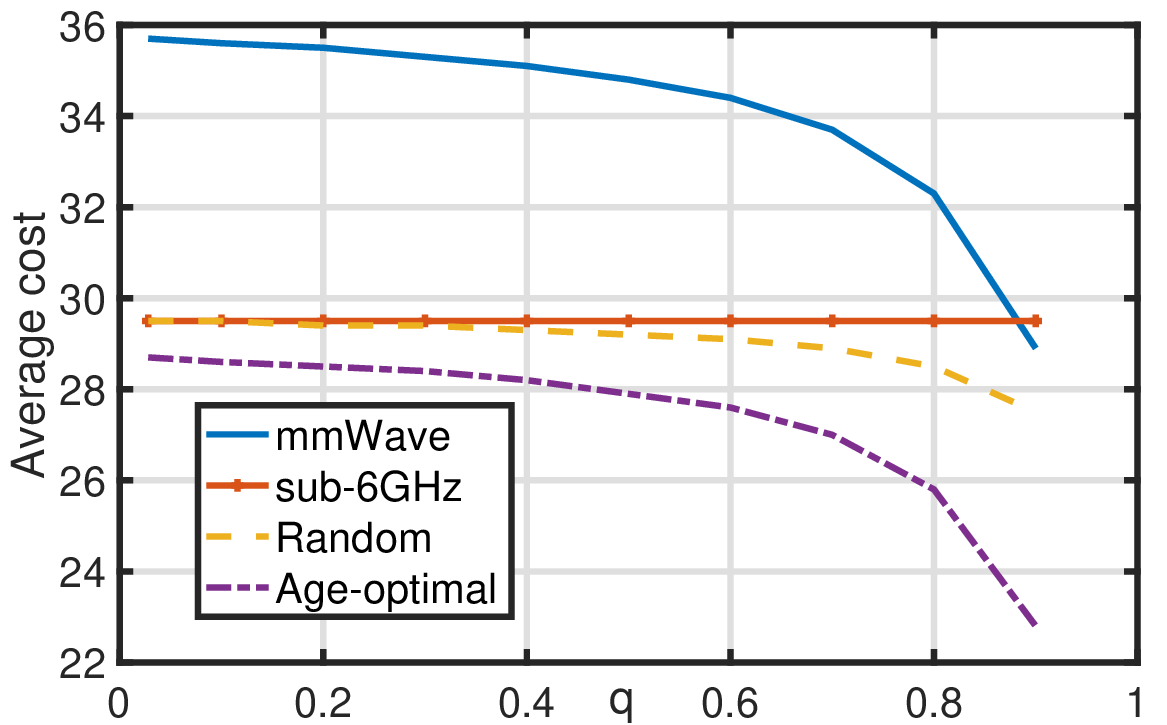}}
\caption{Time-average expected age vs. the parameter $q$ of the mmWave channel, where $d=20$ and $p=0.972$.}
\label{fig4mod}
\end{figure} 
\begin{figure}[htbp]
\centerline{\includegraphics[width=0.4\textwidth]{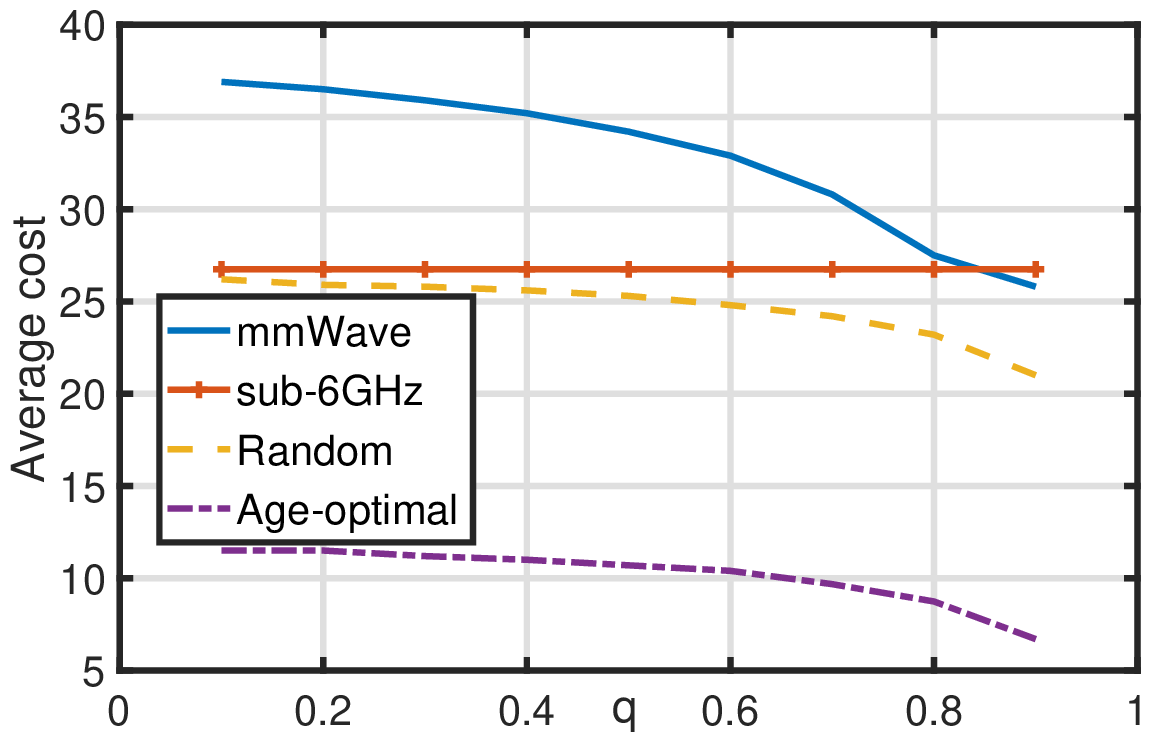}}
\caption{Time-average expected age penalty vs. the parameter $q$ of the mmWave channel, where $p = 0.9$, $d=20$, and the age penalty function is $f(\Delta)=(\frac{1}{p-0.003})^\Delta$.}
\label{fig6}
\vspace{-3pt}
\end{figure}
  $\bar{\Delta}_{\text{opt}}$ is the optimal objective value of \eqref{avg}, determined by
\begin{equation}
    \bar{\Delta}_{\text{opt}}= \min \Big{\{ }  \beta_{1},\frac{3}{2}d-\frac{1}{2} \Big{\} }.
\end{equation}
\label{cor-thm3(2)}
\end{corollary}




Corollary \ref{cor-thm3(2)}(a) suggests that if the transmission rate of Channel $1$ is larger than the rate of Channel $2$ (which is $1/d$), then the age-optimal policy always chooses Channel $1$. 
Corollary \ref{cor-thm3(2)}(b) implies that if the transmission rate of Channel $1$ is smaller than the rate of Channel $2$, then the age-optimal policy is non-decreasing threshold-type on age. 

\section{Numerical Results}


\textcolor{black}{We first provide the optimal threshold $\lambda^*_0$ with the change of $p$ for $d=10,20,50$, respectively, where $\lambda^*_0$ is the optimal threshold in i.i.d. channel described in Corollary \ref{cor-thm3(2)}}. From Fig. \ref{fig8}, the optimal threshold diverges to boundary $p^*=0.9,0.95,0.98$ respectively. As $p$ enlarges, the mmWave channel has worse connectivity, thus the thresholds goes down and converges to always choosing the sub-6GHz channel. 

Then we compare our optimal scheduling policy (called \emph{Age-optimal}) with three other policies, including (i) always choosing the mmWave channel (called \emph{mmWave}), (ii) always choosing the sub-6GHz channel (called \emph{sub-6GHz}), and (iii) randomly choosing the mmWave and sub-6GHz channels with equal probability (called \emph{Random}). We provide the performance of these policies for different $q$ in Fig. \ref{fig4} and Fig.~\ref{fig4mod}. 
Our optimal policy outperforms other policies. If the two channels have a similar age performance, the benefit of the optimal policy enlarges as the mmWave channel becomes positively correlated ($q$ is larger). If the two channels have a large age performance disparity, the optimal policy is close to always choosing a single channel, and thus the benefit is obviously low.
Although our theoretical results consider linear age, we also provide numerical results when the cost function is nonlinear on age by using value iteration \cite{puterman1990markov}. 
For exponential age in Fig. \ref{fig6}, the gain is significantly large for all $q$: other policies have more than $2$ times of average cost than the optimal policy. The numerical simulation indicates the importance of exploring optimal policy for nonlinear age cost function, which is our future research direction.



\section{conclusion}


In this paper, we have studied age-optimal transmission scheduling for hybrid mmWave/sub-6GHz channels. For all possibly values of the channel parameters and the ON-OFF state of the mmWave channel, the optimal scheduling policy have been proven to be of a threshold-type on the age. Low complexity algorithms have been developed for finding the optimal scheduling policy. 
Finally, our numerical results show that the optimal policy can reduce age compared with other policies. 

\bibliographystyle{IEEEtran}
\bibliography{tech-report}

\begin{thebibliography}{10}
\providecommand{\url}[1]{#1}
\csname url@samestyle\endcsname
\providecommand{\newblock}{\relax}
\providecommand{\bibinfo}[2]{#2}
\providecommand{\BIBentrySTDinterwordspacing}{\spaceskip=0pt\relax}
\providecommand{\BIBentryALTinterwordstretchfactor}{4}
\providecommand{\BIBentryALTinterwordspacing}{\spaceskip=\fontdimen2\font plus
\BIBentryALTinterwordstretchfactor\fontdimen3\font minus
  \fontdimen4\font\relax}
\providecommand{\BIBforeignlanguage}[2]{{%
\expandafter\ifx\csname l@#1\endcsname\relax
\typeout{** WARNING: IEEEtran.bst: No hyphenation pattern has been}%
\typeout{** loaded for the language `#1'. Using the pattern for}%
\typeout{** the default language instead.}%
\else
\language=\csname l@#1\endcsname
\fi
#2}}
\providecommand{\BIBdecl}{\relax}
\BIBdecl

\bibitem{pan2020age}
J.~Pan, A.~M. Bedewy, Y.~Sun, and N.~B. Shroff, ``Minimizing age of information
  via scheduling over heterogeneous channels,'' in \emph{Proc. ACM MobiHoc},
  2021, pp. 111--120.

\bibitem{yates2020age}
R.~D. Yates, Y.~Sun, D.~R. Brown, S.~K. Kaul, E.~Modiano, and S.~Ulukus, ``Age
  of information: An introduction and survey,'' \emph{IEEE Journal on Selected
  Areas in Communications}, vol.~39, no.~5, pp. 1183--1210, 2021.

\bibitem{kaul2012real}
S.~Kaul, R.~Yates, and M.~Gruteser, ``Real-time status: How often should one
  update?'' in \emph{2012 Proceedings IEEE INFOCOM}, 2012, pp. 2731--2735.

\bibitem{costa2016age}
M.~Costa, M.~Codreanu, and A.~Ephremides, ``On the age of information in status
  update systems with packet management,'' \emph{IEEE Transactions on
  Information Theory}, vol.~62, no.~4, pp. 1897--1910, 2016.

\bibitem{inoue2019general}
Y.~Inoue, H.~Masuyama, T.~Takine, and T.~Tanaka, ``A general formula for the
  stationary distribution of the age of information and its application to
  single-server queues,'' \emph{IEEE Transactions on Information Theory},
  vol.~65, no.~12, pp. 8305--8324, 2019.

\bibitem{buyukates2020age}
B.~Buyukates and S.~Ulukus, ``Age of information with {Gilbert-Elliot} servers
  and samplers,'' in \emph{2020 54th Annual Conference on Information Sciences
  and Systems (CISS)}, 2020, pp. 1--6.

\bibitem{bedewy2019minimizing}
A.~M. Bedewy, Y.~Sun, and N.~B. Shroff, ``Minimizing the age of information
  through queues,'' \emph{IEEE Transactions on Information Theory}, vol.~65,
  no.~8, pp. 5215--5232, 2019.

\bibitem{bedewy2019age}
------, ``The age of information in multihop networks,'' \emph{IEEE/ACM
  Transactions on Networking}, vol.~27, no.~3, pp. 1248--1257, 2019.

\bibitem{sun2018age}
Y.~Sun, E.~Uysal-Biyikoglu, and S.~Kompella, ``Age-optimal updates of multiple
  information flows,'' in \emph{IEEE INFOCOM WKSHPS}, 2018, pp. 136--141.

\bibitem{sun2019sampling}
Y.~Sun and B.~Cyr, ``Sampling for data freshness optimization: Non-linear age
  functions,'' \emph{Journal of Communications and Networks}, vol.~21, no.~3,
  pp. 204--219, 2019.

\bibitem{sun2019samplings}
Y.~Sun, Y.~Polyanskiy, and E.~Uysal, ``Sampling of the {Wiener} process for
  remote estimation over a channel with random delay,'' \emph{IEEE Transactions
  on Information Theory}, vol.~66, no.~2, pp. 1118--1135, 2019.

\bibitem{bedewy2019ages}
A.~M. Bedewy, Y.~Sun, S.~Kompella, and N.~B. Shroff, ``Optimal sampling and
  scheduling for timely status updates in multi-source networks,'' \emph{IEEE
  Transactions on Information Theory}, vol.~67, no.~6, pp. 4019--4034, 2021.

\bibitem{hsu2019scheduling}
Y.~P. Hsu, E.~Modiano, and L.~Duan, ``Scheduling algorithms for minimizing age
  of information in wireless broadcast networks with random arrivals,''
  \emph{IEEE Transactions on Mobile Computing}, vol.~19, no.~12, pp.
  2903--2915, 2019.

\bibitem{talak2019optimizing}
R.~Talak, S.~Karaman, and E.~Modiano, ``Optimizing information freshness in
  wireless networks under general interference constraints,'' \emph{IEEE/ACM
  Transactions on Networking}, vol.~28, no.~1, pp. 15--28, 2019.

\bibitem{bedewy2021low}
A.~M. Bedewy, Y.~Sun, R.~Singh, and N.~B. Shroff, ``Low-power status updates
  via sleep-wake scheduling,'' \emph{IEEE/ACM Transactions on Networking},
  2021, in press.

\bibitem{joo2018wireless}
C.~Joo and A.~Eryilmaz, ``Wireless scheduling for information freshness and
  synchrony: Drift-based design and heavy-traffic analysis,'' \emph{IEEE/ACM
  transactions on networking}, vol.~26, no.~6, pp. 2556--2568, 2018.

\bibitem{lu2018age}
N.~Lu, B.~Ji, and B.~Li, ``Age-based scheduling: Improving data freshness for
  wireless real-time traffic,'' in \emph{Proc. ACM MobiHoc}, 2018, pp.
  191--200.

\bibitem{kadota2018optimizing}
I.~Kadota, A.~Sinha, and E.~Modiano, ``Optimizing age of information in
  wireless networks with throughput constraints,'' in \emph{Proc. IEEE
  INFOCOM}, 2018, pp. 1844--1852.

\bibitem{qian2020minimizing}
Z.~Qian, F.~Wu, J.~Pan, K.~Srinivasan, and N.~B. Shroff, ``Minimizing age of
  information in multi-channel time-sensitive information update systems,'' in
  \emph{IEEE INFOCOM 2020-IEEE Conference on Computer Communications}, 2020,
  pp. 446--455.

\bibitem{liu2019minimizing}
Q.~Liu, H.~Zeng, and M.~Chen, ``Minimizing age-of-information with throughput
  requirements in multi-path network communication,'' in \emph{Proc. ACM
  MobiHoc}, 2019, pp. 41--50.

\bibitem{ornee2019samplings}
T.~Z. Ornee and Y.~Sun, ``Sampling and remote estimation for the
  {Ornstein-Uhlenbeck} process through queues: Age of information and beyond,''
  \emph{IEEE/ACM Transactions on Networking}, 2021, in press.

\bibitem{sun2017update}
Y.~Sun, E.~Uysal-Biyikoglu, R.~D. Yates, C.~E. Koksal, and N.~B. Shroff,
  ``Update or wait: How to keep your data fresh,'' \emph{IEEE Transactions on
  Information Theory}, vol.~63, no.~11, pp. 7492--7508, 2017.

\bibitem{altman2019forever}
E.~Altman, R.~El-Azouzi, D.~Menasche, and Y.~Xu, ``Forever young: Aging control
  for hybrid networks,'' in \emph{Proc. ACM MobiHoc}, 2019, pp. 91--100.

\bibitem{el2012optimal}
R.~El-Azouzi, D.~S. Menasche, Y.~Xu \emph{et~al.}, ``Optimal sensing policies
  for smartphones in hybrid networks: A {POMDP} approach,'' in \emph{6th IEEE
  International ICST Conference on Performance Evaluation Methodologies and
  Tools}, 2012, pp. 89--98.

\bibitem{talak2018optimizing}
R.~Talak, S.~Karaman, and E.~Modiano, ``Optimizing age of information in
  wireless networks with perfect channel state information,'' in \emph{Proc.
  IEEE WiOpt}, 2018, pp. 1--8.

\bibitem{qualcomm2018}
``https://www.qualcomm.com/media/documents/files/vr-and-ar-pushing-connectivity-limits.pdf,''
  2018.

\bibitem{rappaport2013millimeter}
T.~S. Rappaport, S.~Sun, R.~Mayzus, H.~Zhao, Y.~Azar, K.~Wang, G.~N. Wong,
  J.~K. Schulz, M.~Samimi, and F.~Gutierrez, ``Millimeter wave mobile
  communications for {5G} cellular: It will work!'' \emph{IEEE access}, vol.~1,
  pp. 335--349, 2013.

\bibitem{verizon2018}
Samsung, 2018,
  https://news.samsung.com/us/verizon-samsung-reach-multi-gigabit-throughput-5g-nr-mmwave-spectrum/.

\bibitem{narayanan2020first}
A.~Narayanan, E.~Ramadan, J.~Carpenter, Q.~Liu, Y.~Liu, F.~Qian, and Z.~L.
  Zhang, ``A first look at commercial {5G} performance on smartphones,'' in
  \emph{Proceedings of The Web Conference 2020}, 2020, pp. 894--905.

\bibitem{pi2011introduction}
Z.~Pi and F.~Khan, ``An introduction to millimeter-wave mobile broadband
  systems,'' \emph{IEEE communications magazine}, vol.~49, no.~6, pp. 101--107,
  2011.

\bibitem{pi2011system}
------, ``System design and network architecture for a millimeter-wave mobile
  broadband ({MMB}) system,'' in \emph{34th IEEE Sarnoff Symposium}, 2011, pp.
  1--6.

\bibitem{semiari2019integrated}
O.~Semiari, W.~Saad, M.~Bennis, and M.~Debbah, ``Integrated millimeter wave and
  {sub-6 GHz} wireless networks: A roadmap for joint mobile broadband and
  ultra-reliable low-latency communications,'' \emph{IEEE Wireless
  Communications}, vol.~26, no.~2, pp. 109--115, 2019.

\bibitem{aziz2016architecture}
D.~Aziz, J.~Gebert, A.~Ambrosy, H.~Bakker, and H.~Halbauer, ``Architecture
  approaches for {5G} millimetre wave access assisted by {5G} low-band using
  multi-connectivity,'' in \emph{2016 IEEE Globecom Workshops (GC Wkshps)},
  2016, pp. 1--6.

\bibitem{deng2017resource}
J.~Deng, O.~Tirkkonen, R.~Freij-Hollanti, T.~Chen, and N.~Nikaein, ``Resource
  allocation and interference management for opportunistic relaying in
  integrated {mmWave/sub-6 GHz 5G} networks,'' \emph{IEEE Communications
  Magazine}, vol.~55, no.~6, pp. 94--101, 2017.

\bibitem{elshaer2016downlink}
H.~Elshaer, M.~N. Kulkarni, F.~Boccardi, J.~G. Andrews, and M.~Dohler,
  ``Downlink and uplink cell association with traditional macrocells and
  millimeter wave small cells,'' \emph{IEEE Transactions on Wireless
  Communications}, vol.~15, no.~9, pp. 6244--6258, 2016.

\bibitem{topkis1998supermodularity}
D.~M. Topkis, \emph{Supermodularity and complementarity}.\hskip 1em plus 0.5em
  minus 0.4em\relax Princeton university press, 1998.

\bibitem{krishnamurthy2016partially}
V.~Krishnamurthy, \emph{Partially observed {Markov} decision processes}.\hskip
  1em plus 0.5em minus 0.4em\relax Cambridge University Press, 2016.

\bibitem{puterman1990markov}
M.~L. Puterman, ``Markov decision processes,'' \emph{Handbooks in operations
  research and management science}, vol.~2, pp. 331--434, 1990.

\bibitem{ngo2009optimality}
M.~H. Ngo and V.~Krishnamurthy, ``Optimality of threshold policies for
  transmission scheduling in correlated fading channels,'' \emph{IEEE
  Transactions on Communications}, vol.~57, no.~8, pp. 2474--2483, 2009.

\bibitem{yao2019integrating}
G.~Yao, M.~Hashemi, and N.~B. Shroff, ``Integrating sub-6 ghz and millimeter
  wave to combat blockage: delay-optimal scheduling,'' in \emph{Proc. IEEE
  WiOpt}, 2019, pp. 1--8.

\bibitem{bertsekas1995dynamic}
D.~P. Bertsekas, \emph{Dynamic programming and optimal control}.\hskip 1em plus
  0.5em minus 0.4em\relax Athena scientific Belmont, MA, 1995, vol.~1, no.~2.

\bibitem{sennott1989average}
L.~I. Sennott, ``Average cost optimal stationary policies in infinite state
  {Markov} decision processes with unbounded costs,'' \emph{Operations
  Research}, vol.~37, no.~4, pp. 626--633, 1989.

\bibitem{dinkelbach1967nonlinear}
W.~Dinkelbach, ``On nonlinear fractional programming,'' \emph{Management
  science}, vol.~13, no.~7, pp. 492--498, 1967.

\bibitem{sennott1986new}
L.~I. Sennott, ``A new condition for the existence of optimal stationary
  policies in average cost {Markov} decision processes,'' \emph{Operations
  research letters}, vol.~5, no.~1, pp. 17--23, 1986.

\end{thebibliography}

\section{Appendices: Proofs of Main Results}
\label{mainproof}

In this section, we prove our main results: Theorem \ref{theorem2} (Section~\ref{proof2}) and Theorem \ref{theorem2a}---\ref{theorem2d} (Section~\ref{proof3}). In Section~\ref{discountedA}, we describe a discounted problem that helps to solve average problem \eqref{avg}. In Section~\ref{proof2}, we introduce Proposition~\ref{lemma2} which plays an important role in proving Theorem \ref{theorem2}. Section~\ref{proof3} provides the proofs of Theorem \ref{theorem2a}---\ref{theorem2d}. 

\subsection{Preliminaries}\label{discountedA}

To solve Problem \eqref{avg}, we introduce a discounted problem below. The objective is to solve the discounted sum of expected cost given an initial state \textbf{s}:
\begin{equation}
J^\alpha (\textbf{s}) = \inf_{\pi\in\Pi'} \lim_{T\rightarrow \infty} \sum_{t=0}^{T} \mathbb{E}[\alpha^{t} \Delta^\pi(t)|\textbf{s}(0)=\textbf{s}], 
\label{disc}
\end{equation} where $\alpha\in (0,1)$ is the discount factor. We call $J^\alpha (\textbf{s})$ the \emph{value function} given the initial state \textbf{s}. \textcolor{black}{Recall that we use \textbf{s}$=(\delta,l_1,l_2)$ to denote the system state, where $\delta$ is the age value and $l_1,l_2$ are the state of Channel $1$ and Channel $2$. From Lemma \ref{zerowait}, we only need to consider $\pi\in \Pi'$ instead of $\pi\in \Pi$. }

\textcolor{black}{The value function $J^\alpha (\textbf{s})$ satisfies a following property:}
\begin{lemma}
For any given $\alpha$ and \textbf{s}, $J^\alpha(\textbf{s})<\infty$. 
\label{finite}
\end{lemma}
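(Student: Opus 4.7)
The plan is to exhibit a crude upper bound that already makes the discounted sum finite, and since $J^\alpha(\mathbf{s})$ is defined as an infimum, any admissible policy with finite discounted cost suffices. The most direct route is to bound the age pathwise under an arbitrary policy and discount does the rest.

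First I would observe, directly from the age-evolution rule (either $\Delta(t+1)=1$, $\Delta(t+1)=d$, or $\Delta(t+1)=\Delta(t)+1$), that along any sample path and under any admissible policy $\pi$,
\begin{equation*}
\Delta^\pi(t+1) \le \Delta^\pi(t)+1,
\end{equation*}
so starting from $\mathbf{s}(0)=(\delta,l_1,l_2)$ we have the deterministic pathwise bound $\Delta^\pi(t)\le \delta + t$ for all $t\ge 0$. This bound is policy-independent, which is exactly what we need to dominate the infimum.

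Next I would plug this bound into the definition of $J^\alpha(\mathbf{s})$. For any admissible $\pi$,
\begin{equation*}
\sum_{t=0}^{T} \mathbb{E}\bigl[\alpha^{t}\Delta^\pi(t)\,\big|\,\mathbf{s}(0)=\mathbf{s}\bigr] \;\le\; \sum_{t=0}^{T} \alpha^{t}(\delta+t) \;\le\; \frac{\delta}{1-\alpha}+\frac{\alpha}{(1-\alpha)^{2}},
\end{equation*}
where the last inequality uses $\sum_{t=0}^\infty \alpha^t = 1/(1-\alpha)$ and $\sum_{t=0}^\infty t\alpha^t = \alpha/(1-\alpha)^2$, both finite because $\alpha\in(0,1)$. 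Passing to the limit $T\to\infty$ (monotone convergence, since $\Delta^\pi(t)\ge 0$) and then taking the infimum over $\pi\in\Pi$ gives
\begin{equation*}
J^\alpha(\mathbf{s}) \;\le\; \frac{\delta}{1-\alpha}+\frac{\alpha}{(1-\alpha)^{2}} \;<\; \infty,
\end{equation*}
which is the desired conclusion.

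There is no real obstacle here: the only thing to be careful about is that the bound $\Delta^\pi(t)\le \delta+t$ be justified uniformly in the policy and in the realization of the Channel~1 Markov chain, but this follows immediately from inspecting the three cases in the age recursion. A cleaner (but unnecessary) alternative would be to evaluate the discounted cost under the ``always Channel 2'' policy, which yields the tighter uniform bound $\Delta(t)\le \delta+d-1$ for $t\ge 1$ and still gives finiteness; I would only fall back to this if a sharper constant were later needed elsewhere in the paper.
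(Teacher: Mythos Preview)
Your proof is correct and essentially the same as the paper's. The paper fixes the always-idle policy ($u(t)=\emph{none}$ for all $t$), under which $\Delta(t)=\delta+t$ exactly, and obtains the identical bound $J^\alpha(\mathbf{s})\le (\delta+\alpha/(1-\alpha))/(1-\alpha)$; you instead use the pathwise inequality $\Delta^\pi(t)\le\delta+t$ valid for every policy, arriving at the same expression.
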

\begin{proof}
\ifreport
See Appendix \ref{app1}.
\else
See our technical report \cite{pan2020age}.
\fi
\end{proof}

A policy $\pi$ is deterministic stationary if \textcolor{black}{$\pi(t)=Z(\textbf{s}(t))$ at any time $t$, where $Z: \textbf{S}\rightarrow \Pi'$ is a deterministic function.} According to \cite{sennott1989average}, and Lemma \ref{finite}, there is a direct result for Problem \eqref{disc}:

\begin{lemma}
(a) The value function $J^\alpha(\textbf{s})$ satisfies the Bellman equation
\begin{equation}
\begin{split}
   Q^\alpha(\textbf{s},u) & \triangleq \delta + \alpha \sum_{\textbf{s}'\in \textbf{S}}P_{\textbf{s}\textbf{s}'}(u)J^\alpha(\textbf{s}'),\\
   J^\alpha(\textbf{s}) & = \min_{u\in \Pi'} Q^\alpha(\textbf{s},u). \end{split}
\label{discounted-lemma1}
\end{equation}

(b) There exists a deterministic stationary policy ${\mu}^{\alpha,*}$ that satisfies Bellman equation \eqref{discounted-lemma1}. The policy ${\mu}^{\alpha,*}$ solves Problem \eqref{disc} for all initial state \textbf{s}. 

(c) Assume that $J^\alpha_0(\textbf{s})=0$ for all \textbf{s}. For $n\ge 1$, $J^\alpha_n$ 
is defined as 
\begin{equation} \begin{split} 
   Q^\alpha_{n}(\textbf{s},u) & \triangleq  \delta +\alpha\sum_{\textbf{s}'\in\textbf{S}} P_{\textbf{s}\textbf{s}'}(u) J^\alpha_{n-1}(\textbf{s}'),\\
   J^\alpha_{n}(\textbf{s}) & = \min_{u\in \Pi'} Q^\alpha_{n}(\textbf{s},u),  \label{value-iterate} \end{split}
\end{equation}
then $\lim_{n\rightarrow \infty} J^\alpha_n(\textbf{s})=J^\alpha(\textbf{s})$ for every \textbf{s}.
\label{lemma1}
\end{lemma}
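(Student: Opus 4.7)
The plan is to recognize Lemma \ref{lemma1} as a direct consequence of the standard theory of discounted MDPs on a countable state space \cite{sennott1989average}. The key is to verify three regularity conditions: (i) the per-stage cost $C(\textbf{s},u) = \delta$ is nonnegative (trivially true since $\Delta(t)\geq 1$), (ii) the admissible action set at every state is finite (at each state the allowable actions form a subset of $\{1,2,\mathrm{none}\}$ as restricted by $\Pi'$), and (iii) the value function $J^\alpha(\textbf{s})$ is finite everywhere. Condition (iii) is exactly Lemma \ref{finite}. Once these three are in place, Sennott's results yield (a), (b), (c) in sequence.

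For part (a), I would cite the Bellman optimality principle for discounted nonnegative-cost MDPs: because the per-stage cost is nonnegative and $J^\alpha$ is everywhere finite, the dynamic programming operator defined by the right-hand side of \eqref{discounted-lemma1} has $J^\alpha$ as a fixed point, which is exactly the stated Bellman equation. For part (b), since the minimization on the right-hand side is taken over a finite set at each state, a minimizer is always attained; selecting one at each $\textbf{s}$ gives a deterministic stationary policy $\mu^\alpha$. A standard verification argument (unrolling the Bellman equation under $\mu^\alpha$ and using that $\alpha^n J^\alpha(\textbf{s}(n))$ has vanishing expectation as $n\to\infty$, which follows from finiteness of $J^\alpha$ together with $\alpha<1$) then shows $\mu^\alpha$ attains the value $J^\alpha(\textbf{s})$ from every initial state.

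For part (c), the plan is a monotone convergence argument. Starting from $J^\alpha_0 \equiv 0$ and iterating \eqref{value-iterate}, nonnegativity of the cost makes $\{J^\alpha_n(\textbf{s})\}$ monotonically nondecreasing in $n$ for each fixed $\textbf{s}$, while a routine induction against the optimal policy shows $J^\alpha_n \leq J^\alpha$. The pointwise limit $J^\infty$ therefore exists and is at most $J^\alpha$; passing the limit through the finite minimization and through the infinite sum in \eqref{value-iterate} (justified by monotone convergence, leveraging nonnegativity) shows that $J^\infty$ also satisfies the Bellman equation \eqref{discounted-lemma1}. Uniqueness of the nonnegative Bellman fixed point then forces $J^\infty = J^\alpha$.

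The only real technical point I expect to have to be careful about is the limit-sum exchange in the value iteration step, since the state space $\textbf{S}$ is countably infinite and the cost $\delta$ is unbounded; however, the combination of nonnegativity and monotonicity lets monotone convergence do the work without any additional integrability hypotheses. Beyond that, the argument is essentially bookkeeping inside Sennott's framework, with Lemma \ref{finite} supplying the crucial finiteness input that adapts the general theory to our unbounded age cost.
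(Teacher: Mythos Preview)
Your proposal is correct and takes essentially the same approach as the paper: the paper does not give a separate proof of this lemma at all, but simply remarks that it is ``a direct result'' following from \cite{sennott1989average} together with Lemma~\ref{finite}. Your write-up supplies more detail than the paper does (spelling out the three regularity conditions and sketching the monotone-convergence argument for value iteration), but the underlying route---invoke Sennott's framework with the finiteness of $J^\alpha$ as the only nontrivial input---is identical.
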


Also, since the cost function is linearly increasing in age, utilizing Lemma \ref{lemma1}(c), we also have  \begin{lemma}\label{monotonicity}
For all given $l_1$ and $l_2$, $J^\alpha (\delta,l_1,l_2)$ is increasing in $\delta$. 
\end{lemma}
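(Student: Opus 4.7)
The plan is to prove this monotonicity by induction on the value iteration sequence $\{J^\alpha_n\}$ provided by Lemma \ref{lemma1}(c), and then pass to the limit. Since $J^\alpha_n \to J^\alpha$ pointwise, any property preserved under pointwise limits (and weak monotonicity in a discrete argument is such a property) will carry over. So I would first establish the inductive claim: for every $n\geq 0$ and every fixed $(l_1,l_2)$, the map $\delta \mapsto J^\alpha_n(\delta,l_1,l_2)$ is non-decreasing.

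The base case $n=0$ is immediate since $J^\alpha_0\equiv 0$. For the inductive step, I would fix $(l_1,l_2)$, pick $\delta\geq 1$, and show for each admissible action $u$ that
\begin{equation*}
Q^\alpha_{n+1}(\delta+1,l_1,l_2,u) \;\geq\; Q^\alpha_{n+1}(\delta,l_1,l_2,u).
\end{equation*}
Since $J^\alpha_{n+1} = \min_u Q^\alpha_{n+1}(\cdot,u)$, the pointwise inequality over $u$ transfers to the minimum, giving $J^\alpha_{n+1}(\delta+1,l_1,l_2)\geq J^\alpha_{n+1}(\delta,l_1,l_2)$.

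The heart of the argument is the action-by-action check using Table \ref{xsit}. The immediate cost contributes $(\delta+1)-\delta = 1\geq 0$ to the difference, so it suffices to verify that the expected continuation cost $\sum_{\mathbf{s}'} P_{\mathbf{s}\mathbf{s}'}(u)J^\alpha_n(\mathbf{s}')$ is non-decreasing in $\delta$. Inspecting the table, every next state $\mathbf{s}'$ reachable from $(\delta,l_1,l_2)$ under a given action $u$ either has age coordinate equal to $\delta+1$ (the "no reset" transitions) or has age coordinate equal to a constant independent of $\delta$ (specifically $1$, coming from a successful Channel~$1$ transmission, or $d$, coming from the final slot of a Channel~$2$ transmission). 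Moreover, the non-age coordinates $(l_1',l_2')$ of each next state depend only on $(l_1,l_2,u)$ and the random channel outcome, not on $\delta$. Hence increasing $\delta$ by one shifts the "no reset" next states from $(\delta+1,l_1',l_2')$ to $(\delta+2,l_1',l_2')$ and leaves the "reset" next states unchanged, while the transition probabilities themselves are unchanged. By the induction hypothesis applied coordinate-wise to $J^\alpha_n$, the continuation cost is weakly larger, which closes the induction.

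Finally, monotonicity passes to the limit: for each $\delta$, $l_1$, $l_2$, taking $n\to\infty$ in $J^\alpha_n(\delta+1,l_1,l_2)\geq J^\alpha_n(\delta,l_1,l_2)$ yields $J^\alpha(\delta+1,l_1,l_2)\geq J^\alpha(\delta,l_1,l_2)$ via Lemma \ref{lemma1}(c). The main obstacle — really the only bookkeeping burden — is ensuring the case analysis over the entries of Table \ref{xsit} is exhaustive, in particular that no reachable next state has an age coordinate that depends on $\delta$ in a non-monotone way; the table confirms that every entry is either a pure $\delta+1$ shift or a reset to $1$ or $d$, so no pathological case arises.
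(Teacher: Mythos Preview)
Your proposal is correct and follows essentially the same approach as the paper: induction on the value-iteration sequence $J^\alpha_n$ from Lemma~\ref{lemma1}(c), using the fact that the immediate cost $\delta$ is increasing and that every transition either shifts the age by one or resets it to a constant, then passing to the limit. Your write-up is in fact more explicit than the paper's, which simply appeals to ``the evolution of age in Section~\ref{systemod}'' for the continuation-cost monotonicity without spelling out the Table~\ref{xsit} case analysis.
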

\begin{proof}
\ifreport
See Appendix \ref{monotonicityapp}.
\else
See our technical report \cite{pan2020age}.
\fi
 \end{proof}

Since Problem \eqref{disc} satisfies the properties in Lemma \ref{lemma1}, utilizing Lemma \ref{lemma1} and Lemma \ref{monotonicity}, the following Lemma gives the connection between Problem \eqref{avg} and Problem \eqref{disc}. 

\begin{lemma} 
(a) There exists a stationary deterministic policy that is optimal for Problem \eqref{avg}. 

(b) There exists a value $J^*$ for all initial state $\textbf{s}$ such that    
\begin{equation*}
    \lim_{\alpha\rightarrow 1^-}(1-\alpha)J^\alpha (\textbf{s}) = J^*.
\end{equation*} 
Moreover, $J^*$ is the optimal average cost for Problem \eqref{avg}.

(c) For any sequence $(\alpha_n)_n$ of discount factors that converges to $1$, there exists a subsequence $(\beta_n)_n$ such that $\lim_{n\rightarrow \infty}{\mu}^{\beta_n,*} = {\mu^*}$. Also, ${\mu^*}$ is the optimal policy for Problem \ref{avg}. 
\label{theorem1}
\end{lemma}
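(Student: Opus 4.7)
\textbf{Proof Proposal for Lemma \ref{theorem1}.} The plan is to apply the vanishing discount method of Sennott \cite{sennott1989average}, which converts the discounted-cost results of Lemma \ref{lemma1} into average-cost results. Sennott's framework requires three standard conditions (SEN1--SEN3) on the family of value functions $\{J^\alpha\}_{\alpha \in (0,1)}$: (SEN1) $J^\alpha(\textbf{s}) < \infty$ for every $\alpha$ and every $\textbf{s}$; (SEN2) the relative value function $h^\alpha(\textbf{s}) \triangleq J^\alpha(\textbf{s}) - J^\alpha(\textbf{s}_0)$ is bounded above by a (possibly state-dependent) function $M(\textbf{s})$ that is independent of $\alpha$, where $\textbf{s}_0$ is some fixed reference state; and (SEN3) there is a uniform lower bound $-N \le h^\alpha(\textbf{s})$. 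Once these are verified, Sennott's theorem directly yields (a), (b), and (c).

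First I will fix the reference state $\textbf{s}_0 = (1,1,0)$, corresponding to just having received a fresh update on a currently--ON mmWave channel. SEN1 is immediate from Lemma \ref{finite}. For SEN3, since the per-stage cost $C(\textbf{s},u) = \delta \ge 1$ is nonnegative and $\textbf{s}_0$ has the smallest possible age, a simple coupling argument shows $J^\alpha(\textbf{s}_0) \le J^\alpha(\textbf{s})$ for every $\textbf{s}$, so $h^\alpha(\textbf{s}) \ge 0 \ge -N$. For SEN2, I will construct an explicit (suboptimal) stationary policy $\bar\pi$ that drives the system from an arbitrary state $\textbf{s} = (\delta, l_1, l_2)$ to $\textbf{s}_0$ within a random but light-tailed time: first run out any busy time of Channel~2 (at most $d-1$ slots), then keep transmitting on Channel~1 until an ON slot delivers a fresh packet. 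Since the Gilbert--Elliot chain has a strictly positive probability of reaching ON in each step (geometric tail governed by $p$ and $q$), the expected hitting time to $\textbf{s}_0$ from $\textbf{s}$ is bounded by a constant plus a term linear in $d$, and the expected cumulative age accrued along the way is bounded by $K(\delta + d)$ for some absolute constant $K$ depending only on $p,q,d$. Thus
\begin{equation*}
h^\alpha(\textbf{s}) \;\le\; \mathbb{E}^{\bar\pi}\!\left[\sum_{t=0}^{\tau-1} \alpha^t \Delta(t)\,\Big|\, \textbf{s}(0)=\textbf{s}\right] \;\le\; K(\delta + d) \;=:\; M(\textbf{s}),
\end{equation*}
which is independent of $\alpha$ and gives SEN2.

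With SEN1--SEN3 in hand, the conclusions follow from the standard vanishing discount machinery. Part (a) is obtained by applying Lemma \ref{lemma1}(b) to extract a stationary deterministic $\alpha$-optimal policy $\mu^\alpha$; since the action space at each state is finite ($\{1,2\}$ or $\{\text{none}\}$), Tychonoff compactness gives a subsequence $\alpha_n \to 1^-$ along which $\mu^{\alpha_n}(\textbf{s}) \to \mu^*(\textbf{s})$ pointwise for every $\textbf{s}$, and $\mu^*$ is stationary deterministic. For part (b), using the uniform bounds on $h^\alpha$, one can pass to a further subsequence so that $(1-\alpha_n)J^{\alpha_n}(\textbf{s}_0) \to J^*$ and $h^{\alpha_n}(\textbf{s}) \to h(\textbf{s})$ pointwise, and then taking the limit in the discounted Bellman equation \eqref{discounted-lemma1} yields the average-cost optimality equation $J^* + h(\textbf{s}) = \min_{u}[\delta + \sum_{\textbf{s}'} P_{\textbf{s}\textbf{s}'}(u) h(\textbf{s}')]$. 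A standard verification argument (using the nonnegativity of $h$ and $M$-boundedness to control the residual term at the horizon) shows that $J^*$ is the optimal average cost for every initial state, and this uniqueness of $J^*$ upgrades the subsequential limit to a full limit as $\alpha \to 1^-$. Part (c) then follows because $\mu^*$ realized as the pointwise limit attains the min in the averaged Bellman equation and is therefore optimal for Problem~\eqref{avg}.

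The main obstacle is the verification of SEN2, since the state space is countably infinite and the cost $\delta$ is unbounded, so one cannot appeal to standard finite-state results. The crux is exhibiting the hitting policy $\bar\pi$ above and showing that its expected cumulative discounted cost to reach $\textbf{s}_0$ is uniformly bounded in $\alpha$; the geometric return time of the Gilbert--Elliot chain to its ON state, combined with the deterministic $d$-slot ceiling on Channel~2 occupancy, is exactly what makes this bound hold.
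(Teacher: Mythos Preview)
Your overall framework matches the paper's: both proofs invoke Sennott's vanishing-discount method \cite{sennott1989average} with reference state $\textbf{s}_0=(1,1,0)$, and your verification of SEN1 (via Lemma~\ref{finite}) and SEN2 (via a hitting policy that idles out Channel~2 and then repeatedly transmits on Channel~1) is essentially the paper's argument for condition~(a) in Appendix~\ref{proof1}.

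Your SEN3 step, however, contains a genuine error. You assert that ``a simple coupling argument shows $J^\alpha(\textbf{s}_0)\le J^\alpha(\textbf{s})$ for every $\textbf{s}$,'' on the grounds that $\textbf{s}_0$ has the smallest possible age. This is false when Channel~1 is negatively correlated ($p+q<1$): in that regime $1-p>q$, so $l_1=0$ makes the next Channel~1 attempt \emph{more} likely to succeed than $l_1=1$, and one can check (e.g.\ in region~$\textbf{B}_1$, where always choosing Channel~1 is optimal and the Bellman equations give $J^\alpha(1,0,0)-J^\alpha(1,1,0)=\alpha(1-p-q)\bigl(J^\alpha(1,1,0)-J^\alpha(2,0,0)\bigr)$) that $J^\alpha(1,0,0)<J^\alpha(1,1,0)$. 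The age alone does not order the value function; the channel-state component $l_1$ can tip the comparison either way, so no coupling based only on age can yield $h^\alpha\ge 0$.

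The paper does not try to prove $h^\alpha\ge 0$. Instead it uses monotonicity in age (Lemma~\ref{monotonicity}) to reduce the lower bound to the finitely many states $(1,l_1,l_2)$ with $l_1\in\{0,1\}$, $l_2\in\{0,\ldots,d-1\}$, and for each such state exhibits an $\alpha$-independent constant $E'(1,l_1,l_2)$ with $h^\alpha(1,l_1,l_2)\ge -E'(1,l_1,l_2)$, via a first-passage-cost bound back to $\textbf{s}_0$ combined with the uniform bound on $(1-\alpha)J^\alpha(\textbf{s}_0)$. Taking $N=\max_{l_1,l_2}E'(1,l_1,l_2)$ then gives SEN3. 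You should replace your coupling claim with this argument.
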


\begin{proof}
\ifreport
See Appendix \ref{proof1}.
\else
See our technical report \cite{pan2020age}.
\fi
\end{proof}

Lemma \ref{theorem1} provides the fact that: We can solve Problem \eqref{disc} to achieve Problem \eqref{avg}. The reason is that the optimal policy of Problem \eqref{disc} converges to the optimal policy of Problem \eqref{avg} in a limiting scenario (as $\alpha\rightarrow 1$).   



\subsection{Proof of Theorem \ref{theorem2} }\label{proof2}

We begin with providing an optimal structural result of discounted policy $\mu^{\alpha,*}$. Then, we achieve the average optimal policy ${\mu^*}$ by letting $\alpha\rightarrow 1$.

\begin{definition}
For any discount factor $\alpha \in (0,1)$, the channel parameters $p,q\in (0,1)$ and $d\in \{2,3,...\}$, \textcolor{black}{we define}
\begin{equation}
\begin{split}
    \textbf{B}_1(\alpha) & =\{ (p,q,d): F(p,q,d,\alpha)\le 0, H(p,q,d,\alpha)\le 0 \}, \\     
    \textbf{B}_2(\alpha) & =\{ (p,q,d): F(p,q,d,\alpha)> 0, G(p,q,d,\alpha)\le 0 \}, \\
    \textbf{B}_3(\alpha) & =\{(p,q,d): F(p,q,d,\alpha)> 0, G(p,q,d,\alpha)> 0 \}, \\
    \textbf{B}_4(\alpha) & =\{(p,q,d): F(p,q,d,\alpha)\le 0, H(p,q,d,\alpha)> 0 \}, 
    \end{split}\end{equation}
where functions $F(\cdot),G(\cdot),H(\cdot):\Theta\times(0,1)\rightarrow \mathbb{R}$ are defined as:  \begin{equation}\begin{split}
    F(p,q,d,\alpha)&=\sum_{i=0}^{\infty} (\alpha p)^i-\sum_{i=0}^{d-1}\alpha^i, \\
    G(p,q,d,\alpha)&=1+\alpha(1-q)\sum_{i=0}^{d-1}\alpha^i-\sum_{i=0}^{d-1}\alpha^i,\\
    H(p,q,d,\alpha)&=1+\alpha(1-q)\sum_{i=0}^{\infty}(\alpha p)^i-\sum_{i=0}^{d-1}\alpha^i. 
    \end{split}\end{equation}
\end{definition}
 Observe that all four regions $\textbf{B}_i(\alpha)$ converge to $\textbf{B}_i$ as the discount factor $\alpha\rightarrow 1$, where the regions $\textbf{B}_i$ are described in Definition \ref{def1}.  

The optimal structural result of Problem \eqref{disc} with a discount factor $\alpha$ is provided in the following proposition:

\begin{proposition}

There exists a threshold type policy $\mu^{\alpha,*}(\delta,l_1,0)$ on age $\delta$ that is the solution to Problem \eqref{disc} such that:

(a) If $l_1 = 0$ and $(p,q,d)\in \textbf{B}_1(\alpha)\cup \textbf{B}_4(\alpha)$, then $\mu^{\alpha,*}(\delta,l_1,0)$ is non-increasing in the age $\delta$. 

(b) If $l_1 = 0$ and $(p,q,d)\in \textbf{B}_2(\alpha)\cup \textbf{B}_3(\alpha)$, then $\mu^{\alpha,*}(\delta,l_1,0)$ is non-decreasing in the age $\delta$. 

(c) If $l_1 = 1$ and $(p,q,d)\in \textbf{B}_1(\alpha)\cup \textbf{B}_2(\alpha)$, then $\mu^{\alpha,*}(\delta,l_1,0)$ is non-increasing in the age $\delta$. 

(d) If $l_1 = 1$ and $(p,q,d)\in \textbf{B}_3(\alpha)\cup \textbf{B}_4(\alpha)$, then $\mu^{\alpha,*}(\delta,l_1,0)$ is non-decreasing in the age $\delta$. 


\label{lemma2}
\end{proposition}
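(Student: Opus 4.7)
The plan is to analyze the action-preference function
\[
D_{l_1}(\delta) \;:=\; Q^\alpha\!\big((\delta,l_1,0),2\big) - Q^\alpha\!\big((\delta,l_1,0),1\big),
\]
showing that in each region $\textbf{B}_i(\alpha)$ the map $\delta \mapsto D_{l_1}(\delta)$ admits at most one sign change, with the direction of the change matching the claimed monotonicity of $\mu^\alpha(\cdot,l_1,0)$. The threshold structure then follows from the minimizer in the Bellman equation \eqref{discounted-lemma1}.

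Because Channel 2 is deterministic, selecting $u=2$ commits the system to $d-1$ mandatory busy slots ending in a reset to $(d,l_1',0)$, with $l_1'$ drawn from the $d$-step Markov transition of Channel 1. This yields the exact affine identity
\[
Q^\alpha\!\big((\delta,l_1,0),2\big) \;=\; S\,\delta + C_2(l_1), \qquad S \;:=\; \sum_{i=0}^{d-1}\alpha^i,
\]
with $C_2(l_1)$ independent of $\delta$. The $u=1$ branch, by the one-step Bellman recursion, satisfies
\[
Q^\alpha\!\big((\delta,l_1,0),1\big) \;=\; \delta + \alpha r_{l_1}\,J^\alpha(\delta+1,0,0) + \alpha(1-r_{l_1})\,J^\alpha(1,1,0),
\]
where $r_0 = p$ and $r_1 = 1-q$ are the Channel-1 failure probabilities conditioned on the previous state. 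Combining these two expressions gives
\[
D_{l_1}(\delta+1) - D_{l_1}(\delta) \;=\; (S-1) - \alpha r_{l_1}\big[J^\alpha(\delta+2,0,0) - J^\alpha(\delta+1,0,0)\big].
\]

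To control the residual marginal $M(\delta) := J^\alpha(\delta+2,0,0) - J^\alpha(\delta+1,0,0)$, I would first compute it under two benchmark policies: ``always $u=1$ until success'' yields $M = 1/(1-\alpha p)$ via the geometric expected retry count, while ``immediately $u=2$'' yields $M = S$ by the affine identity above. By induction on $n$ in the value iteration of Lemma \ref{lemma1}(c), I would prove that the marginal of $J^\alpha_n(\cdot,0,0)$ always lies between $\min\{S, 1/(1-\alpha p)\}$ and $\max\{S, 1/(1-\alpha p)\}$. Substituting these two extremes into the display for $D_{l_1}(\delta+1) - D_{l_1}(\delta)$ and simplifying yields, after routine algebra, precisely the quantities $-F(p,q,d,\alpha)$ (for $l_1=0$) and one of $-G(p,q,d,\alpha)$ or $-H(p,q,d,\alpha)$ (for $l_1=1$). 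Hence, in any region where both extremes of the sandwich share a common sign, $D_{l_1}$ is monotone in $\delta$ and the threshold structure follows by a Topkis-type supermodularity argument. This already handles case (a) and the $l_1=0$ part of cases (b)--(d).

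The main obstacle is the remaining cases---most notably the $l_1=1$ components in $\textbf{B}_2$ and $\textbf{B}_4$---where the two extremes of the sandwich give opposite signs, reflecting the ``partial supermodularity'' phenomenon highlighted in the introduction. To close these cases I would bootstrap on the unknown optimal action at $(\delta+1,0,0)$: if it is $u=2$, then by the affine identity $M(\delta) = S$ exactly, so $D_1(\delta+1) - D_1(\delta) = -G$ whose sign is fixed by the region; if it is $u=1$, another application of the Bellman recursion at $(\delta+2,0,0)$ lets the analysis telescope into a geometric series in $\alpha p$ that eventually reduces to $-H$, whose sign is also fixed by the region. Iterating this dichotomy in tandem with a candidate threshold for $\mu^\alpha(\cdot,l_1,0)$ shows that once $D_{l_1}$ changes sign in $\delta$ it cannot revert, completing all four cases (a)--(d).
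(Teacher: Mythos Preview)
Your framework is exactly the paper's: the affine identity $Q^\alpha((\delta,l_1,0),2)=S\delta+C_2(l_1)$ with $S=m=\sum_{i=0}^{d-1}\alpha^i$ is Lemma~\ref{g2m}, and the reduction to the sign of $(S-1)-\alpha r_{l_1}M(\delta)$ is precisely the comparison $L(\delta,l_1,1)\lessgtr m$ carried out in Lemmas~\ref{l0m}--\ref{l1m}. For $l_1=0$ (parts (a) and (b)) your plan is essentially correct. One caveat: the value-iteration induction you propose for the sandwich $M(\delta)\in[\min\{S,\tfrac{1}{1-\alpha p}\},\max\{\cdot\}]$ does not start --- at $n=0$ the marginal is $0$, which is below the lower end. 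The paper avoids this by arguing directly on $J^\alpha$: it telescopes through the realized optimal actions $a_{\delta+1},a_{\delta+2},\ldots$ (equations \eqref{l011}--\eqref{proof9} and \eqref{100}--\eqref{101}), terminating either at the first action~$2$ (giving the bound $S$) or in the geometric tail $\sum(\alpha p)^i$ (giving $1/(1-\alpha p)$).

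For $l_1=1$ (parts (c) and (d)) there is a real gap. First, your identification of the hard regions is off: since $H=G+\alpha(1-q)F$, in $\textbf{B}_3(\alpha)$ and $\textbf{B}_4(\alpha)$ one always has $G>0$ and $H>0$, so \emph{both} sandwich endpoints $-G,-H$ are negative and the policy is non-decreasing --- these are easy. The genuinely hard cases are $\textbf{B}_1(\alpha)$ (where $G$ can be positive even though $H\le 0$) and $\textbf{B}_2(\alpha)$ (where $H$ can be positive even though $G\le 0$). Second, and more substantively, your bootstrap does not close these cases. In $\textbf{B}_2(\alpha)$, for instance, $\mu^\alpha(\cdot,0,0)$ has a non-decreasing threshold $s$; your dichotomy then gives $D_1'(\delta)=-G\ge 0$ for $\delta\ge s$ but only $D_1'(\delta)\approx -H$ (possibly $<0$) for small $\delta$. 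Thus $D_1$ can be U-shaped and may change sign \emph{twice}, producing the pattern $\mu=1,2,1$, which is not monotone. ``Iterating the dichotomy'' does not exclude this.

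The paper's missing ingredient is a switch-type comparison (Lemma~\ref{lemma-switch}): if $p+q\ge 1$ and $\mu^\alpha(\delta,0,0)=1$ then $\mu^\alpha(\delta,1,0)=1$, with the dual statement for $p+q\le 1$. In $\textbf{B}_2(\alpha)$ one has $p+q\ge 1$, so this forces $\mu^\alpha(\delta,1,0)=1$ for all $\delta\le s$, i.e.\ $D_1(\delta)>0$ there; combined with $D_1'\ge 0$ for $\delta>s$ this rules out the second sign change. The $\textbf{B}_1(\alpha)$ case is handled analogously, splitting on the sign of $p+q-1$. Without this coupling between the $l_1=0$ and $l_1=1$ policies, your bootstrap cannot control the below-threshold region.
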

\textcolor{black}{Note that Theorem \ref{theorem2} can be immediately shown from Proposition \ref{lemma2}, Lemma \ref{theorem1} and the convergence of the regions $\textbf{B}_i(\alpha)$ to $\textbf{B}_i$ (for $i=1,2,3,4$) as $\alpha\rightarrow 1$.} \textcolor{black}{\emph{The rest of Section \ref{proof2} provides the proof for Proposition \ref{lemma2}}}.



Since Channel $1$ and Channel $2$ have different delays, we are not able to show that the optimal policy is threshold type by directly observing the Bellman equation like \cite{altman2019forever}. Thus, we will use the concept of \emph{super-modularity} \cite[Theorem 2.8.2]{topkis1998supermodularity}. The domain of age set and decision set in the Q-function is $\{1,2,...\}\times \{1,2\}$, which is a \emph{lattice}. Given a positive $s$, the subset $\{s,s+1,...\}\times \{1,2\}$ is a \emph{sublattice} of $\{1,2,...\}\times \{1,2\}$. Thus, if the following holds for all $\delta>s$: 
\begin{equation}\begin{split}
& Q^\alpha (\delta,l_1,0,1)-Q^\alpha(\delta-1,l_1,0,1) \\ \le  & Q^\alpha(\delta,l_1,0,2)-Q^\alpha(\delta-1,l_1,0,2), \end{split} \label{modular}  
\end{equation}
then the Q-function $Q^\alpha(\delta,l_1,0,u)$ is super-modular in $(\delta,u)$ for $\delta>s$, which means the optimal decision \begin{equation}
    \mu^{\alpha,*}(\delta,l_1,0) =\mathrm{argmin}_{u\in \{1,2\}}Q^\alpha(\delta,l_1,0,u)
\end{equation} is non-increasing in $\delta$ for $\delta\ge s$. If the inequality of \eqref{modular} is inversed, then we call $Q^\alpha(\delta,l_1,0)$ is sub-modular in $(\delta,u)$ for  $\delta>s$, and $\mu^{\alpha,*}(\delta,l_1,0)$ is non-decreasing in $\delta$ for $\delta\ge s$.


For ease of notations, we give Definition \ref{def2}:

\begin{definition} Given $l_1\in \{0,1\}$, $u\in \{1,2\}$, 
\begin{equation}
L^\alpha(\delta,l_1,u)\triangleq Q^\alpha(\delta,l_1,0,u)-Q^\alpha(\delta-1,l_1,0,u).
\end{equation}  
\label{def2}
\end{definition} \vspace{-8pt} 
Note that $L^\alpha(\delta,l_1,1)$ is the left hand side of \eqref{modular}, and $L^\alpha(\delta,l_1,2)$ is the right hand side of \eqref{modular}. 

\textcolor{black}{Our high-level idea to show Proposition \ref{lemma2} is as follows:} First, we show that $L^\alpha(\delta,l_1,2)$ is a constant (see Lemma \ref{g2m} below), then we compare $L^\alpha(\delta,l_1,1)$ with the constant to check super-modularity (\textcolor{black}{see the proofs of Lemma \ref{l0m} and Lemma \ref{l1m} below}). 
 
Suppose that $m\triangleq \sum_{i=0}^{d-1} \alpha^i$, and we have:
 \begin{lemma}\label{g2m}
 For all $\delta\ge 2$ and $l_1\in \{0,1 \}$, $L^\alpha(\delta,l_1,2)=m$. 
 \end{lemma}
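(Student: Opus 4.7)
The plan is to expand $Q(\delta, l_1, 0, 2)$ explicitly by exploiting the fact that choosing $u = 2$ (sub-6GHz) pins down a completely deterministic age trajectory over the next $d$ slots. Concretely, once $u(t)=2$ is chosen from state $(\delta, l_1, 0)$, the scheduler is forced to play $u = \mathrm{none}$ in slots $t+1, \ldots, t+d-1$ because $l_2$ is busy, the age evolves as $\delta, \delta+1, \ldots, \delta+d-1$ during those slots, and at slot $t+d$ the age is reset to exactly $d$ (independent of the initial $\delta$), while the Channel~1 state evolves as a $d$-step Markov chain starting from $l_1$ that also does not depend on $\delta$.

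Using this, I would unroll the Bellman expression for the $u=2$ action, grouping the discounted stage costs in the first $d$ slots and lumping everything that happens from slot $t+d$ onward into a single term $V(l_1)$. This gives
\begin{equation*}
Q(\delta, l_1, 0, 2) \;=\; \sum_{i=0}^{d-1} \alpha^i (\delta + i) \;+\; \alpha^d \, V(l_1),
\end{equation*}
where $V(l_1) = \sum_{l_1''} \Pr[l_1(t+d-1) = l_1'' \mid l_1(t-1)=l_1]\, J(d, l_1'', 0)$ depends only on $l_1$ and the $d$-step transition probabilities of the Gilbert--Elliot chain. Rearranging,
\begin{equation*}
Q(\delta, l_1, 0, 2) \;=\; \delta \sum_{i=0}^{d-1} \alpha^i \;+\; \underbrace{\sum_{i=0}^{d-1} i \alpha^i + \alpha^d V(l_1)}_{\text{independent of } \delta} \;=\; m\,\delta + C(l_1).
\end{equation*}

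From here the conclusion is immediate: $L(\delta, l_1, 2) = Q(\delta, l_1, 0, 2) - Q(\delta-1, l_1, 0, 2) = m\delta - m(\delta-1) = m$, for every $\delta \ge 1$ and every $l_1 \in \{0,1\}$. I do not anticipate a real obstacle here. The only subtlety to be careful about is correctly justifying that the tail term $\alpha^d V(l_1)$ truly carries no $\delta$-dependence; this follows because (i) Channel~2 delivers the packet with certainty in exactly $d$ slots, so the post-delivery age is the deterministic constant $d$ regardless of $\delta$, and (ii) admissibility forces $u=\mathrm{none}$ throughout the busy period, so no decision-induced $\delta$-dependence can leak into $V(l_1)$. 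Once this observation is made precise the identity $L(\delta, l_1, 2) = m$ is just linearity of the geometric sum, which is exactly the structural feature that will later let us compare $L(\delta, l_1, 1)$ against a constant in order to verify supermodularity or submodularity on the appropriate regions.
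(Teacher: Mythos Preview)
Your proposal is correct and follows essentially the same idea as the paper: unroll the Bellman equation over the $d$-slot busy period induced by choosing Channel~2, use that the age trajectory is deterministic and resets to $d$ at the end regardless of the initial $\delta$, and conclude that the only $\delta$-dependence in $Q(\delta,l_1,0,2)$ is the linear term $m\delta$. The paper carries out the same unrolling but works directly with the difference $L(\delta,l_1,2)$ step by step (its Lemma~\ref{rhs2}), explicitly tracking the $k$-step Channel~1 transition probabilities $a_k,b_k$ before collapsing the tail at $k=d$; your closed-form computation of $Q$ followed by differencing is a slightly more compact packaging of the same argument.
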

 
 \begin{proof}
 \ifreport
See Appendix \ref{g2mapp}.
\else
See our technical report \cite{pan2020age}.
\fi
 \end{proof}
 


Also, we have
\begin{lemma}\label{l0m}
(a) If $l_1=0$ and $(p,q,d)\in \textbf{B}_1(\alpha)\cup \textbf{B}_4(\alpha)$, then $Q^\alpha(\delta,l_1,0,u)$ is super-modular in $(\delta,u)$ for $\delta\ge 2$.

(b) If $l_1=0$ and $(p,q,d)\in \textbf{B}_2(\alpha)\cup \textbf{B}_3(\alpha)$, then $Q^\alpha(\delta,l_1,0,u)$ is sub-modular in $(\delta,u)$ for $\delta\ge 2$.
\end{lemma}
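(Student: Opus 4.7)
My plan is to combine the Bellman equation for $Q(\delta,0,0,1)$ with the constant value $L(\delta,0,2)=m$ from Lemma \ref{g2m} to obtain a self-referential inequality for $L(\delta,0,1)$, and then close the argument by a short $\sup$/$\inf$ contradiction. The algebraic heart is the identity $1/(1-\alpha p)=\sum_{i=0}^\infty (\alpha p)^i$ together with $m=\sum_{i=0}^{d-1}\alpha^i$: the hypothesis $F(p,q,d,\alpha)\le 0$ is exactly $1/(1-\alpha p)\le m$, which is in turn equivalent to $1+\alpha p\, m\le m$, and analogously with strict $>$ in the other region.

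Reading off the transitions from Table \ref{xsit}, the Bellman equation in Lemma \ref{lemma1}(a) gives
\begin{equation*}
Q(\delta,0,0,1)=\delta+\alpha\bigl[p\,J(\delta+1,0,0)+(1-p)\,J(1,1,0)\bigr],
\end{equation*}
and subtracting the analogous expression at $\delta-1$ cancels the $J(1,1,0)$ term, producing the clean identity
\begin{equation*}
L(\delta,0,1)=1+\alpha p\,[J(\delta+1,0,0)-J(\delta,0,0)].
\end{equation*}
By Lemma \ref{zerowait} I may restrict the outer minimization to $u\in\{1,2\}$, and plugging the minimizer of one $Q(\cdot,0,0,\cdot)$ into the other $Q$ yields the standard sandwich
\begin{equation*}
\min_{u\in\{1,2\}}L(\delta+1,0,u)\ \le\ J(\delta+1,0,0)-J(\delta,0,0)\ \le\ \max_{u\in\{1,2\}}L(\delta+1,0,u),
\end{equation*}
after which Lemma \ref{g2m} replaces the $u=2$ slot by the constant $m$.

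For part (a), under $F\le 0$, I set $A=\sup_\delta L(\delta,0,1)$, which I expect to be finite (see the last paragraph). The upper sandwich yields $A\le 1+\alpha p\max(A,m)$. If $A\le m$, then already $A\le 1+\alpha p\,m\le m$ by the hypothesis; if instead $A>m$, the bound collapses to $A\le 1+\alpha p\,A$, i.e.\ $A\le 1/(1-\alpha p)\le m$ under $F\le 0$, contradicting $A>m$. Hence $L(\delta,0,1)\le m$ for every $\delta$. Part (b) is symmetric with $B=\inf_\delta L(\delta,0,1)$ and the lower sandwich $B\ge 1+\alpha p\min(B,m)$: if $B<m$, then $B\ge 1+\alpha p\,B$ forces $B\ge 1/(1-\alpha p)>m$ under $F>0$, contradicting $B<m$; therefore $B\ge m$, and then $B\ge 1+\alpha p\,m>m$ strictly.

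The main obstacle I anticipate is the uniform boundedness of $L(\delta,0,1)$ in $\delta$, which is needed to make the $\sup$/$\inf$ contradictions valid. My plan is to show by induction on value iteration (Lemma \ref{lemma1}(c)) that $J_n(\delta+1,l_1,l_2)-J_n(\delta,l_1,l_2)\le 1/(1-\alpha)$ uniformly in $n,\delta,l_1,l_2$, and then pass to the limit. States with $l_2>0$ require a small bookkeeping step since the action is forced to $none$, but the same age-coupling (ages in the two trajectories differ by exactly one until the next reset, which occurs within at most $d$ slots) gives the same bound. Once boundedness is secured, the remainder is the one-line algebraic comparison among $F$, $m$, and $1/(1-\alpha p)$ already carried out above.
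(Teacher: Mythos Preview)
Your proposal is correct and shares the paper's starting point: the identity $L(\delta,0,1)=1+\alpha p\,[J(\delta+1,0,0)-J(\delta,0,0)]$ together with the one-sided comparison of $J(\delta+1,0,0)-J(\delta,0,0)$ against $L(\delta+1,0,a)$ for a suitable optimal action $a$. The closing step is where the two diverge. The paper iterates the inequality forward in $\delta$ and splits into two cases according to whether the optimal action at some later age is $2$ (then Lemma~\ref{g2m} terminates the chain and a small geometric lemma, Lemma~\ref{smalllemma}, finishes) or stays at $1$ forever (then the tail $(\alpha p)^{l}L(\delta+l,0,1)$ is shown to vanish using only the crude linear bound $L(\delta,0,1)\le J(\delta+1,0,0)\le C(\delta+1)$). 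Your $\sup/\inf$ fixed-point contradiction avoids this case split entirely and is noticeably shorter; the price is that it needs the strictly stronger uniform bound $\sup_\delta L(\delta,0,1)<\infty$, whereas the paper's linear-in-$\delta$ bound would not suffice for your argument. Your value-iteration induction for $J_n(\delta+1,l_1,l_2)-J_n(\delta,l_1,l_2)\le 1/(1-\alpha)$ is sound (under any fixed action the coupled next states either both reset to the same age or both increment, so the age gap never exceeds $1$; this covers the forced $u=none$ case for $l_2>0$ as well), so the extra ingredient is cheap. One small remark: the uniform bound is needed only for part~(a); in part~(b) the infimum is automatically $\ge 1$ by Lemma~\ref{monotonicity}, so finiteness there is free.
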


\begin{proof}
\ifreport
See Appendix \ref{l0mapp}.
\else
See our technical report \cite{pan2020age}.
\fi
\end{proof}


Lemma \ref{l0m}(a) implies that $\mu^{\alpha,*}(\delta,0,0)$ is non-increasing in $\delta$ if $(p,q,d)\in \textbf{B}_1(\alpha)\cup \textbf{B}_4(\alpha)$. Lemma \ref{l0m}(b) implies that $\mu^{\alpha,*}(\delta,0,0)$ is non-decreasing in $\delta$ if $(p,q,d)\in \textbf{B}_2(\alpha)\cup \textbf{B}_3(\alpha)$.
Thus, Proposition \ref{lemma2}(a),(b) hold.  

Lemma \ref{l0m} gives the result when the previous state of Channel $1$ is $0$. We then need to solve when the previous state of Channel $1$ is $1$. Different from $Q^\alpha(\delta,0,0,u)$, the Q-function $Q^\alpha(\delta,1,0,u)$ does not satisfy super-modular (or sub-modular) in ($\delta,u$) for all the age value $\delta$. Thus, we give a weakened condition: we can find out a value $s$, such that the Q-function $Q^\alpha(\delta,1,0,u)$ is super-modular (or sub-modular) for \emph{a partial age set} $s,s+1,...$ and $\mu^{\alpha,*}(\delta,1,0)$ is a constant on the set $1,2,...,s$. Then, $\mu^{\alpha,*}(\delta,1,0)$ is still non-increasing (or non-decreasing). Note that super-/sub-modularity is the sufficient but not necessary condition to the monotonicity of $\mu^{\alpha,*}(\delta,l_1,0)$ in $\delta$.     

Thus, to solve Proposition \ref{lemma2}(c),(d), we provide the following lemma:   




\begin{lemma}\label{l1m}
(a) If $l_1=1$ and $(p,q,d)\in \textbf{B}_1(\alpha)\cup \textbf{B}_2(\alpha)$, 
then there exists a positive integer $s$, such that $Q^\alpha(\delta,l_1,0,u)$ is super-modular in $(\delta,u)$ for $\delta > s$, and $\mu^{\alpha,*}(\delta,l_1,0)$ is always $1$ or always $2$ for all $\delta \le s$. 

(b) If $l_1=1$ and $(p,q,d)\in \textbf{B}_3(\alpha)\cup  \textbf{B}_4(\alpha)$, then there exists a positive integer $s$, such that $Q^\alpha(\delta,l_1,0,u)$ is sub-modular in $(\delta,u)$ for $\delta > s$, and $\mu^{\alpha,*}(\delta,l_1,0)$ is always $1$ or always $2$ for all $\delta \le s$.




\end{lemma}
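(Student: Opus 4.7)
The plan is to reduce the analysis of $L(\delta,1,1)$ to the already-understood behavior at states $(\delta,0,0)$. Writing the Bellman equation at state $(\delta,1,0)$ with action $u=1$ and using that a successful transmission (probability $q$) sends the state to $(1,1,0)$ while a failure (probability $1-q$) sends it to $(\delta+1,0,0)$, I would obtain
\begin{equation*}
Q(\delta,1,0,1) \;=\; \delta + \alpha q J(1,1,0) + \alpha(1-q)\, J(\delta+1,0,0),
\end{equation*}
and hence
\begin{equation*}
L(\delta,1,1) \;=\; 1 + \alpha(1-q)\bigl[J(\delta+1,0,0) - J(\delta,0,0)\bigr].
\end{equation*}
Comparing $L(\delta,1,1)$ to $m=\sum_{i=0}^{d-1}\alpha^i$ therefore reduces to a bound on the first difference of $J(\cdot,0,0)$ in its age argument.

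Next, by Lemma \ref{l0m} the Q-function $Q(\cdot,0,0,u)$ is supermodular or submodular in $(\delta,u)$ on the \emph{entire} age axis, so $\mu^\alpha(\delta,0,0)$ is already a single-threshold function of $\delta$. In particular there exists an integer $s$ such that $\mu^\alpha(\delta,0,0)$ is \emph{constant} (either identically $1$ or identically $2$, depending on the region) for all $\delta>s$. On this tail the Bellman recursion for $J(\delta,0,0)$ can be iterated in closed form: when $\mu^\alpha(\cdot,0,0)\equiv 1$ the recursion unfolds into a geometric series in $\alpha p$, while when $\mu^\alpha(\cdot,0,0)\equiv 2$ the increment is exactly $m$ by the same telescoping argument used in Lemma \ref{g2m}. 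Substituting the resulting expression for $J(\delta+1,0,0)-J(\delta,0,0)$ into the identity above yields an explicit value of $L(\delta,1,1)$ on the tail $\delta>s$, and comparing it with $m$ reduces, after algebraic manipulation, to the sign of $F(p,q,d,\alpha)$, $G(p,q,d,\alpha)$, or $H(p,q,d,\alpha)$. The defining inequalities of $\textbf{B}_1(\alpha)\cup\textbf{B}_2(\alpha)$ will be seen to correspond precisely to $L(\delta,1,1)\le m$, and those of $\textbf{B}_3(\alpha)\cup\textbf{B}_4(\alpha)$ to $L(\delta,1,1)>m$, which is the tail claim of (a) and (b) respectively.

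It remains to verify that on the initial segment $\delta\le s$ the policy $\mu^\alpha(\delta,1,0)$ is a constant. On this range $\mu^\alpha(\delta,0,0)$ is itself pinned down by Lemma \ref{l0m} (it sits on the side of its own threshold opposite to that considered above), so $J(\delta,0,0)$ admits a different explicit recursion. Using the monotonicity of $J(\delta,0,0)$ in $\delta$ from Lemma \ref{monotonicity}, I would show that the sign of $Q(\delta,1,0,1)-Q(\delta,1,0,2)$ is uniform for $\delta\in\{1,\ldots,s\}$, which forces $\mu^\alpha(\delta,1,0)$ to be constant on this segment. Combined with the partial (sub/super)modularity proven above, the overall policy $\mu^\alpha(\delta,1,0)$ is then non-increasing in $\textbf{B}_1(\alpha)\cup\textbf{B}_2(\alpha)$ and non-decreasing in $\textbf{B}_3(\alpha)\cup\textbf{B}_4(\alpha)$, giving Proposition \ref{lemma2} parts (c) and (d) after taking $\alpha\to 1$.

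The hard part will be the region-by-region bookkeeping: since $\mu^\alpha(\delta,0,0)$ stabilizes at different actions in different $\textbf{B}_i(\alpha)$ and the threshold $s$ depends on the region, the closed-form evaluation of $J(\delta+1,0,0)-J(\delta,0,0)$ on both the tail and the initial segment splits into four distinct cases, each of which must be reconciled with the boundary inequalities defining $\textbf{B}_i(\alpha)$ through the functions $F$, $G$, and $H$. Handling the small-$\delta$ segment is subtler than the tail because one no longer has a single-step Bellman recursion in closed form, and the constancy of $\mu^\alpha(\delta,1,0)$ must be obtained through the indirect comparison above rather than through a direct supermodularity argument.
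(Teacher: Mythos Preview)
Your reduction of $L(\delta,1,1)$ to the increment $J(\delta+1,0,0)-J(\delta,0,0)$ and the tail analysis are correct and coincide with the paper's argument: once $\mu^\alpha(\cdot,0,0)$ has stabilized, the increment equals either $\sum_{i\ge 0}(\alpha p)^i$ or $m$, and substituting gives $L(\delta,1,1)=1+\alpha(1-q)\sum_{i\ge 0}(\alpha p)^i$ or $1+\alpha(1-q)m$, whose comparison with $m$ is exactly the sign of $H$ or $G$ in the relevant region.

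The gap is in your treatment of the initial segment $\delta\le s$. Appealing to monotonicity of $J(\cdot,0,0)$ does not force the sign of $Q(\delta,1,0,1)-Q(\delta,1,0,2)$ to be uniform there. Concretely, take $(p,q,d)\in\textbf{B}_1(\alpha)$: the policy $\mu^\alpha(\cdot,0,0)$ is non-increasing, so on the initial block it equals $2$, whence $J(\delta+1,0,0)-J(\delta,0,0)=m$ and $L(\delta,1,1)=1+\alpha(1-q)m$. For part~(a) you would need this to be $\le m$, i.e.\ $G(p,q,d,\alpha)\le 0$; but $\textbf{B}_1(\alpha)$ only guarantees $F\le 0$ and $H\le 0$, and since $G-H=\alpha(1-q)\bigl(m-\sum_{i\ge 0}(\alpha p)^i\bigr)\ge 0$ under $F\le 0$, the sign of $G$ is not controlled. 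So neither a direct computation nor monotonicity closes this case.

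The paper resolves this by bringing in the \emph{correlation sign} $p+q\lessgtr 1$, which your outline omits. Two ingredients are used. First, a switch-type lemma: if $p+q\ge 1$ and $\mu^\alpha(\delta,0,0)=1$ then $\mu^\alpha(\delta,1,0)=1$, and dually if $p+q\le 1$ and $\mu^\alpha(\delta,0,0)=2$ then $\mu^\alpha(\delta,1,0)=2$. This pins down $\mu^\alpha(\delta,1,0)$ on the initial block whenever the correlation sign matches the initial action of $\mu^\alpha(\cdot,0,0)$; each region forces such a match in at least one of the two correlation sub-cases (e.g.\ $\textbf{B}_2(\alpha)$ implies $p+q\ge 1$, $\textbf{B}_4(\alpha)$ implies $p+q\le 1$). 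Second, when the correlation sign does \emph{not} match (e.g.\ $\textbf{B}_1(\alpha)$ with $p+q\ge 1$, or $\textbf{B}_3(\alpha)$ with $p+q<1$), the paper simply takes $s=1$ and uses the coefficient comparison $1-q\le p$ (resp.\ $1-q>p$) to obtain $L(\delta,1,1)\le L(\delta,0,1)$ (resp.\ $>$) directly from the two Bellman expansions, then invokes Lemma~\ref{l0m}. The switch-type lemma itself is not immediate; its proof requires showing $J(\delta+1,1,d-1)\ge J(1,1,0)$ and then a sign analysis of \eqref{switchtype0}--\eqref{switchtype1}. You would need to supply these pieces, or an equivalent mechanism, to complete the initial-segment step.
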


\begin{proof}
\ifreport
See Appendix \ref{l1mapp}.
\else
See our technical report \cite{pan2020age}.
\fi
\end{proof}


 
 
 Lemma \ref{l1m}(a) implies that  $\mu^{\alpha,*}(\delta,1,0)$ is non-increasing for $\delta\ge s$ and is constant for for $\delta\le s$. Thus, $\mu^{\alpha,*}(\delta,1,0)$ is non-increasing in $\delta$. Similarly, Lemma \ref{l1m}(b) implies that $\mu^{\alpha,*}(\delta,1,0)$ is non-decreasing for $\delta>0$. Thus, we have shown Proposition \ref{lemma2}(c),(d). \textcolor{black}{Showing the threshold structure of $\mu^{\alpha,*}(\delta,l_1,0)$ even if super-modularity does not hold is one of the key technical contributions in this paper.}   

Overall, Lemma \ref{g2m} and Lemma \ref{l0m} shows Proposition \ref{lemma2}(a),(b). Lemma \ref{g2m} and Lemma \ref{l1m} shows Proposition \ref{lemma2}(c),(d). Thus we have completed the proof of Proposition \ref{lemma2}.
 
To summarize Section \ref{proof2}, Proposition \ref{lemma2}, Lemma \ref{theorem1} and the convergence of  $\textbf{B}_1(\alpha),\ldots, \textbf{B}_4(\alpha)$ to $\textbf{B}_1,\ldots,\textbf{B}_4$ show Theorem \ref{theorem2}. 

\subsection{Proofs of Theorem \ref{theorem2a}---Theorem \ref{theorem2d}}\label{proof3}
\textcolor{black}{In this section, we prove Theorem \ref{theorem2a}$\sim$Theorem \ref{theorem2d} with $(p,q,d)\in \textbf{B}_1$---$(p,q,d)\in \textbf{B}_4$, respectively for efficiently deriving an optimal threshold-type solution.}
\subsubsection{Proof of Theorem \ref{theorem2a}}

For $(p,q,d)\in \textbf{B}_1$, we firstly prove that $\mu^*(\delta,0,0)=1$ and then show that $\mu^*(\delta,1,0)=1$.  

\begin{lemma}\label{mu00in1and4}
If $(p,q,d)\in \textbf{B}_1 \cup \textbf{B}_4$, then the optimal decisions at states $(\delta,0,0)$ for all $\delta$ are $1$.
\end{lemma}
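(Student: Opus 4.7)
My plan is to work in the $\alpha$-discounted setting, prove the analogous statement for the discounted-optimal policy $\mu^\alpha$, and then pass $\alpha \to 1$ via Lemma \ref{theorem1}(c) together with the convergence $\textbf{B}_i(\alpha) \to \textbf{B}_i$. By Proposition \ref{lemma2}(a) and (d), for $(p,q,d) \in \textbf{B}_1(\alpha) \cup \textbf{B}_4(\alpha)$ the policy $\mu^\alpha(\delta, 0, 0)$ is of non-increasing threshold type in $\delta$, so it has some threshold $\lambda_0^\alpha \geq 1$; the goal is to show $\lambda_0^\alpha \leq 1$, which means action $1$ is already chosen at the smallest age.

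The first reduction uses the supermodularity already established in Lemmas \ref{g2m} and \ref{l0m}(a): since $L(\delta, 0, 1) \leq L(\delta, 0, 2) = m$ for every $\delta$, telescoping from $\delta = 1$ upward gives that $Q^\alpha(\delta, 0, 0, 2) - Q^\alpha(\delta, 0, 0, 1)$ is non-decreasing in $\delta$. Hence it suffices to verify the single inequality $Q^\alpha(1, 0, 0, 1) \leq Q^\alpha(1, 0, 0, 2)$, and it then automatically extends to all $\delta \geq 1$.

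Expanding via Bellman at $\delta = 1$,
\begin{equation*}
Q^\alpha(1,0,0,2) - Q^\alpha(1,0,0,1) = \alpha\bigl[p\bigl(J^\alpha(2,0,d{-}1) - J^\alpha(2,0,0)\bigr) + (1{-}p)\bigl(J^\alpha(2,1,d{-}1) - J^\alpha(1,1,0)\bigr)\bigr].
\end{equation*}
Because the action is forced ``none'' whenever $l_2 > 0$, I unroll the Bellman recursion $d{-}1$ times to obtain the closed form
\begin{equation*}
J^\alpha(\delta, l_1, d{-}1) = \sum_{k=0}^{d-2}\alpha^k(\delta + k) + \alpha^{d-1}\,\mathbb{E}\bigl[J^\alpha(d, L_1^{(d-1)}, 0) \,\bigm|\, l_1\bigr],
\end{equation*}
where $L_1^{(d-1)}$ is Channel $1$'s state after $d{-}1$ Markov transitions from $l_1$. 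For $J^\alpha(\delta, 0, 0)$, since $J^\alpha$ is the value of the \emph{optimal} policy, it is upper-bounded by the value of the suboptimal ``choose Channel $1$ at every $(\cdot, 0, 0)$ until the first success, then follow $\mu^\alpha$'' policy; solving the resulting linear recursion gives explicitly $V(\delta) = \delta/(1-\alpha p) + \alpha p/(1-\alpha p)^2 + \alpha(1-p) J^\alpha(1,1,0)/(1-\alpha p)$, and $J^\alpha(\delta, 0, 0) \leq V(\delta)$.

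Substituting these expressions and simplifying, the inequality reduces to the algebraic comparison $\sum_{i=0}^\infty (\alpha p)^i \leq \sum_{i=0}^{d-1}\alpha^i$, which is exactly the defining condition $F(p,q,d,\alpha) \leq 0$ of $\textbf{B}_1(\alpha) \cup \textbf{B}_4(\alpha)$. The monotonicity of $J^\alpha$ in $\delta$ (Lemma \ref{monotonicity}) handles the $(1{-}p)$-branch by giving $J^\alpha(1,1,0) \leq J^\alpha(2,1,0)$, which together with the forced-idle expansion of $J^\alpha(2,1,d{-}1)$ yields the needed lower bound. The main obstacle is this concluding algebraic reduction: the forced-idle expansion for $l_2 = d{-}1$ and the upper bound $V(\delta)$ must be combined so that the terms involving $J^\alpha(d, L_1^{(d-1)}, 0)$ and $J^\alpha(1,1,0)$ cancel or recombine cleanly, leaving the region-defining inequality as the sole remaining condition. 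Finally, taking $\alpha \to 1$ along a subsequence on which $\mu^{\alpha_n} \to \mu^*$ (Lemma \ref{theorem1}(c)) transfers the conclusion to the average-cost problem.
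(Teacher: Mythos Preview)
Your overall plan—work in the discounted problem, use the supermodularity from Lemmas~\ref{g2m} and~\ref{l0m}(a) to reduce to the single check $Q^\alpha(1,0,0,1)\le Q^\alpha(1,0,0,2)$, and then pass $\alpha\to 1$—is sound, and this reduction agrees with the paper. The gap is precisely where you flag it: the ``concluding algebraic reduction'' at $\delta=1$ is the entire difficulty, and your proposed device (bounding $J^\alpha(2,0,0)$ from above by the value $V(2)$ of the always-Channel-1 policy) does not close it. After the forced-idle expansion, $J^\alpha(2,0,d{-}1)$ contains the terms $a_{d-1}J^\alpha(d,1,0)+b_{d-1}J^\alpha(d,0,0)$, while $V(2)$ contains $\frac{\alpha(1-p)}{1-\alpha p}J^\alpha(1,1,0)$; these do not telescope against one another, and throwing away the $(1{-}p)$-branch via Lemma~\ref{monotonicity} discards exactly the cross-terms needed for cancellation. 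So as written the argument stalls with unmatched value-function terms.

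The paper avoids this by a different mechanism: it argues by induction on the value-iteration index $n$, proving $\mu_n(\delta,0,0)=1$ for all $n$. The point of the induction is that the hypothesis $\mu_0=\cdots=\mu_{n-1}=1$ on $(\cdot,0,0)$ lets one compute $J_{n-d}(d{+}1,0,0)-J_{n-d}(d,0,0)$ \emph{exactly} as a partial geometric sum, which feeds into a simultaneous recursive unrolling of \emph{both} Q-values (their Lemma~\ref{q11}) and collapses to $-F(p,q,d,\alpha)\ge 0$. Your approach \emph{can} be repaired without value iteration: carry out the same simultaneous recursion directly on $J^\alpha$ (the inequalities in the paper's Lemma~\ref{q11} only use optimality and monotonicity, so they hold in the limit), and at the last step bound $J^\alpha(d{+}1,0,0)-J^\alpha(d,0,0)\le m$ using Lemmas~\ref{g2m} and~\ref{l0m}(a). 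One checks that the resulting inequality is $m(1-(\alpha p)^d)-\sum_{i=0}^{d-1}(\alpha p)^i=(1-\alpha p)\sum_{i=0}^{d-1}(\alpha p)^i\bigl[m-\tfrac{1}{1-\alpha p}\bigr]\ge 0$, which is exactly $F\le 0$. The key missing ingredient in your write-up is this simultaneous unrolling of the two Q-values; treating the two brackets separately (one via $V$, the other via monotonicity) loses the structure that makes the cancellation work.
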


\begin{proof}
See Appendix \ref{mu00in1and4app}.
\end{proof}
In addition, when $l_1=1$, we have the following:

\begin{lemma}\label{mu110for1}
If $(p,q,d)\in \textbf{B}_1 $,  then the optimal decisions at states $(\delta,1,0)$ for all $\delta$ are $1$. 
\end{lemma}
\begin{proof}
See Appendix \ref{mu110for1app}.
\end{proof}


\ifreport
Since $\mu^*(\delta,1,0)$ is non-increasing in the region $\textbf{B}_1$ by Theorem \ref{theorem2}, Lemma \ref{mu110for1} implies that $\mu^*(\delta,1,0)=1$ for all $\delta$. Besides, Lemma \ref{mu00in1and4} implies that $\mu^*(\delta,0,0)=1$ for all $\delta$.
Thus, Theorem \ref{theorem2a} follows directly from Lemma \ref{mu00in1and4} and Lemma \ref{mu110for1}. The optimal policy for $(p,q,d)\in \textbf{B}_1$ is always choosing Channel $1$. 
 

\subsubsection{Proof of Theorem \ref{theorem2b}}


\textcolor{black}{In \eqref{help}, we have stated that the MDP problem \eqref{avg} is reduced to deriving the steady-state distributions of the DTMCs. 
Note that Channel $1$ is Markovian $(l_1=0$ or $1)$. When $l_1 = 1$, we observe that only the states $(1,1,0)$ and $(d,1,0)$ can be reached with positive probability for any policy in $\Pi'$. As a result, \eqref{help} can be reduced to a number of the steady-state distributions of the DTMCs with different actions at $(1,1,0)$ and $(d,1,0)$.  
In addition, we observe that the state transition matrices of the DTMCs in \eqref{help} are significantly different depending on the action at $(d,0,0)$. 
Thus, we conclude that there are at most $2^3$ different steady-state distributions of DTMCs based on the actions at three system states: $(1,1,0),(d,1,0)$ with $l_1 = 1$ and $(d,0,0)$ with $l_1 = 0$.} 
Despite that there are totally $2^3$ cases \textcolor{black}{to enumerate}, we manage to reduce to only $4$ cases as in \eqref{discuss-delta1} (for $(p,q,d)\in \textbf{B}_2$). The \textcolor{black}{reason is that the remaining cases are impossible to occur due to the two following restrictions}: (1) the monotonicity is known by Theorem \ref{theorem2}, and (2) the following lemma:
\begin{lemma}\label{lemma-switch-avg}
If Channel $1$ is positive-correlated, i.e., $p+q\ge 1$, and $\mu^{*}(\delta,0,0)=1$, then $\mu^{*}(\delta,1,0)=1$.
Conversely, if Channel $1$ is negative-correlated, i.e. $p+q\le 1$, and $\mu^{*}(\delta,0,0)=2$, then $\mu^{*}(\delta,1,0)=2$.
\end{lemma}
\begin{proof}
See 
Appendix \ref{lemma-switchapp}.
\end{proof}
\else
\fi

\textcolor{black}{Since our optimal policy is of threshold-type,} the action at $(d,0,0)$ is equivalent to whether the threshold of $\mu^*(\delta,0,0)$ is larger or smaller than $d$. Thus, we use $s$ to denote the possible threshold of $\mu^*(\delta,0,0)$.


 For $(p,q,d)\in \textbf{B}_2 $, $\mu^*(\delta,1,0)$ is non-increasing, and $\mu^*(\delta,0,0)$ is non-decreasing. Note that $(p,q,d)\in \textbf{B}_2 $ implies $p+q\ge 1$.
 According to Lemma \ref{lemma-switch-avg}, if $\mu^*(1,1,0)=2$, then $\mu^*(1,0,0)=2$, hence $\mu^*(\delta,0,0)=2$ for all $\delta$. Thus, there are two possible types of DTMCs regarding $\mu^*(d,1,0)=1$ or $\mu^*(d,1,0)=2$.
  If $\mu^*(1,1,0)=1$, then $\mu^*(\delta,1,0)=1$ for all $\delta$, there are thus two possible types of DTMCs regarding the threshold
  \ifreport
   $s>d$ or $s\le d$.  
  \else
  $s>d$ or $s\le d$, where $s$ is the threshold of $\mu^*(\delta,0,0)$. 
  \fi
 Thus, for $(p,q,d)\in \textbf{B}_2 $, there are four possible ways to represent the DTMC diagram of the threshold policy based on the value of the threshold $s$ and the actions at states $(d,l_1,0)$ and $(1,1,0)$
(see
Appendix \ref{markovapp}
for the corresponding DTMCs and derivations): 
\begin{itemize}
    \item 
 The threshold $s>d$ and $\mu^*(1,1,0)=\mu^*(d,1,0)=1$ ($\lambda^*_1=1$). \textcolor{black}{Note that we have mentioned $\bar{\Delta}_2(\lambda_0,\lambda_1)$ as the average age of the DTMC with thresholds $(\lambda_0,\lambda_1)$ when $(p,q,d)\in \textbf{B}_2 $. Then, the average age is derived as $\bar{\Delta}_2(s,1)= f_{1}(s)/g_{1}(s)$, which is shown in Appendix \ref{markov1app}}. The functions $f_{1}(s),g_{1}(s)$ are described in 
Table \ref{fg}.
\textcolor{black}{As is shown later, $\beta_1$ described in Definition \ref{defthm} is the minimum of $f_1(s)/g_1(s)$}. 
  \item 
 The threshold $s\le d$ and $\mu^*(1,1,0)=\mu^*(d,1,0)=1$ ($\lambda^*_1=1$). Then the average age is $\bar{\Delta}_2(s,1)=f_{2}(s)/g_{2}(s)$, which is shown in Appendix \ref{markov2app}. The functions $f_{2}(s),g_{2}(s)$ are described in 
Table \ref{fg}.
As is shown later, $\beta_2$ described in Definition \ref{defthm} is the minimum of $f_2(s)/g_2(s)$.
  \item 
 The threshold $s=1$, $\mu^*(1,1,0)=2$ and $\mu^*(d,1,0)=1$ ($\lambda^*_1\in \{2,3,...,d\}$). 
 The average age is the constant $f_0/g_0$, which is shown in Appendix \ref{markov3app}. Note that $f_0/g_0$ is described in 
\textcolor{black}{Theorem \ref{theorem2b}}.
  \item 
 The threshold $s=1$ and $\mu^*(1,1,0)=\mu^*(d,1,0)=2$ ($\lambda^*_1\in \{d+1,d+2,...\}$). This policy means that we always choose Channel $2$. So the average age is $(3/2) d-1/2$.
\end{itemize}  

\noindent The listed statements illustrated above directly provides the following property:

\begin{proposition}

If $(p,q,d)\in \textbf{B}_2$, then the optimal scheduling policy is 
\begin{align}
    \mu^*(\delta,0,0)=\left\{
\begin{array}{lll}
  1 & \text{if }~ \delta< \lambda_0^*;\\
  2  & \text{if }~ \delta\ge\lambda_0^*, 
\end{array}
\right. \\
    \mu^*(\delta,1,0)=\left\{
\begin{array}{lll}
  2 & \text{if }~ \delta< \lambda_1^*;\\
  1  & \text{if }~ \delta\ge\lambda_1^*, 
\end{array}
\right. 
\end{align} 
where $\lambda_0^*$ and $\lambda_1^*$ are given by  
\begin{equation}\label{proposition2bb}
 \!\!\!\!\!   \left\{
\begin{array}{lll}
  \lambda_0^*=\underset{s\in \{ d+1,\ldots\} }{\argmin} f_1(s)/g_1(s), \lambda_1^*=1 \ \ \   \text{if }~ \bar{\Delta}_{\text{opt}} =  \beta_{1}',\\
  \lambda_0^*=\underset{s\in \{ 1,\ldots, d\} }{\argmin} f_2(s)/g_2(s), \ \lambda_1^*=1 \  \ \   \text{if }~ \bar{\Delta}_{\text{opt}} =  \beta_{2}',\\
 \lambda_0^*=1, \ \ \  \lambda_1^*\in \{2,3,\ldots,d 
\} \ \ \ \   \text{if }~ \bar{\Delta}_{\text{opt}} = f_0/g_0,\\
\lambda_0^*=1, \ \ \  \lambda_1^*\in \{d+1,\ldots 
\} \ \ \ \ \   \text{if }~\bar{\Delta}_{\text{opt}} =  (3/2)d-1/2,
\end{array}
\right. \!\!\!\!\!
\end{equation}
 $\bar{\Delta}_{\text{opt}}$ is the optimal objective value of \eqref{avg}, determined by
\begin{equation}
   \bar{\Delta}_{\text{opt}} =  \min \Big{\{}   \beta_{1}', \beta_{2}', \frac{f_0}{g_0},\frac{3}{2}d-\frac{1}{2} \Big{\} },
\end{equation} 
 $\beta_1'$, $\beta_2'$ are given in \eqref{discuss-beta1}, \eqref{discuss-beta2}, respectively.

\label{propositionb}
\end{proposition}

By using Dinkelbach's method \cite{dinkelbach1967nonlinear}, we can change the minimization problem \eqref{discuss-beta1}, \eqref{discuss-beta2} into a two-layer problem. The inner-layer problem is shown to be unimodal and we derive an exact solution. Thus, we only need a bisection algorithm for the outer-layer, i.e., solving the roots of the equations $h_1(\beta)=0,h_2(\beta)=0$ in \eqref{beta}. \textcolor{black}{To show this}, we introduce the following lemma: 
\begin{lemma}\label{fractional} Suppose that $i\in \{1,2,3,4 \}$.
Define
\begin{align}
h'_{i}(c) & =\min_{s\in \{ d+1,...\}} f_{i}(s)-cg_{i}(s), \ i\in \{1,3,4\}, \label{h-func-1} \\  
h'_2(c) & = \min_{s\in \{ 2,...d \}}  f_{2}(s)-cg_{2}(s), \label{h-func-2}
\end{align}
then for all $i=1,2,3,4$, $h'_{i}(c)\lesseqqgtr 0$ if and only if $c\gtreqqless \beta'_{i}$.  
\end{lemma}
\begin{proof} 
See Appendix \ref{fractionalapp}.
\end{proof}
The solution \textcolor{black}{to} $h'_{i}(c)$ in Lemma \ref{fractional} is shown in the following lemma:

\begin{lemma}\label{fracsolution} Suppose that $i\in \{1,2,3,4 \}$.
If $(p,q,d)\in \textbf{B}_2\cup \textbf{B}_3$, then the threshold $s_{i}(c)$ defined in \eqref{threshold1} is the solution to \eqref{h-func-1} and \eqref{h-func-2}, i.e., $h_i(c)=h'_i(c)$.  
\end{lemma}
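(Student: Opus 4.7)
The plan is to reduce the optimization \eqref{inlayer} to a one-dimensional discrete unimodal minimization, solve the resulting first-order condition in closed form, and then project onto the feasible set $N(i)$. Write $l_i(s) := f_i(s) - c g_i(s)$; the goal is to show that $s_i(c)$ from \eqref{threshold1} minimizes $l_i$ on $N(i)$, so that $h_i(c) = l_i(s_i(c)) = h'_i(c)$.

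First I would use the explicit expressions of $f_i(s)$ and $g_i(s)$ (obtained from the renewal--reward analysis of the Markov chains enumerated in Section \ref{proof3} for the $\textbf{B}_2$ and $\textbf{B}_3$ cases) to compute the forward difference
\[
\Delta l_i(s) := l_i(s+1) - l_i(s) = [f_i(s+1)-f_i(s)] - c[g_i(s+1)-g_i(s)].
\]
Because each $f_i(s)$ is quadratic in $s$ (an arithmetic sum of ages over a renewal cycle whose length depends linearly on $s$) and each $g_i(s)$ is linear in $s$ (the expected cycle length), $\Delta l_i(s)$ is affine in $s$. A short computation shows that its slope factors as a positive constant times $1-d(1-p)$. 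Since $(p,q,d)\in \textbf{B}_2\cup\textbf{B}_3$ means $F(p,q,d)>0$, which is equivalent to $1-d(1-p)>0$, the map $s \mapsto \Delta l_i(s)$ is strictly increasing, so $l_i$ is strictly discretely convex on its domain.

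Next, setting $\Delta l_i(s^{\ast})=0$ in the affine form and using $k_i(c)=l_i-c o_i$ from \eqref{threshold2}, the continuous stationary point works out to $s^{\ast}=-k_i(c)/(1-d(1-p))$. Strict discrete convexity implies that the unconstrained integer minimizer of $l_i$ is $\lceil s^{\ast}\rceil$. To enforce $s\in N(i)$ I then project: for $i\in\{1,3,4\}$ the constraint $s\ge d$ encoded in \eqref{threshold1} forces the outer $\max\{\cdot,d\}$; for $i=2$ the bounded interval $N(2)=\{1,\ldots,d\}$ forces both a $\min\{\cdot,d\}$ from above and a $\max\{\cdot,1\}$ from below, matching $s_2(c)$ exactly. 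Since a strictly discretely convex function is minimized on any interval at the projection of its unconstrained minimizer, $s_i(c)$ is the constrained minimizer and hence $h_i(c)=h'_i(c)$.

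The main technical obstacle is the algebraic bookkeeping: verifying, case by case in $i\in\{1,2,3,4\}$, that the leading coefficient of the quadratic $f_i(s)-c g_i(s)$ factors precisely as a positive constant times $1-d(1-p)$, so that the region-defining inequality $F(p,q,d)>0$ translates cleanly into strict convexity of $l_i$. Once this identification is established from the closed-form expressions of $f_i$ and $g_i$, the remainder of the proof is the standard discrete convex-optimization argument sketched above, and no additional structural property of the MDP is needed.
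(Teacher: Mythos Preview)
Your overall plan---compute the forward difference, locate its sign change, then project onto $N(i)$---is the right one and matches the paper. But the structural justification you give is incorrect. The functions $f_i(s)$ and $g_i(s)$ are \emph{not} quadratic and linear in $s$: inspecting Table~\ref{fg} you will see terms such as $p^{s-1}$, $\sum_{i=d+1}^{s-1} i p^{i-1}$, $c_1(s)=1-b_d p^{s-d}-(1-q)a_d p^{s-d-1}$, and the like. Consequently $\Delta l_i(s)$ is not affine in $s$, and $l_i$ is not discretely convex.

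What the paper actually does (equations \eqref{f}--\eqref{positive}) is observe that after scaling by the positive factor $p^{-(s-1)}$ the forward difference becomes affine:
\[
p^{-(s-1)}\bigl(l_i(s+1)-l_i(s)\bigr)=\bigl(1-d(1-p)\bigr)s+l_i-c\,o_i.
\]
Since $p^{s-1}>0$, the sign of $\Delta l_i(s)$ equals the sign of this affine expression; the region condition $F(p,q,d)>0$, i.e.\ $1-d(1-p)>0$, makes the slope positive, so the sign changes exactly once and $l_i$ is \emph{unimodal} (not convex). The unconstrained integer minimizer is then $\lceil -k_i(c)/(1-d(1-p))\rceil$, and your projection step onto $N(i)$ is valid because projection of the unconstrained minimizer onto an interval still minimizes a unimodal function there. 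So the fix is small but essential: replace ``$f_i$ quadratic, $g_i$ linear, hence discrete convexity'' by ``$p^{-(s-1)}\Delta l_i(s)$ affine with positive slope, hence unimodality.''
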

\begin{proof}
See Appendix \ref{app2}.
\end{proof}

Therefore, we can immediately conclude that  
for all $i\in \{1,2,3,4\}$:
\begin{equation}\label{final}
\beta'_{i}=\beta_i,
\end{equation} where $\beta'_i$ is defined in \eqref{discuss-beta1}, \eqref{discuss-beta2} and $\beta_{i}$ is derived in Definition \ref{defthm} with low complexity algorithm.
\textcolor{black}{In addition, 
\begin{align}
 s_i(\beta_i) & = \underset{s\in \{ d+1,\ldots\} }{\argmin} f_i(s)/g_i(s), i\in \{1,3,4\}, \label{final1}\\
 s_2(\beta_2) & = \underset{s\in \{ 2,\ldots,d \} }{\argmin} f_2(s)/g_2(s). \label{final2}
\end{align} }
The studies in \cite{sun2019samplings,ornee2019samplings,sun2019sampling} also derive an exact solution to their inner-layer problem. However, their technique is using optimal stopping rules \cite{ornee2019samplings,sun2019samplings} or stochastic convex optimization \cite{sun2019sampling}, which is different with our study. In conclusion, \eqref{final} and Proposition \ref{propositionb} shows Theorem \ref{theorem2b}.

\subsubsection{Proof of Theorem \ref{theorem2c}}

 When $(p,q,d)\in \textbf{B}_3 $ 
 , $\mu^*(\delta,0,0)$ and $\mu^*(\delta,1,0)$ are non-decreasing. \textcolor{black}{Then, the two cases are removed: $\mu^*(\delta,0,0)=2$, $\mu^*(\delta,1,0)=1$, $s\leq d$ or $s>d$}. Since $(p,q,d)\in \textbf{B}_3 $ does not imply $p+q\le 1$ or $p+q\ge 1$, we will enumerate all of the five possible ways to represent the DTMCs of the threshold policy based on the value of the threshold $s$ and the optimal decision at states $(d,1,0)$ and $(1,1,0)$ (see 
Appendix \ref{markovapp}
 for the corresponding DTMCs): 
\begin{itemize}
    \item 
 The threshold $s>d$ and $\mu^*(1,1,0)=\mu^*(d,1,0)=1$ ($\lambda^*_1\in \{d+1,d+2,...\}$). The average age is derived as $f_{1}(s)/g_{1}(s)$.
   \item 
  The threshold $s>d$, $\mu^*(1,1,0)=1$ and $\mu^*(d,1,0)=2$ ($\lambda^*_1\in \{2,...,d\}$). Then, the average age is $f_{3}(s)/g_{3}(s)$, \textcolor{black}{which is shown in Appendix \ref{markov4app}.}
   \item 
 The threshold $s>d$ and $\mu^*(1,1,0)= \mu^*(d,1,0)=2$ ($\lambda^*_1\in \{2,...,d\}$) with average age $f_{4}(s)/g_{4}(s)$, \textcolor{black}{which is shown in Appendix \ref{markov5app}}.
   \item 
  The threshold $s\le d$ and $\mu^*(1,1,0)=\mu^*(d,1,0)=1$ ($\lambda^*_1\in \{d+1,d+2,...\}$), with average age $f_{2}(s)/g_{2}(s)$.
   \item 
  The threshold  $s\le d$ and $\mu^*(d,1,0)=2$. Then, regardless of $ \mu^*(1,1,0)$ ($\lambda^*_1\in \{1,2,...,d\}$), the \textcolor{black}{DTMC corresponds} to always choosing $2$, with average age $(3/2)d-1/2$. 
\end{itemize} 

\noindent Then, we directly have the following result:

\begin{proposition}\label{propositionc}
If $(p,q,d)\in \textbf{B}_3$, then the optimal scheduling policy is 
\begin{align}
    \mu^*(\delta,0,0)=\left\{
\begin{array}{lll}
  1 & \text{if }~ \delta< \lambda_0^*;\\
  2  & \text{if }~ \delta\ge\lambda_0^*, 
\end{array}
\right. \\
    \mu^*(\delta,1,0)=\left\{
\begin{array}{lll}
  1 & \text{if }~ \delta< \lambda_1^*;\\
  2  & \text{if }~ \delta\ge\lambda_1^*, 
\end{array}
\right. 
\end{align} 
where $\lambda_0^*$ and $\lambda_1^*$ are given by 
\begin{equation}\label{proposition2cc}
\!\!\!\!    \left\{
\begin{array}{lll}
  \lambda_0^*=\underset{s\in \{ d+1,\ldots\} }{\argmin} f_1(s)/g_1(s), \lambda_1^*\in \{d+1,\ldots 
\}  \   \text{if~} \bar{\Delta}_{\text{opt}} =  \beta_{1}',\\
  \lambda_0^*=\underset{s\in \{ 2,\ldots, d\} }{\argmin} f_2(s)/g_2(s), \  \lambda_1^*\in \{d+1,\ldots 
\}   \    \text{if~} \bar{\Delta}_{\text{opt}} =  \beta_{2}',\\
  \lambda_0^*=\underset{s\in \{ d+1,\ldots\} }{\argmin} f_3(s)/g_3(s), \lambda_1^*\in \{2,\ldots,d 
\} \ \ \   \text{if~} \bar{\Delta}_{\text{opt}} =  \beta_{3}',\\
 \lambda_0^*=\underset{s\in \{ d+1,\ldots\} }{\argmin} f_4(s)/g_4(s),  \lambda_1^*\in \{2,\ldots,d 
\} \ \ \   \text{if~}\bar{\Delta}_{\text{opt}} =  \beta_{4}',\\ \lambda_0^*=1, \ \ \  \lambda_1^*\in \{1,2,\ldots,d\} \ \ \  \text{if~} \bar{\Delta}_{\text{opt}} =  (3/2)d-1/2, 
\end{array}
\right. \!\!\!\!
\end{equation}
 $\bar{\Delta}_{\text{opt}}$ is the optimal objective value of \eqref{avg}, determined by
\begin{equation}
   \bar{\Delta}_{\text{opt}} =  \min \Big{\{}   \beta_{1}', \beta_{2}', \beta_{3}', \beta_{4}', \frac{3}{2}d-\frac{1}{2} \Big{\} }.
\end{equation}
\end{proposition}

\textcolor{black}{According to \eqref{final}, \eqref{final1} and \eqref{final2}, Theorem \ref{theorem2c} is shown directly from Proposition \ref{propositionc}.}

\subsubsection{Proof of Theorem \ref{theorem2d}}

For $(p,q,d)\in \textbf{B}_4 $, $\mu^*(\delta,1,0) $ is non-decreasing in $\delta$ from Theorem \ref{theorem2}. Also, $\mu^*(\delta,0,0)=1$ by Lemma \ref{mu00in1and4}.

If $\mu^*(1,1,0)=1$, the policy becomes always choosing Channel $1$ \textcolor{black}{(since $(d,1,0)$ is not reached at any time slot with probability $1$)}. If $\mu^*(1,1,0)=2$, then $\mu^*(\delta,1,0)=2$ for all $\delta$.   
Thus, the solution to the optimal threshold-type policy when $(p,q,d)\in \textbf{B}_4 $ may contain two possible steady-state DTMCs which directly gives Theorem \ref{theorem2d}: \begin{itemize}
    \item The optimal decision $\mu^*(\delta,0,0)=1$ \textcolor{black}{for all $\delta\geq1$} and $\mu^*(1,1,0)=1$. Then, the optimal policy is always choosing Channel $1$. The average age of always choosing Channel $1$ is $((1-q)(2-p)+(1-p)^2)/((2-q-p)(1-p))$ as in \eqref{always1}.
    \item The optimal decision $\mu^*(\delta,0,0)=1$ and  $\mu^*(\delta,1,0)=2$ for all \textcolor{black}{$\delta\geq1$}. See
Appendix \ref{markovapp}
for the corresponding DTMC. \textcolor{black}{The average age by analyzing the steady-state distribution of this DTMC is $f'_0/g'_0$ which is shown in Appendix \ref{markov6app}.}\end{itemize}

Therefore, the listed items directly proves Theorem \ref{theorem2d}.

From our analysis in Section \ref{proof3}, we have the following conclusion for the proof of Theorem \ref{theorem2a}-\ref{theorem2d}: (i) If $(p,q,d)\in \textbf{B}_1$, the optimal decision is always choosing Channel $1$; 
\textcolor{black}{(ii) If $(p,q,d)\in \textbf{B}_2$, $ \textbf{B}_3$ or $\textbf{B}_4$,
there are a couple of possible cases ($4$ cases for $(p,q,d)\in \textbf{B}_2$, $5$ cases for $(p,q,d)\in \textbf{B}_3$ and $2$ cases for $(p,q,d)\in \textbf{B}_4$, respectively). Each case corresponds to analyzing the steady-state distribution of a single DTMC or a collection of DTMCs over the threshold $s$; in the latter case, the optimal threshold can be computed efficiently using bisection search.     
 The optimal objective value in \eqref{avg} is the minimum of the derived ages in each cases and the optimal thresholds are determined by the case that achieves the minimum.} 

\appendices

    \begin{table}  \caption{Notations for $f_i(s),g_i(s),l_i,o_i$ $(i=1,2,3,4)$ in Definition \ref{defthm}  } 
\begin{center}  
\begin{tabular}{| p{2.5em} |l|}  
\hline  
Name & Expression  \\ \hline
 & $\begin{bmatrix} a'_d & b'_d \\ a_d & b_d \end{bmatrix}= \begin{bmatrix} q & 1-q \\ 1-p & p \end{bmatrix}^d$ \\
 \hline
$f_0$ & $q\sum_{i=1}^{d}i+(1-q)\sum_{i=d+1}^{2d}i+(\frac{b'_d q+b_d}{1-b_d}+1)\sum_{i=d+1}^{2d}i$  \\
$g_0$ & $\frac{b'_d q+b_d}{1-b_d}d+d+1$\\
\hline
$f'_0$ & $\sum_{i=1}^{d}i+a'_d/b'_d\times \sum_{i=d}^{2d-1}i+\sum_{i=d}^{\infty}ip^{i-d}$  \\
$g'_0$ & $d/b'_d+1/(1-p)$\\
\hline
$c_1(s)$ & $ 1-b_d p^{s-d}-(1-q)a_d p^{s-d-1} $\\
 $f_{1}(s)$     & $c_1(s) (p/(1-q)+\sum_{i=2}^{d}i p^{i-1})+d p^d$\\ & $+\sum_{i=d+1}^{s-1}ip^{i-1}+\sum_{i=s}^{s+d-1}ip^{s-1}$\\ \hline
 $g_{1}(s)$ & $c_1(s) (p/(1-q)+\sum_{i=2}^{d} p^{i-1})+p^{d}$ \\  &  $+\sum_{i=d+1}^{s-1}p^{i-1}+\sum_{i=s}^{s+d-1}p^{s-1}$ \\ \hline 
 $c_2$ & $b_d/(a_d q)$ \\ 
$f_{2}(s)$  &  $(p/(1-q)+\sum_{i=2}^{s}i p^{i-1})+\sum_{i=s+1}^{s+d-1}i p^{s-1}$ \\ & $+c_2\sum_{i=d-1}^{2d-1}ip^{s-1} +(\sum_{i=d+1}^{2d}(1-q)i+d)p^{s-1}/q$ \\
\hline
$g_{2}(s)$ & $(p/(1-q)+\sum_{i=2}^{s}p^{i-1})+(d-1)p^{s-1}+c_2 d p^{s-1}$\\ & $+(d(1-q)+1)p^{s-1}/q$   \\
      \hline  
$f_{3}(s)$ & $(1-p^{s-d})(p/(1-q)+\sum_{i=2}^{d-1}ip^{i-1})$ \\
& $+\sum_{i=d}^{s}ip^{i-1} +\frac{a_d}{b'_d}p^{s-1}\sum_{i=d}^{2d-1}i+\sum_{i=s+1}^{s+d-1}ip^{s-1}$\\  \hline
$g_{3}(s)$ & $(1-p^{s-d})(p/(1-q)+\sum_{i=2}^{d-1}p^{i-1})$\\ & $+\sum_{i=d}^{s}p^{i-1}+ \frac{a_d}{b'_d}p^{s-1}d+(d-1)p^{s-1}$\\  \hline 
$c_4(s)$ & $(a'_d+(a_d-a'_d)p^{s-d})/(b'_d)$\\
$f_{4}(s)$ &  $(1-p^{s-d}) \sum_{i=1}^{d} i +c_4(s) \sum_{i=d}^{2d-1} i $\\ & $+\sum_{i=s+1}^{s+d-1} ip^{s-d}+\sum_{i=d}^{s}ip^{i-d}$  \\\hline
$g_{4}(s)$ &   $(1-p^{s-d})d+c_4(s)d+(d-1)p^{s-d}$\\ & $+\sum_{i=d}^{s}p^{i-d}$\\ \hline
$l'_{1}$ & $-p^{-d+1}(b_d+(1-q)a_d)(1-p)(p/(1-q)+\sum_{i=2}^{d}ip^{i-1})$\\ & $+d(p-(1-p)(d-1)/2)$ \\
 $o_{1}$ & $-p^{-d+1}(b_d+(1-q)a_d)(1-p)(p/(1-q)+\sum_{i=2}^{d}p^{i-1})$\\ & $+1-(1-p)d$ \\
 \hline
 $l'_{2}$ & $-c_2(1-p) \sum_{i=d-1}^{2d-1}-(\sum_{i=d+1}^{2d}(1-q)i+d+1)(1-p)/q$ \\ & $+ d(p-(1-p)(d-1)/2)$\\
 $o_{2}$ & $p-(1-p)(1+d(1-q))/q$ \\ & $-(1-p)d(1+c_2)$ \\
 \hline $l'_{3}$ & $(p/(1-q)+\sum_{i=2}^{d-1} i p^{i-d}(1-p)) - \sum_{i=d}^{2d-1}i(1-p)a_d/b'_d$\\& $+(d-1)(p-d(1-p)/2)$\\
 $o_{3}$ & $(p/(1-q)+\sum_{i=2}^{d-1}  p^{i-d}(1-p))$ \\ & $+1-(1-p)(d-1+d a_d/b'_d)$ \\ \hline $p^{d-1} l'_{4}$ & $-(1-p)\sum_{i=1}^{d-1} i -\sum_{i=d}^{2d-1}(1-p)/(1-a_d)$\\ & $+(d-1)(p-d(1-p)/2)$\\
 $p^{d-1} o_{4}$ & $-(1-p)d-(1-p)d/(1-a_d)-(d-1)(1-p)+1$ \\ \hline
\end{tabular}  
\end{center} 
\label{fg}
\end{table}

\section{Proof of Lemma \ref{zerowait}}\label{zerowaitapp}

Suppose that the age at initial time $0$ is the same for any policy.
For any given policy $\pi\in \Pi$, we construct a policy $\pi'$: whenever both channels are idle and $\pi$ chooses \emph{none}, $\pi'$ chooses Channel $1$, and at other time $\pi$ and $\pi'$ are the same.
The equivalent expression of $\pi'$ is given as follows:   
\begin{equation}
    \pi'(t) = \left\{
\begin{array}{lll}
  1 & \text{if } l_2(t)=0 \text{ and } \pi(t)=\emph{none}  ;\\
  \pi(t)  & \text{else.} 
\end{array}
\right. 
\end{equation}
The policy $\pi$ and $\pi'$ are coupled given a sample path of Channel $1$: $\mathcal{I}=\{l_1(0), l_1(1),\ldots \}$. For any $\mathcal{I}$, we want to show that the age of policy $\pi'$ is smaller or equal to that of $\pi$.

 For simplicity, we use $\Delta_{\pi}(t)$ and $l_{2}(t)$ to be the age and the state of Channel $2$, respectively, with a policy $\pi$ and $\mathcal{I}$. Compared with $\pi$, $\pi'$ only replaces \emph{none} by $1$. Thus, the state of Channel $2$ of $\pi'$ is still $l_2(t)$. 
 
 Then, we will show that for all time $t$ and any $\mathcal{I}$, the age $\Delta_{\pi'}(t)\le \Delta_{\pi}(t)$. 
 We prove by using induction. 
 
If $t=0$, then according to our assumption, the hypothesis trivially holds.

Suppose that the hypothesis holds for $t=k$. We will show for $t=k+1$.
We divide the proof into two different conditions: 
(i) If $l_{2}(k)>0$, then $\pi(k)=\pi'(k)=\emph{none}$. Thus,
\begin{equation}\begin{split}
   & \Delta_{\pi}(k+1)= \left\{
\begin{array}{lll}
   \Delta_{\pi}(k)+1 & \text{if } l_2(k)\ge 2 ;\\
  d  & \text{if } l_2(k)= 1, 
\end{array}
\right.  \\ &  \Delta_{\pi'}(k+1)= \left\{
\begin{array}{lll}
   \Delta_{\pi'}(k)+1 & \text{if } l_2(k)\ge 2 ;\\
  d  & \text{if } l_2(k)= 1. 
\end{array}
\right. \end{split}
\end{equation}
Thus, $\Delta_{\pi'}(k+1)\le \Delta_{\pi}(k+1)$. 

\noindent (ii) If $l_{2}(k)=0$, then $\pi(k)$ may take \emph{none}, $1$, or $2$. If $\pi(k)=1$ or $2$, then $\pi'(k)=\pi(k)$. Thus, the hypothesis directly gives $\Delta_{\pi'}(k+1)\le \Delta_{\pi}(k+1)$. 
If $\pi(k)=\emph{none}$, then $\pi'(k)=1$. Then,
\begin{equation}
    \begin{split}
  & \Delta_{\pi'}(k+1)\le \Delta_{\pi'}(k)+1, \\
  & \Delta_{\pi}(k+1)= \Delta_{\pi}(k)+1. 
    \end{split}
\end{equation}
Thus, $\Delta_{\pi'}(k+1)\le \Delta_{\pi}(k+1)$.
From (i) and (ii), we complete the proof of induction. 


\section{Proof of Lemma \ref{concave}}\label{concaveapp}
Similar techniques were also used recently in \cite{ornee2019samplings}. 

(1) According to Lemma \ref{fracsolution}, the function $h_{i}(\beta)$ in \eqref{beta} also satisfies
 \begin{align}
h_{i}(\beta) & =\min_{s\in \{ d+1,...\}} f_{i}(s)-\beta g_{i}(s), \ i\in \{1,3,4\}, \\  
h_2(\beta) & = \min_{s\in \{ 1,...d \}}  f_{2}(s)-\beta g_{2}(s), 
\end{align}
The function $f_{i}(s)-\beta g_{i}(s)$ is linearly decreasing, which is concave and continuous. Since the minimization preserves the concavity and continuity, $h_i(\beta)$ is still concave. From Table \ref{fg}, it is easy to show that there exists a positive $d'$ such that $f_{i}(s)>d'$ and $g_{i}(s)>d'$ for all $i\in \{1,2,3,4\}$. So, for all $s$ and any $\beta_1<\beta_2$, $f_{i}(s)-\beta_1 g_{i}(s)>f_{i}(s)-\beta_2 g_{i}(s)$. Thus, $h_{i}(\beta)$ is strictly decreasing. 

(2) Since $f_{i}(s)>d'$ and $g_{i}(s)>d'$, so $h_{i}(0)>0$. Moreover, since $h_i(\beta)$ is strictly decreasing, we have $\lim_{\beta\rightarrow \infty}h_{i}(\beta) =-\infty$. 

\section{proof of Lemma \ref{finite}}\label{app1}

Consider the policy that always idles at every time slot (i.e., $\pi(t)=\emph{none}$ for all $t$). Under this policy, the age increases linearly with time. The discounted cost under the aforementioned policy acts as an upper bound on the optimal value function $J^\alpha(\textbf{s})$. Thus, for any initial state $s=(\delta,l_1,l_2)$, $J^\alpha(\textbf{s})$ satisfies  
\begin{equation}
    J^\alpha(\textbf{s}) \le \delta+ \alpha(\delta+1)+ \alpha^2(\delta+2)... 
     =\frac{(\delta+\frac{\alpha}{1-\alpha})}{1-\alpha}<\infty, \label{finiteproof}
\end{equation} which proves the result.

\section{proof of Lemma \ref{monotonicity}}\label{monotonicityapp}
We show Lemma \ref{monotonicity} by using induction in value iteration \eqref{value-iterate}. We want to show that $J^\alpha_n(\textbf{s})=J^\alpha_n(\delta,l_1,l_2)$ is increasing in age $\delta$ for all iteration number $n$.

If $n=0$, $J^\alpha_0 (\delta,l_1,l_2)=0$, so the hypothesis holds.  
Suppose the hypothesis holds for $n=k$, then we will show that it also holds for $n=k+1$.  
First, note that in \eqref{value-iterate}, the immediate cost of any state $\textbf{s}=(\delta,l_1,l_2)$ is $\delta$, which is increasing in age. Second, by our hypothesis and the evolution of age in Section \ref{systemod} , $\sum_{\textbf{s}'\in \textbf{S}}P_{\textbf{s}\textbf{s}'}(u)J^\alpha_k(\textbf{s}')$ is increasing in age $\delta$. Thus, $Q^\alpha_{k+1}(\textbf{s},u)$ is increasing in age $\delta$. Thus, $J^\alpha_{k+1}(\textbf{s})$ is increasing in age $\delta$ and we have completed the induction.  



\section{Proof of Lemma \ref{theorem1}} \label{proof1}
Similar techniques were also used recently in \cite{hsu2019scheduling}.

According to \cite{sennott1989average} and Lemma \ref{finite}, it is sufficient to show that Problem \eqref{avg} satisfies the following two conditions: 

(a) There exists a non-negative function $M({\textbf{s}})$ such that $h^\alpha (\textbf{s})\le M({\textbf{s}}) $ for all $\textbf{s}$ and $\alpha$, where the relative function $h^\alpha (\textbf{s}) = J^\alpha(\textbf{s})-J^\alpha(1,1,0)$. 

(b) There exists a non-negative $N$ such that $-N\le h^\alpha(\textbf{s})$ for all the state $\textbf{s}$ and $\alpha$. 

For (a), we first consider a stationary deterministic policy $f$ that always chooses Channel $1$. The states $(1,1,0)$, $(\delta,0,0)$ ($\delta\ge 2$) are referred as \emph{recurrent} states. The remaining states in the state space are referred as \emph{transient} states. 
Define $e(\textbf{s}_1,\textbf{s}_2)$ to be the average cost of the first passage from $\textbf{s}_1$ to $\textbf{s}_2$ under the policy $f$ where $\textbf{s}_1$ and $\textbf{s}_2$ are recurrent states. The recurrent states of $f$ form an aperiodic, recurrent and irreducible Markov chain. So, from Proposition $4$ in \cite{sennott1989average}, for any recurrent state $\textbf{s}'$, $e({\textbf{s}',\textbf{s}_0})$ is finite (where $\textbf{s}_0=(1,1,0)$).

Now, we pick $M({\textbf{s}})$. We let $M({\textbf{s}})=e({\textbf{s}',\textbf{s}_0})+d(\delta+d-1)$ for the transient state $\textbf{s}$, and let $M({\textbf{s}})=e({\textbf{s},\textbf{s}_0})$ for the recurrent state $\textbf{s}$. 
Then, replacing "$BE_\theta (T)$" by $M(\textbf{s})$ in the proof of Proposition $1$ in \cite{sennott1986new}, we have $h^\alpha (\textbf{s})\le M({\textbf{s}})$ for all the state $\textbf{s}$. Overall, there exists $M({\textbf{s}})$ such that $h^\alpha (\textbf{s})\le M({\textbf{s}})$.

We start to show (b). According to Lemma \ref{monotonicity}, the value function is increasing in age. Thus, we only need to show that there exists $N$ such that $-N\le h^\alpha(1,l_1,l_2)$ for all $l_1$ and $l_2$. In order to prove this, we will show that there exists $E'(1,l_1,l_2)$ such that $-E'(1,l_1,l_2)\le h^\alpha(1,l_1,l_2)$ for all $l_1$ and $l_2$. Thus, we take $N =\max_{l_1\in \{ 0,1\},l_2\in \{ 0,1,...,d-1 \}}E'(1,l_1,l_2)$, which is still finite, and condition (b) is shown. 

Now, we start to find out $E'(1,l_1,l_2)$. 

We split the states $(1,l_1,l_2)$ into three different cases. 

$(b1)$ If $l_1=1$ and $l_2=0$, then $h^\alpha(1,l_1,l_2)=0$. Thus, we take $E'(1,l_1,l_2)=0$.

$(b2)$ If $l_1=0$ and $l_2=0$, then we take $T=1$ if the optimal decision of $(1,0,0)$ is $1$ and take $T=d+1$ if the optimal decision is $2$. Therefore, the definition of $J^\alpha (1,0,0)$ tells that there exists $T\in \{1,d+1\}$ and $a\in [0,1]$ such that 
\begin{equation}\begin{split}
   & J^\alpha(1,0,0)\\  \ge & \sum_{i=1}^{T} i\alpha^{i-1}+\alpha^{T} \left( a J^\alpha(1,1,0)+(1-a) J^\alpha(1,0,0) \right)\\  \ge & \alpha^{T}(a J^\alpha(1,1,0)+(1-a) J^\alpha(1,0,0)) \\
     = & \alpha^{T} \Big{(}a J^\alpha(1,1,0)+(1-a) (h^\alpha(1,0,0)+J^\alpha(1,1,0))\Big{)}\\
 = & \alpha^T J^\alpha(1,1,0) + \alpha^T (1-a) h^\alpha(1,0,0)     \\
  \ge & \alpha^T J^\alpha(1,1,0) + \alpha^T (1-a) h^\alpha(1,0,0). 
\end{split}\label{sennot1}
\end{equation}
Notice that $J^\alpha (1,1,0)$ is smaller or equal to the $\alpha$-discounted cost of always choosing Channel $1$ with initial state $(1,1,0)$. Consider a special case of MDP consisting of states $(1,1,0),(2,0,0),(3,0,0),\ldots,$ where there is only one policy that always chooses Channel $1$.  
Therefore, applying Lemma A2 in appendix of \cite{sennott1989average} to this MDP, $(1-\alpha)J^{\alpha}(1,1,0)$ is upper bounded by a constant $c'$ that is not a function of $\alpha$. Note that 
\begin{equation}
    1-\alpha^T=(1-\alpha)(1+\alpha+...+\alpha^{T-1}) \le (1-\alpha) T.
\end{equation}
Then from \eqref{sennot1},  we get
\begin{equation}
    \begin{split}
 & h^\alpha (1,0,0) \\ = & J^\alpha(1,0,0)- J^\alpha(1,1,0)\\
 \ge & -(1-\alpha^T)J^\alpha(1,1,0) + \alpha^T (1-a) h^\alpha(1,0,0) \\
  \ge & -Tc' +\alpha^T (1-a) h^\alpha(1,0,0). \\      
    \end{split}
\end{equation} 
Therefore, 
\begin{equation}
h^\alpha (1,0,0) \ge -\frac{Tc'}{1-\alpha^T(1-a)} \triangleq E'(1,0,0).
\end{equation}

$(b3)$ If $l_2>0$, similar with $(b2)$, we take $T' = l_2$ and there exists $a'\in [0,1]$ such that 
\begin{align}
h^\alpha (1,l_1,l_2) & \ge -T' c' - \alpha^{T'} (1-a') \frac{Tc'}{1-\alpha^{T}(1-a)}\\ & \triangleq E'(1,l_1,l_2).
\end{align}
Therefore, we have found out all the values of $E'(1,l_1,l_2)$. 
Overall, by proving (a) and (b), we complete the proof of Lemma \ref{theorem1}.     


\section{proof of Lemma \ref{g2m}}\label{g2mapp}
 Recall that we use $l_1$ (which is $0$ or $1$) to denote the state of Channel $1$ and that \begin{equation}
     L^{\alpha}(\delta,l_1,2)=Q^{\alpha}(\delta,l_1,0,2)-Q^{\alpha}(\delta-1,l_1,0,2)
 \end{equation}.
We define the sequences $a_k$, $a'_k$, $b_k$, $b'_k$ with the non-negative index $k$ as
\begin{equation}
    [a_k,b_k] = [0,1]\times P^{k}, \ \ \  [a'_k,b'_k] = [1,0]\times P^{k}, \label{matrix1}
\end{equation} 
where $P$ is the transition probability matrix of Channel $1$, given by $\begin{bmatrix}
q & 1-q \\
1-p & p 
\end{bmatrix}$. 
Note that \eqref{matrix1} implies $a_k+b_k=a'_k+b'_k=1$ for all the index $k$.

 
 By using the Bellman equation \eqref{discounted-lemma1} iteratively, 
 $L^{\alpha}(\delta,0,2)$ and $L^{\alpha}(\delta,1,2)$ satisfy the following lemma: 
    \begin{lemma}\label{rhs2}
    The values $L^{\alpha}(\delta,0,2)$ and $L^{\alpha}(\delta,1,2)$ satisfy: 
     \begin{equation}\begin{split}  & L^{\alpha}(\delta,0,2)  = \sum_{i=0}^{d-2}\alpha^i \\ & +  \alpha^{d-1} a_{d-1} \Big{(}J^{\alpha}(\delta+d-1,1,1)-J^{\alpha}(\delta+d-2,1,1)\Big{)} \\ & +  \alpha^{d-1} b_{d-1}\Big{(}J^{\alpha}(\delta+d-1,0,1)-J^{\alpha}(\delta+d-2,0,1)\Big{)},\\
     & L^{\alpha}(\delta,1,2)  = \sum_{i=0}^{d-2}\alpha^i \\ & +  \alpha^{d-1} a'_{d-1} \Big{(}J^{\alpha}(\delta+d-1,1,1)-J^{\alpha}(\delta+d-2,1,1)\Big{)} \\ & +  \alpha^{d-1} b'_{d-1}\Big{(}J^{\alpha}(\delta+d-1,0,1)-J^{\alpha}(\delta+d-2,0,1)\Big{)},\label{equation02}\end{split}
\end{equation}
    where $a_{d-1}$, $a'_{d-1}$, $b_{d-1}$, $b'_{d-1}$ are defined in \eqref{matrix1}.
    \end{lemma}
 \begin{proof} Please see Appendix \ref{rhs2app} for details.
 \end{proof}  
 Note that the state of Channel $2$ represents the remaining transmission time of Channel $2$. From Lemma \ref{rhs2}, $L^{\alpha}(\delta,0,2)$, $L^{\alpha}(\delta,1,2)$ can be expressed by $J^{\alpha}(\delta+d-1,0,1),J^{\alpha}(\delta+d-1,1,1),J^{\alpha}(\delta+d-2,0,1)$ and $J^{\alpha}(\delta+d-2,1,1)$. Observe that $l_2=1$ in all of these terms. Thus, we can use \eqref{discounted-lemma1} to further expand these terms and prove $L^{\alpha}(\delta,l_1,2)=m$.
 

    Since the state of Channel $2$ is $1$, then at the next time slot, the state of Channel $2$ is $0$, and the age drops to $d$. So, from \eqref{discounted-lemma1}, for all age value $\delta_0>d-1$, we have 
\begin{equation}\label{equation04}\begin{split}
  & J^{\alpha}(\delta_0,0,1) \\ = & \delta_0 + \alpha p J^{\alpha}(d,0,0)+ \alpha (1-p) J^{\alpha}(d,1,0),  \\
  & J^{\alpha}(\delta_0,1,1) \\ = & \delta_0 + \alpha (1-q) J^{\alpha}(d,0,0)+ \alpha q J^{\alpha}(d,1,0).\end{split}\end{equation} 
  Then, we replace $\delta_0$ by $\delta+d-1$, and $\delta+d-2$ in \eqref{equation04}.
Recall that $[a_{d},b_{d}]=[a_{d-1},b_{d-1}]P$, $[a'_{d},b'_{d}]=[a'_{d-1},b'_{d-1}]P$ and $a_{d-1}+b_{d-1}=a'_{d-1}+b'_{d-1}=1$. Then \eqref{equation02} becomes
\begin{equation}\begin{split}
    L^{\alpha}(\delta,0,2)
    =& \sum_{i=0}^{d-2}\alpha^i + \alpha^{d-1} (a_{d-1}+b_{d-1}) \\ +&\alpha^{d}a_{d}  \Big{(}J^{\alpha}(d,1,0)-J^{\alpha}(d,1,0)\Big{)} \\ +&\alpha^{d}b_{d}\Big{(}J^{\alpha}(d,0,0)-J^{\alpha}(d,0,0)\Big{)}\\ =& \sum_{i=0}^{d-1}\alpha^i \triangleq m. \label{A002}
    \end{split}
\end{equation} Also, 
\begin{equation}\begin{split}\label{A0002}
    L^{\alpha}(\delta,1,2)
    =& \sum_{i=0}^{d-2}\alpha^i + \alpha^{d-1} (a'_{d-1}+b'_{d-1})\\ +&\alpha^{d}a'_{d}  \Big{(}J^{\alpha}(d,1,0)-J^{\alpha}(d,1,0)\Big{)} \\ +&\alpha^{d}b'_{d}\Big{(}J^{\alpha}(d,0,0)-J^{\alpha}(d,0,0)\Big{)}\\ =& \sum_{i=0}^{d-1}\alpha^i \triangleq m. 
    \end{split}
\end{equation}

\section{proof of Lemma \ref{rhs2}}\label{rhs2app}

 We show Lemma \ref{rhs2} by using recursion. The state $(\delta,0,0)$ has a probability of $p$ to increase to $(\delta+1,0,d-1)$, and a probability of $1-p$ to $(\delta+1,1,d-1)$. Thus, \eqref{discounted-lemma1} implies   \begin{equation}\begin{split}\label{qdelta02}
 & Q^{\alpha}(\delta,0,0,2) \\  = & \delta+\alpha p J^{\alpha}(\delta+1,0,d-1)+\alpha (1-p)J^{\alpha}(\delta+1,1,d-1),\end{split}\end{equation}  thus,
 \begin{equation}\begin{split}  L^{\alpha}(\delta,0,2)&= 1 +\alpha p\Big{(}J^{\alpha}(\delta+1,0,d-1)-J^{\alpha}(\delta,0,d-1)\Big{)} \\  +& \ \alpha (1-p)\Big{(}J^{\alpha}(\delta+1,1,d-1)-J^{\alpha}(\delta,1,d-1)\Big{)}.\label{delta02}\end{split}
\end{equation}
 Using similar idea when $l_1=1$, \begin{equation}\begin{split}
 & Q^{\alpha}(\delta,1,0,2) \\  = & \delta+\alpha (1-q)J^{\alpha}(\delta+1,0,d-1)+\alpha q J^{\alpha}(\delta+1,1,d-1),\end{split}\end{equation} 
  Thus, \begin{equation}\begin{split}  & L^{\alpha}(\delta,1,2)\\ = & 1+\alpha (1-q)\Big{(}J^{\alpha}(\delta+1,0,d-1)-J^{\alpha}(\delta,0,d-1)\Big{)} \\ & + \ \alpha q\Big{(}J^{\alpha}(\delta+1,1,d-1)-J^{\alpha}(\delta,1,d-1)\Big{)}.\label{delta12}\end{split}
\end{equation} Observe that, from \eqref{delta02} and \eqref{delta12}, we can express $L^{\alpha}(\delta,l_1,2)$ in terms of $J^{\alpha}(\delta+1,l_1,d-1)$ and $J^{\alpha}(\delta,l_1,d-1)$. Also, the optimal decision is \emph{none} when $l_2>0$. Then, we can iteratively expand $J^{\alpha}(\delta+1,l_1,d-1)$ and $J^{\alpha}(\delta,l_1,d-1)$ using \eqref{discounted-lemma1}. For all the age $\delta_0$:   
\begin{equation}\label{equation06}\begin{split}
  & J^{\alpha}(\delta_0,0,d-1) \\ = & \delta_0 + \alpha p J^{\alpha}(\delta_0+1,0,d-2)+ \alpha (1-p) J^{\alpha}(\delta_0+1,1,d-2),  \\
  & J^{\alpha}(\delta_0,1,d-1) \\ = & \delta_0 + \alpha (1-q) J^{\alpha}(\delta_0+1,0,d-2)+ \alpha q J^{\alpha}(\delta_0+1,1,d-2).\end{split}\end{equation}
Applying \eqref{equation06} into \eqref{delta02} and \eqref{delta12}:
     \begin{equation}\begin{split}   & L^{\alpha}(\delta,0,2)\\  = & 1 + \alpha (a_1+b_1) +\alpha^{2} (q a_1+(1-p)b_1) \\ &  \Big{(}J^{\alpha}(\delta+2,1,d-2)-J^{\alpha}(\delta+1,1,d-2)\Big{)} \\ & +\alpha^{2} (p b_1+(1-q)a_1) \\ &  \Big{(}J^{\alpha}(\delta+2,0,d-2)-J^{\alpha}(\delta+1,0,d-2)\Big{)},\\
     & L^{\alpha}(\delta,1,2) \\ =  & 1+\alpha (a'_{1}+b'_{1})+\alpha^{2} (q a'_{1}+(1-p)b'_{1})\\ & \Big{(}J^{\alpha}(\delta+2,1,d-2)-J^{\alpha}(\delta+1,1,d-2)\Big{)} \\ & +\alpha^{2} (p b'_{1}+(1-q) a'_{1}) \\ & \Big{(}J^{\alpha}(\delta+2,0,d-2)-J^{\alpha}(\delta+1,0,d-2)\Big{)}, \label{equation03}\end{split}
\end{equation} where $a_1=1-p, b_1=p,a'_1=q,$ and $b'_1=1-q$.

   
From \eqref{matrix1}, we have 
\begin{equation}\label{matrixx}
    \begin{split}
&  qa_{1}+(1-p)b_{1}=a_{2},\ pb_{1}+(1-q)a_{1}=b_{2},\\  & qa'_{1}+(1-p)b'_{1}=a'_{2},\  pb'_{1}+(1-q)a'_{1}=b'_{2},\\ & a_{1}+b_{1}=a'_{1}+b_{1}'=1.     
    \end{split}
\end{equation} Applying \eqref{matrixx} in \eqref{equation03}, we get
  \begin{equation}\begin{split}  & L^{\alpha}(\delta,0,2)  = \sum_{i=0}^{1}\alpha^i \\&+\alpha^2 a_2 \Big{(}J^{\alpha}(\delta+2,1,d-2)-J^{\alpha}(\delta+1,1,d-2)\Big{)} \\ & +\alpha^2 b_2\Big{(}J^{\alpha}(\delta+2,0,d-2)-J^{\alpha}(\delta+1,0,d-2)\Big{)},\\
     & L^{\alpha}(\delta,1,2)  = \sum_{i=0}^{1}\alpha^i \\ &+\alpha^2 a'_2 \Big{(}J^{\alpha}(\delta+2,1,d-2)-J^{\alpha}(\delta+1,1,d-2)\Big{)} \\ & +\alpha^2 b'_2\Big{(}J^{\alpha}(\delta+2,0,d-2)-J^{\alpha}(\delta+1,0,d-2)\Big{)}.\label{equation022}\end{split}
\end{equation}
We use \eqref{equation022} iteratively for $d-3$ times, and we get \eqref{equation02} in Lemma \ref{rhs2} (note that if $d=2$, we have proved \eqref{equation02} in \eqref{delta02} and \eqref{delta12}).

\section{proof of Lemma \ref{l0m}}\label{l0mapp}

Frist of all, we observe that $\textbf{B}_1(\alpha)\cup \textbf{B}_4(\alpha)$ implies that $\sum_{i=0}^{\infty}(\alpha p)^i-m \le 0$, while $\textbf{B}_2(\alpha)\cup \textbf{B}_3(\alpha)$ implies that $\sum_{i=0}^{\infty}(\alpha p)^i-m > 0$. Thus, we will need the following lemma:  
\begin{lemma} For any real number $m'$ that satisfies $m'\lesseqqgtr \sum_{i=0}^{\infty}(\alpha p)^i$, we have $m' \lesseqqgtr \sum_{i=0}^{k-1}(\alpha p)^i+(\alpha p)^k m'$ for all $k\in \{0,1,2...\}$.
\label{smalllemma}
\end{lemma} 
\begin{proof} Please see Appendix \ref{smalllemmaapp} for details.  
\end{proof}

Next, we need to know an alternative expression of $L^{\alpha}(\delta,0,1)$. 
\begin{equation}\label{A01}\begin{split}
    &Q^{\alpha}(\delta,0,0,1)\\=& \delta+\alpha pJ^{\alpha}(\delta+1,0,0)+\alpha (1-p)J^{\alpha}(1,1,0), \end{split}\end{equation} Thus,\begin{equation}\begin{split}
    & L^{\alpha}(\delta,0,1)= 1+\alpha p\Big{(}J^{\alpha}(\delta+1,0,0)-J^{\alpha}(\delta,0,0)\Big{)} \label{A00}\end{split}.\end{equation}
    Now, we start to prove Lemma \ref{l0m}. From Lemma \ref{g2m}, it is sufficient to show that:
    
    (a) If $(p,q,d)\in \textbf{B}_1(\alpha)\cup \textbf{B}_4(\alpha)$, then $L^{\alpha}(\delta,0,1)\le m$ for $\delta\ge 2$.

(b) If $(p,q,d)\in \textbf{B}_2(\alpha)\cup \textbf{B}_3(\alpha)$, then $L^{\alpha}(\delta,0,1)> m$ for $\delta\ge 2$.

(a) If $(p,q,d)\in \textbf{B}_1(\alpha)\cup \textbf{B}_4(\alpha)$, then the function $F(p,q,d,\alpha)\le 0$ i.e., $m\ge \sum_{i=0}^{\infty} (\alpha p)^i$. We want to show that $L^{\alpha}(\delta,0,1)\le m$. 

Suppose that $u_\delta$ is the optimal decision of state $(\delta,0,0)$, i.e., the value function $J^{\alpha}(\delta,0,0)=Q^{\alpha}(\delta,0,0,u_\delta)$.
For all given $\delta$, 
\begin{equation}
    \begin{split}
    &   J^{\alpha}(\delta+1,0,0)-J^{\alpha}(\delta,0,0)\\ =& Q^{\alpha}(\delta+1,0,0,u_{\delta+1})-Q^{\alpha}(\delta,0,0,u_\delta)\\
    =& \underbrace{Q^{\alpha}(\delta+1,0,0,u_{\delta+1}) - Q^{\alpha}(\delta+1,0,0,u_\delta)}_{\le 0 (\text{, by optimality})}\\ & +  Q^{\alpha}(\delta+1,0,0,u_{\delta}) - Q^{\alpha}(\delta,0,0,u_{\delta})\\
    \le & Q^{\alpha}(\delta+1,0,0,u_{\delta}) - Q^{\alpha}(\delta,0,0,u_\delta)=L^{\alpha}(\delta+1,0,u_\delta).
    \end{split}\label{l01}
\end{equation} Thus, \eqref{A00} and \eqref{l01} gives
\begin{equation}
    L^{\alpha}(\delta,0,1)\le 1+\alpha p L^{\alpha}(\delta+1,0,u_\delta).\label{l011}
\end{equation}
Given age $\delta_0$, there are two possible cases for the optimal decision when $\delta>\delta_0$.  

Case (a$1$) For some non-negative integer $l$, we have $u_{\delta_0}=u_{\delta_0+1}=...=u_{\delta_0+l-1}=1$ and $u_{\delta_0+l}=2$.


In this case, if $l=0$, then $u_{\delta_0}=2$. From Lemma \ref{g2m}, we get $L^{\alpha}(\delta_0+1,0,2)=m$. Also, $(p,q,d)\in \textbf{B}_1(\alpha)\cup \textbf{B}_4(\alpha)$ implies that $m\ge \sum_{i=0}^{\infty} (\alpha p)^i$. From Lemma \ref{smalllemma}, if $m\ge \sum_{i=0}^{\infty} (\alpha p)^i$, then we have $1+(\alpha p)m\le m$. Combining these with \eqref{l011}, we get \begin{equation}
  L^{\alpha}(\delta_0,0,1)\le 1+(\alpha p)m\le m.   
\end{equation}
If $l>0$, then $u_{\delta_0}=...=u_{\delta_0+l-1}=1$. Thus, we can expand $L^{\alpha}(\delta_0+1,0,u_{\delta_0}),...L^{\alpha}(\delta_0+l,0,u_{\delta_0+l-1})$ iteratively using \eqref{l011} and get
\begin{equation}\begin{split}
 L^{\alpha}(\delta_0,0,1)\le \sum_{i=0}^{l}(\alpha p)^i +(\alpha p)^{l+1} L^{\alpha}(\delta_0+l+1,0,u_{\delta_0+l}).  \end{split}\label{proof4}  
\end{equation} 
Since $u_{\delta_0+l}=2$, Lemma \ref{g2m} implies that $L^{\alpha}(\delta_0+l+1,0,2)=m$. By Lemma \ref{smalllemma}, we get  \begin{equation}\begin{split}
& L^{\alpha}(\delta_0,0,1)  \le \sum_{i=0}^{l}(\alpha p)^i+(\alpha p)^{l+1} m\le m.  \end{split}\label{proof7}  
\end{equation} 

Case (a$2$) For all $l\ge0$, we have $u_{\delta_0+l}=1$. Then, we can use \eqref{l011} iteratively. Thus,
\eqref{proof4} holds for all the value $l$. 

Since the optimal decision $u_{\delta_0+l}=1$, we take \eqref{l011} into \eqref{proof4}, and get
\begin{equation}\label{proof5}\begin{split}
  &  \sum_{i=0}^{l}(\alpha p)^i +(\alpha p)^{l+1} L^{\alpha}(\delta_0+l+1,0,u_{\delta_0+l}) \\  \le & \sum_{i=0}^{l+1}(\alpha p)^i +(\alpha p)^{l+2} L^{\alpha}(\delta_0+l+2,0,u_{\delta_0+l+1}).\end{split}
\end{equation}
Thus, the right hand side of \eqref{proof4} is an increasing sequence in $l$. Then in order to prove $L^{\alpha}(\delta_0,0,1)\le m$, we want to show that the supremum limit of the sequence over $l$ is less than or equal to $m$. To prove this, we will show that the tail term of \eqref{proof4}, which is $(\alpha p)^{l+1} L^{\alpha}(\delta_0+l+1,0,u_{\delta_0+l})$, vanishes.

 Lemma \ref{monotonicity} implies that the value function $J^{\alpha}(\delta,l_1,l_2)$ is increasing in $\delta$. Equation \eqref{finiteproof} in the proof of Lemma \ref{finite} gives $ J^{\alpha}(\delta,l_1,l_2) \le 
    (\delta+\alpha/(1-\alpha))/(1-\alpha)$, which is linear on the age $\delta$. Thus, we get    
\begin{equation}\begin{split}\label{proof11}
 0 &\le L^{\alpha}(\delta_0+l+1,0,1)\\ &= J^{\alpha}(\delta_0+l+1,0,0)-J^{\alpha}(\delta_0+l,0,0)\\ & \le J^{\alpha}(\delta_0+l+1,0,0) \le \frac{(\delta_0+l+1+\frac{\alpha}{1-\alpha})}{1-\alpha}. \end{split} 
\end{equation}
From \eqref{proof11} and $\alpha,p<1$, we get 
\begin{equation}\label{proof10}
    \lim_{l\rightarrow \infty} (\alpha p)^{l+1} L^{\alpha}(\delta_0+l+1,0,1)=0.
\end{equation}
 Thus, we give \begin{equation}\begin{split}
 L^{\alpha}(\delta_0,0,1) & \le   \limsup_{l\rightarrow \infty} \sum_{i=0}^{l}(\alpha p)^i +(\alpha p)^{l+1} L^{\alpha}(\delta_0+l+1,0,1) \\ & = \lim_{l\rightarrow \infty}\sum_{i=0}^{l}(\alpha p)^i = \sum_{i=0}^{\infty}(\alpha p)^i. \end{split} \label{proof9}
\end{equation}  Part (a) implies that $m\ge \sum_{i=0}^{\infty}(\alpha p)^i$.
Thus, \eqref{proof9} directly gives $L^{\alpha}(\delta_0,0,1)\le \sum_{i=0}^{\infty}(\alpha p)^i\le m$. In conclusion, for both cases (a$1$) and (a$2$), we have \begin{equation}
L^{\alpha}(\delta_0,0,1)\le m.  
\end{equation} 

(b) If $(p,q,d)\in \textbf{B}_2(\alpha)\cup \textbf{B}_3(\alpha)$, then $F(p,q,d,\alpha)>0$, i.e., $m< \sum_{i=0}^{\infty} (\alpha p)^i$. Thus, we want to show that $L^{\alpha}(\delta,0,1)>m$ for all age $\delta$.
The proof of (b) is similar to (a), by reversing the inequalities and a slight change of \eqref{101}. We use the same definition of $u_\delta$ in part (a), assuming that $J^{\alpha}(\delta,0,0)=Q^{\alpha}(\delta,0,0,u_\delta)$. We get
\begin{equation}\label{delta001}
    \begin{split}
    &   J^{\alpha}(\delta+1,0,0)-J^{\alpha}(\delta,0,0)\\ =& Q^{\alpha}(\delta+1,0,0,u_{\delta+1})-Q^{\alpha}(\delta,0,0,u_\delta)\\
    =& Q^{\alpha}(\delta+1,0,0,u_{\delta+1}) - Q^{\alpha}(\delta,0,0,u_{\delta+1})\\ & + \underbrace{Q^{\alpha}(\delta,0,0,u_{\delta+1}) - Q^{\alpha}(\delta,0,0,u_\delta)}_{\ge 0 \text{, by optimality}}\\
    \ge & Q^{\alpha}(\delta+1,0,0,u_{\delta+1}) - Q^{\alpha}(\delta,0,0,u_{\delta+1})\\ = & L^{\alpha}(\delta+1,0,u_{\delta+1}).
    \end{split}
\end{equation} From \eqref{delta001} and \eqref{A00}, we can directly get  
\begin{equation}\label{100}
    \begin{split}
        L^{\alpha}(\delta,0,1)\ge 1+\alpha p L^{\alpha}(\delta+1,0,u_{\delta+1}).
    \end{split}
\end{equation}
Like in part (a), we split part (b) into two different cases:

Case (b$1$) For some non-negative integer $l$, we have $u_{\delta_0+1}=...=u_{\delta_0+l}=1$ and $u_{\delta_0+l+1}=2$.
Similar to \eqref{proof4}, by the iteration in \eqref{100}, 
\begin{equation}\label{101}\begin{split}
 L^{\alpha}(\delta_0,0,1)\ge \sum_{i=0}^{l}(\alpha p)^i +(\alpha p)^{l+1} L^{\alpha}(\delta_0+l+1,0,u_{\delta_0+l+1}).  \end{split}
\end{equation} Using Lemma \ref{smalllemma} (b), and $u_{\delta_0+l+1}=2$, we can get $L^{\alpha}(\delta_0,0,1)\ge \sum_{i=0}^{l}(\alpha p)^i +(\alpha p)^{l+1} m > m$.

Case (b$2$) The optimal decision $u_{\delta_0+l+1}=1$ for all $l\ge 0$. Then \eqref{101} holds for all non-negative $l$. Also, similar to \eqref{proof5}, the right hand side of \eqref{101} is decreasing in $l$. Thus, $L^{\alpha}(\delta_0,0,1)$ is larger than the infimum limit of the sequence over $l$. From \eqref{proof10}, and $m<\sum_{i=0}^{\infty}(\alpha p)^i$, we get \begin{equation}\begin{split}
 L^{\alpha}(\delta_0,0,1) & \ge   \liminf_{l\rightarrow \infty} \sum_{i=0}^{l}(\alpha p)^i +(\alpha p)^{l+1} L^{\alpha}(\delta_0+l+1,0,1) \\ & = \lim_{l\rightarrow \infty}\sum_{i=0}^{l}(\alpha p)^i = \sum_{i=0}^{\infty}(\alpha p)^i>m. \end{split} 
\end{equation}   
 Thus, the proof of Lemma \ref{l0m} is complete.

\section{proof of Lemma \ref{smalllemma}}\label{smalllemmaapp} 

(a) If $m'> \sum_{i=0}^{\infty} (\alpha p)^i$, we will show that $m'> \sum_{i=0}^{k-1}(\alpha p)^i+(\alpha p)^k m'$ for all $k\in \{1,2...\}$. We prove by using induction.

Suppose that $k=1$. Since $m'> \sum_{i=0}^{\infty} (\alpha p)^i=1/(1-\alpha p)$, then $(1-\alpha p)m'\ge 1$, and we get $m'> 1+(\alpha p)m'$. So, the condition holds for $k=1$. 

Suppose that the condition holds for $k=n$, then we will show that it holds for $k=n+1$. Since we have shown that $m'> 1+(\alpha p) m'$, the hypothesis inequality becomes
\begin{equation}
  m' > \sum_{i=0}^{n-1}(\alpha p)^i+(\alpha p)^n (1+(\alpha p) m')=\sum_{i=0}^{n}(\alpha p)^i+(\alpha p)^{n+1} m'.  
\end{equation}
 Thus, the condition holds for $k=n+1$.  

(b) If $m'< \sum_{i=0}^{\infty} (\alpha p)^i$, the proof is same with that of (a) except replacing notation '$>$' by '$<$'.

(c) If $m'= \sum_{i=0}^{\infty} (\alpha p)^i$, then we have for all $k\in\{1,2,...\}$, 
\begin{equation}\begin{split}
  m' & =\sum_{i=0}^{k-1} (\alpha p)^i +  \sum_{i=k}^{\infty} (\alpha p)^i  
  = \sum_{i=0}^{k-1} (\alpha p)^i + (\alpha p)^{k} \sum_{i=0}^{\infty} (\alpha p)^i \\
 & = \sum_{i=0}^{k} (\alpha p)^i + (\alpha p)^{k+1} m'. \end{split}
\end{equation}

Thus, we complete the proof of Lemma \ref{smalllemma}.

\section{proof of Lemma \ref{l1m}}\label{l1mapp}


Lemma \ref{g2m} implies that: Showing that $L^{\alpha}(\delta,l_1,1)\le m$ for $\delta>s$ is sufficient to show that $Q^{\alpha}(\delta,l_1,0,u)$ is supermodular in $(\delta,u)$ for $\delta>s$. Conversely, showing that $L^{\alpha}(\delta,l_1,1)> m$ for $\delta>s$  is sufficient to show that $Q^{\alpha}(\delta,l_1,0,u)$ is supermodular in $(\delta,u)$ for $\delta>s$. Thus, it remains to prove the following statements:

($s1$) If $(p,q,d)\in \textbf{B}_1(\alpha)\cup \textbf{B}_2(\alpha)$, 
then there exists a positive integer $s$, such that $L^{\alpha}(\delta,1,1)\le m$ for $\delta > s$, and $\mu^{\alpha,*}(\delta,l_1,0)$ is constant for all $\delta \le s$. 

($s2$) If $(p,q,d)\in \textbf{B}_3(\alpha)\cup  \textbf{B}_4(\alpha)$, then there exists a positive integer $s$, such that $L^{\alpha}(\delta,1,1)>m$ for $\delta > s$, and $\mu^{\alpha,*}(\delta,l_1,0)$ is constant for all $\delta \le s$.

We first need to give three preliminary statements before the proof.  

$(1)$ We first need to give an expression of $L^{\alpha}(\delta,1,1)$.

 The state $(\delta,1,0)$ has a probability $q$ to decrease to state $(1,1,0)$ and a probability $1-q$ to be $(\delta+1,0,0)$. According to \eqref{discounted-lemma1}, we get 
\begin{equation}\label{A11}\begin{split}
    &Q^{\alpha}(\delta,1,0,1)\\=& \delta+\alpha (1-q) J^{\alpha}(\delta+1,0,0)+\alpha q J^{\alpha}(1,1,0), \end{split}\end{equation} Thus,
\begin{equation}
    \begin{split}
        L^{\alpha}(\delta,1,1) =1+\alpha (1-q)\Big{(}J^{\alpha}(\delta+1,0,0)-J^{\alpha}(\delta,0,0)\Big{)}.\label{A10}
    \end{split}
\end{equation} 

(2) We consider a special case when $J^{\alpha}(\delta+l,0,0)=Q^{\alpha}(\delta+l,0,0,1)$ for all non-negative $l$. Then, we have \begin{equation}
    J^{\alpha}(\delta+1,0,0)-J^{\alpha}(\delta,0,0)=L^{\alpha}(\delta+1,0,1)
\end{equation} . 
Recall that \begin{equation}\begin{split}
    & L^{\alpha}(\delta,0,1)= 1+\alpha p\Big{(}J^{\alpha}(\delta+1,0,0)-J^{\alpha}(\delta,0,0)\Big{)} \label{A00prime}\end{split}.\end{equation}
Then, \eqref{A00prime} gets
\begin{equation}\label{proof13}
L^{\alpha}(\delta,0,1)=1+\alpha p L^{\alpha}(\delta+1,0,1). 
\end{equation}
By iterating the \eqref{proof13} on $\delta+1,\delta+2,...$, we get for all non-negative $l$,  
\begin{equation}\label{proof13_1}
   L^{\alpha}(\delta,0,1) = \sum_{i=0}^{l-1} (\alpha p)^i + (\alpha p)^{l} L^{\alpha}(\delta+l+1,0,1). 
\end{equation} Equation \eqref{proof10} implies that $(\alpha p)^{l} L^{\alpha}(\delta+l+1,0,1)$ vanishes as $l$ goes to infinity. After taking the limit of $l$, our conclusion is that if $J^{\alpha}(\delta+l,0,0)=Q^{\alpha}(\delta+l,0,0,1)$ for all non-negative $l$, for all age $\delta$, \begin{equation}\label{proof09}
   L^{\alpha}(\delta,0,1) = \sum_{i=0}^{\infty} (\alpha p)^i. 
\end{equation}  
$(3)$ The threshold $s$ mentioned in Lemma \ref{l1m} depends on whether Channel $1$ is positive-correlated or negative-correlated. So, we will utilize Lemma \ref{lemma-switch} in Appendix \ref{lemma-switchapp}.


After introducing the three statements, we start our proof of Lemma \ref{l1m}. The proof is divided into four parts: (a), (b), (c) and (d). Parts (a) and (b) are dedicated to prove part ($s1$) that gives Lemma \ref{l1m} (a), and parts (c) and (d) are dedicated to prove part ($s2$) that gives Lemma \ref{l1m} (b).

(a) If $(p,q,d)\in \textbf{B}_1(\alpha) $, then we have $1+\alpha (1-q)\sum_{i=0}^{\infty} (\alpha p)^i \le m$ and $\sum_{i=0}^{\infty} (\alpha p)^i\le m$. Our objective is: there \emph{exists} a value $s$, such that the function $L^{\alpha}(\delta,1,1)\le m$ for $\delta>s$, and the optimal decisions $\mu^{\alpha,*}(\delta,1,0)$ is a constant for $\delta\le s$.
The choice of $s$ depends on two cases: $p+q\ge 1$ or $p+q< 1$. If $p+q\ge 1$, we will take $s=1$. If $p+q< 1$, We will take $s$ to be the threshold of $\mu^{\alpha,*}(\delta,0,0)$.  

Case (a1) Suppose that $p+q\ge 1$. Thus, by comparing \eqref{A00} with \eqref{A10}, we get 
$L^{\alpha}(\delta,1,1)\le L^{\alpha}(\delta,0,1)$. Lemma \ref{l0m} (a) implies that $L^{\alpha}(\delta,0,1)\le m$. Thus, $  L^{\alpha}(\delta,1,1) \le m$ for all the age $\delta>1$. Thus, we take $s=1$, and our objective holds.


Case (a2) Suppose that $p+q< 1$. Lemma \ref{l0m} (a) implies that $\mu^{\alpha,*}(\delta,0,0)$ is non-increasing. Then we take $s$ to be the threshold of $\mu^{\alpha,*}(\delta,0,0)$.
Then, $\mu^{\alpha,*}(\delta,0,0)=2$ for $\delta\le s$. Lemma \ref{lemma-switch} implies that $\mu^{\alpha,*}(\delta,1,0)=2$ for $\delta\le s$. 
Also, $\mu^{\alpha,*}(\delta,0,0)= 1$ for $\delta> s$. So, \eqref{proof09} implies that 
  $ L^{\alpha}(\delta+1,0,1)=\sum_{i=0}^{\infty} (\alpha p)^i$ for $\delta\ge s$. 
 From \eqref{A10}, for $\delta > s$, 
\begin{equation}
 L^{\alpha}(\delta,1,1)=1+\alpha (1-q)L^{\alpha}(\delta+1,0,1)=1+\alpha (1-q)\sum_{i=0}^{\infty} (\alpha p)^i.   \label{proof12}
\end{equation}
Thus, the first condition in part (a) implies that
$L^{\alpha}(\delta,1,1)\le m$ for $\delta>s$. 
By combining both $p+q\ge 1$ and $p+q< 1$ in Case (a1) and Case (a2) respectively, we complete the proof when $(p,q,d)\in \textbf{B}_1(\alpha) $.

(b) Suppose that $(p,q,d)\in \textbf{B}_2(\alpha)$. Similar to (a), our objective is to show that there exists a value $s$ such that $L^{\alpha}(\delta,1,1)\le m$ for $\delta>s$, and $\mu^{\alpha,*}(\delta,1,0)$ is a constant for $\delta\le s$.

Since the system parameters  $(p,q,d)\in \textbf{B}_2(\alpha) $, we have $1+\alpha (1-q)m \le m$ and $\sum_{k=0}^{\infty} (\alpha p)^k> m$. This implies $p+q\ge 1$. Also, Lemma \ref{l0m} (b) implies that $\mu^{\alpha,*}(\delta,0,0)$ is non-decreasing. Then we take $s$ to be the threshold of $\mu^{\alpha,*}(\delta,0,0)$. Then, $\mu^{\alpha,*}(\delta,0,0)=1$ for $\delta\le s$, and Lemma \ref{lemma-switch} implies that $\mu^{\alpha,*}(\delta,1,0)=1$ for $\delta\le s$. Also, $\mu^{\alpha,*}(\delta,0,0)=2$ for $\delta> s$. 
 Thus, $J^{\alpha}(\delta+1,0,0)-J^{\alpha}(\delta,0,0)=L^{\alpha}(\delta+1,0,2)$ for $\delta>s$. Lemma \ref{g2m} implies that  $L^{\alpha}(\delta+1,0,2)=m$ for $\delta>s$. Thus, from \eqref{A10},
we get $L^{\alpha}(\delta,1,1)= 1+\alpha (1-q)m$. From the condition in part (b), $1+\alpha(1-q)m\le m$. Thus, $L^{\alpha}(\delta,1,1)\le m$ for $\delta>s$, and we complete the proof of our objective when $(p,q,d)\in \textbf{B}_2(\alpha)$. 

(c) The case $(p,q,d)\in \textbf{B}_3(\alpha) $ has a similar proof to part (a) where $(p,q,d)\in \textbf{B}_1(\alpha)$. Our objective is to show that there exists a value $s$ such that $L^{\alpha}(\delta,1,1)> m$ for $\delta>s$, and $\mu^{\alpha,*}(\delta,1,0)$ is a constant for $\delta\le s$.
We will take $s=1$ if $p+q<1$.  Lemma \ref{l0m} (b) implies that $\mu^{\alpha,*}(\delta,0,0)$ is non-decreasing threshold type. So, we will take $s$ to be the threshold of $\mu^{\alpha,*}(\delta,0,0)$ if $p+q\ge 1$.

Note that the system parameters $(p,q,d)\in \textbf{B}_3(\alpha)$ implies $1+\alpha (1-q)m > m$ and $\sum_{i=0}^{\infty} (\alpha p)^i> m$.

Case (c$1$) Suppose that $p+q< 1$. Similar to the proof of part $(a1)$, we compare \eqref{A00} with \eqref{A10}, and we get 
$L^{\alpha}(\delta,1,1)> L^{\alpha}(\delta,0,1)$. Lemma \ref{l0m} (b) implies that $L^{\alpha}(\delta,0,1)> m$. Thus, $  L^{\alpha}(\delta,1,1) > m$ for $\delta>1$. Thus, we take $s=1$, and our objective holds.    

Case (c$2$) Suppose that $p+q\ge 1$. We take $s$ to be the threshold of non-decreasing $\mu^{\alpha,*}(\delta,0,0)$. Then, $\mu^{\alpha,*}(\delta,0,0)=1$ for $\delta\le s$. Thus, Lemma \ref{lemma-switch} implies that $\mu^{\alpha,*}(\delta,1,0)=1$.
Also, $\mu^{\alpha,*}(\delta,0,0)=2$ for $\delta>s$, same with part (b), $L^{\alpha}(\delta,1,1)=1+\alpha (1-q)m \ge m$, which proves our objective. By combining both Case (c$1$) and Case (c$2$) respectively, we complete the proof when $(p,q,d)\in \textbf{B}_3(\alpha) $.

(d) The case $(p,q,d)\in \textbf{B}_4(\alpha) $ has a similar proof to part (b) where $(p,q,d)\in \textbf{B}_2(\alpha)$. 
Our objective is to show that there exists a value $s$ such that $L^{\alpha}(\delta,1,1)> m$ for $\delta>s$, and $\mu^{\alpha,*}(\delta,1,0)$ is a constant for $\delta\le s$.

The case $(p,q,d)\in \textbf{B}_4(\alpha)$ gives $1+\alpha (1-q)\sum_{i=0}^{\infty} (\alpha p)^i > m$ and $\sum_{i=0}^{\infty} (\alpha p)^i \le m$. These 2 conditions imply that $p+q\le 1$. Lemma \ref{l0m} (a) implies that $\mu^{\alpha,*}(\delta,0,0)$ is non-increasing threshold type. Then we take $s$ to be the threshold of $\mu^{\alpha,*}(\delta,0,0)$.
So, $\mu^{\alpha,*}(\delta,0,0)=2$ for $\delta\le s$, and Lemma \ref{lemma-switch} implies that $\mu^{\alpha,*}(\delta,1,0)=2$ for $\delta\le s$.
Also, $\mu^{\alpha,*}(\delta,0,0)=1$ for $\delta>s$.
Thus, \eqref{proof12} in proof of (a2) still holds for $\delta>s$. Since $1+\alpha (1-q)\sum_{i=0}^{\infty} (\alpha p)^i > m$, \eqref{proof12} directly implies that $L^{\alpha}(\delta,1,1)> m$ for all $\delta>s$. Thus, we complete the proof of our objective when $(p,q,d)\in \textbf{B}_4(\alpha)$.

\section{proof of Lemma \ref{lemma-switch-avg}}\label{lemma-switchapp}

According to Lemma \ref{theorem1}, it is sufficient to show that for all $\alpha<1$, the following lemma holds.  
\begin{lemma}\label{lemma-switch}
If Channel $1$ is positive-correlated, i.e., $p+q\ge 1$, and $\mu^{\alpha,*}(\delta,0,0)=1$, then $\mu^{\alpha,*}(\delta,1,0)=1$.
Conversely, if Channel $1$ is negative-correlated, i.e. $p+q\le 1$, and $\mu^{\alpha,*}(\delta,0,0)=2$, then $\mu^{\alpha,*}(\delta,1,0)=2$.
\end{lemma}

We start the proof of Lemma \ref{lemma-switch}.

First of all, since both  $Q^{\alpha}(\delta,0,0,2)-Q^{\alpha}(\delta,0,0,1)$ and $Q^{\alpha}(\delta,1,0,2)-Q^{\alpha}(\delta,1,0,1)$ will induce a term $J^{\alpha}(\delta+1,1,d-1)-J^{\alpha}(1,1,0)$, we need to provide a lemma: 
\begin{lemma} We have 
$J^{\alpha}(\delta+1,1,d-1)\ge J^{\alpha}(1,1,0)$. \label{11d110}
\end{lemma}
\begin{proof}
Please see Appendix \ref{appendix11d110} for details.
\end{proof}
Then we start the proof. 

(a) Suppose that $p+q\ge 1$ and $\mu^{\alpha,*}(\delta,0,0)=1$. Thus,
\begin{equation}
  Q^{\alpha}(\delta,0,0,2)\ge Q^{\alpha}(\delta,0,0,1).  
\end{equation}
Recall that \eqref{A01}, \eqref{qdelta02} give the expression of $Q^{\alpha}(\delta,0,0,1),\\ Q^{\alpha}(\delta,0,0,2)$ respectively. We get 
\begin{equation}\begin{split}
   & Q^{\alpha}(\delta,0,0,2)-Q^{\alpha}(\delta,0,0,1)\\ = &\alpha p \Big{(}J^{\alpha}(\delta+1,0,d-1)-J^{\alpha}(\delta+1,0,0)\Big{)}\\ & + \alpha (1-p)\Big{(}J^{\alpha}(\delta+1,1,d-1)-J^{\alpha}(1,1,0)\Big{)}.\label{switchtype0}\end{split}
\end{equation} Then we want to show that $Q^{\alpha}(\delta,1,0,2)\ge Q^{\alpha}(\delta,1,0,1)$. Note that 
\begin{equation}
    \begin{split}
          & Q^{\alpha}(\delta,1,0,2)-Q^{\alpha}(\delta,1,0,1)\\ = &\alpha (1-q) \Big{(}J^{\alpha}(\delta+1,0,d-1)-J^{\alpha}(\delta+1,0,0)\Big{)}\\ & + \alpha q\Big{(}J^{\alpha}(\delta+1,1,d-1)-J^{\alpha}(1,1,0)\Big{)}.\label{switchtype1}
    \end{split}
\end{equation}
For the first terms in \eqref{switchtype1}, we have two possible cases:

Case (a$1$) Suppose that $J^{\alpha}(\delta+1,0,d-1)-J^{\alpha}(\delta+1,0,0)\ge 0$. From Lemma \ref{11d110}, we have $J^{\alpha}(\delta+1,1,d-1)\ge J^{\alpha}(1,1,0)$. Thus, \eqref{switchtype1} implies that $Q^{\alpha}(\delta,1,0,2)-Q^{\alpha}(\delta,1,0,1)\ge 0$. 

Case (a$2$) Suppose that $J^{\alpha}(\delta+1,0,d-1)-J^{\alpha}(\delta+1,0,0)< 0$. Since $q\ge 1-p$, then, \eqref{switchtype0} and \eqref{switchtype1} imply that 
\begin{equation}
    \begin{split}
 & Q^{\alpha}(\delta,1,0,2)-Q^{\alpha}(\delta,1,0,1)  \\
 \ge & Q^{\alpha}(\delta,0,0,2)-Q^{\alpha}(\delta,0,0,1) \ge 0.
    \end{split}
\end{equation}


(b) Suppose that $p+q\le 1$ and $\mu^{\alpha,*}(\delta,0,0)=2$. Then \eqref{switchtype0} is negative. Therefore, $J^{\alpha}(\delta+1,0,d-1)-J^{\alpha}(\delta+1,0,0)$ must be negative. Then, \eqref{switchtype0} and \eqref{switchtype1} imply that \begin{equation}
    \begin{split}
 & Q^{\alpha}(\delta,1,0,2)-Q^{\alpha}(\delta,1,0,1)  \\
 \le & Q^{\alpha}(\delta,0,0,2)-Q^{\alpha}(\delta,0,0,1) \le 0.
    \end{split}
\end{equation}

By considering (a) and (b), we have completed the proof.

\section{proof of Lemma \ref{11d110}}\label{appendix11d110}
First, when $d=2$, $J^{\alpha}(\delta+1,1,1)$ is expanded according to \eqref{equation04}, and we have 
\begin{equation}\begin{split}
 & J^{\alpha}(\delta+1,1,1)-J^{\alpha}(1,1,0) \\ \ge & J^{\alpha}(\delta+1,1,1)-Q^{\alpha}(1,1,0,1) \\ = & \delta+ \alpha a'_1 \Big{(}J^{\alpha}(d,1,0)-J^{\alpha}(1,1,0)\Big{)}\\  + & \alpha b'_1\Big{(}J^{\alpha}(d,0,0)-J^{\alpha}(d,0,0)\Big{)} \ge 0
  \end{split}
\end{equation} Thus, we only need to consider $d\ge 3$ in this proof.

Then, we will use the similar technique that is used in the proof of Lemma \ref{g2m}, to show the following inequality holds: 
\begin{equation}\label{add}
    \begin{split}
   & J^{\alpha}(\delta+1,1,d-1)-J^{\alpha}(1,1,0)\\ \ge & 
  \alpha^{d-2} a'_{d-2}\Big{(}J^{\alpha}(\delta+d-1,1,1)-J^{\alpha}(1,1,0)\Big{)}\\  & +  \alpha^{d-2} b'_{d-2}\Big{(}J^{\alpha}(\delta+d-1,0,1)-J^{\alpha}(d-1,0,0)\Big{)},
    \end{split}
\end{equation}
where $a'_{d-2}$, $b'_{d-2}$ are defined in \eqref{matrix1}.
\begin{proof}

Note that the optimal decision of $(\delta+1,l_1,d-1)$ is \emph{none} and $J^{\alpha}(\delta+1,l_1,d-1)$ is expanded according to \eqref{equation06}. Also, $J^{\alpha}(1,1,0)\le Q^{\alpha}(1,1,0,1)$ and $Q^{\alpha}(1,1,0,1)$ is expanded according to \eqref{A11}. We get     
\begin{equation}\label{equation110}
    \begin{split}
       & J^{\alpha}(\delta+1,1,d-1)-J^{\alpha}(1,1,0) \\
       \ge & J^{\alpha}(\delta+1,1,d-1)-Q^{\alpha}(1,1,0,1)
       \\ = & \delta+\alpha a'_{1}\Big{(}J^{\alpha}(\delta+2,1,d-2)-J^{\alpha}(1,1,0)\Big{)}\\  & + \alpha b'_{1}\Big{(}J^{\alpha}(\delta+2,0,d-2)-J^{\alpha}(2,0,0)\Big{)}\\
        \ge & \alpha a'_{1}\Big{(}J^{\alpha}(\delta+2,1,d-2)-J^{\alpha}(1,1,0)\Big{)}\\   & + \alpha b'_{1}\Big{(}J^{\alpha}(\delta+2,0,d-2)-J^{\alpha}(2,0,0)\Big{)},
    \end{split}
\end{equation} where $a'_1=q$ and $b'_1=1-q$ as defined in \eqref{matrix1}.
The optimal decision of $(\delta+2,l_1,d-2)$ in \eqref{equation110} is \emph{none} and $J^{\alpha}(\delta+2,l_1,d-2)$ is expanded similar to \eqref{equation06} according to the following: 
\begin{equation}\label{equation06add}\begin{split}
  & J^{\alpha}(\delta_0,0,d-2) \\ = & \delta_0 + \alpha p J^{\alpha}(\delta_0+1,0,d-3)+ \alpha (1-p) J^{\alpha}(\delta_0+1,1,d-3),  \\
  & J^{\alpha}(\delta_0,1,d-2) \\ = & \delta_0 + \alpha (1-q) J^{\alpha}(\delta_0+1,0,d-3)+ \alpha q J^{\alpha}(\delta_0+1,1,d-3),\end{split}\end{equation}
where $\delta_0$ is arbitrary.
Also,
\begin{equation}\begin{split}\label{equation210}
& J^{\alpha}(1,1,0) \le  Q^{\alpha}(1,1,0,1) \le Q^{\alpha}(2,1,0,1) \\ = & 1+\alpha q J^{\alpha}(1,1,0)+\alpha (1-q) J^{\alpha}(3,0,0).
    \end{split}
\end{equation}
Thus, \eqref{equation110},\eqref{equation06add} and \eqref{equation210} give 
\begin{equation}
    \begin{split}
      & J^{\alpha}(\delta+1,1,d-1)-J^{\alpha}(1,1,0)\\ \ge & \alpha^2 a'_{2}\Big{(}J^{\alpha}(\delta+2,1,d-3)-J^{\alpha}(1,1,0)\Big{)}\\  & + \alpha^2 b'_{2}\Big{(}J^{\alpha}(\delta+2,0,d-3)-J^{\alpha}(3,0,0)\Big{)}.
    \end{split}
\end{equation}

By using recursion for another $d-4$ times, we can finally get \eqref{add} (note that if $d=3$, then we have already proved \eqref{add} in \eqref{equation110}). 
\end{proof}

Now, we show Lemma \ref{11d110}. 
The value function $J^{\alpha}(\delta+d-1,l_1,1)$ in \eqref{add} is expanded according to \eqref{equation04}. Also, we have $J^{\alpha}(d-1,0,0)\le Q^{\alpha}(d-1,0,0,1)$. Similar to \eqref{equation210}, 
\begin{equation}
    J^{\alpha}(1,1,0)\le Q^{\alpha}(1,1,0,1)\le Q^{\alpha}(d-1,0,0,1).
\end{equation}
Thus, \eqref{add} gives:
\begin{equation}
    \begin{split}
   & J^{\alpha}(\delta+1,1,d-1)-J^{\alpha}(1,1,0)\\ \ge & 
 \alpha^{d-1}  a'_{d-1}\Big{(}J^{\alpha}(d,1,0)-J^{\alpha}(1,1,0)\Big{)}\\   & + \alpha^{d-1} b'_{d-1}\Big{(}J^{\alpha}(d,0,0)-J^{\alpha}(d,0,0)\Big{)}\ge 0,
    \end{split}
\end{equation}
where $a'_{d-1},b'_{d-1}$ is described in \eqref{matrix1}. Thus, we complete the proof of Lemma \ref{11d110}.

\section{proof of Lemma \ref{mu00in1and4}}\label{mu00in1and4app}

Recall that we use $\mu^{\alpha,*} (\cdot)$ to denote the discounted problem's optimal decisions. From Lemma \ref{theorem1}, it is sufficient to show that: for all discount factor $\alpha<1$, the optimal decisions $\mu^{\alpha,*}(\delta,0,0)=1$ if $(p,q,d)\in \textbf{B}_1(\alpha)\cup \textbf{B}_4(\alpha)$. 
We use $\mu_j^{\alpha,*}(\delta,0,0)$ to denote the optimal decision of the state $(\delta,0,0)$ at $j^{th}$ iteration according to the value iteration \eqref{value-iterate}. From Lemma \ref{lemma1}(c), to prove that $\mu^{\alpha,*}(\delta,0,0)=1$ for all $\delta$, we will show that $\mu_j^{\alpha,*}(\delta,0,0)=1$ for all $\delta$ and the iteration $j$. We show this by using induction on $j$. 

The value function $J^{\alpha}_0(\cdot)=0$ and cost function is $\delta$ for both choices. Thus, for $j=0$, we directly get $\mu^{\alpha,*}_j(\delta,0,0)=1$.

Suppose that $\mu^{\alpha,*}_j(\delta,0,0)=1$ for $j=n-1\ge 0$, we will show that $\mu^{\alpha,*}_j(\delta,0,0)=1$ for $j=n$.
To show this, we need to show: 

(i) The supermodularity holds for all $\delta\ge 2$:
\begin{equation}\begin{split}
& Q^{\alpha}_n(\delta,0,0,1)-Q^{\alpha}_n(\delta-1,0,0,1) \\ \le  & Q^{\alpha}_n(\delta,0,0,2)-Q^{\alpha}_n(\delta-1,0,0,2). \end{split} \label{modularn}  
\end{equation}
Thus, from \cite{topkis1998supermodularity}, $\mu^{\alpha,*}_n(\delta,0,0)$ is non-increasing in $\delta$.

(ii) The optimal decision $\mu^{\alpha,*}_n(1,0,0)=1$, i.e., $Q^{\alpha}_n(1,0,0,2)\ge Q^{\alpha}_n(1,0,0,1)$. From (i) and (ii), the optimal decision $\mu^{\alpha,*}_n(\delta,0,0)$ is $1$ for all $\delta$.  

We first show (i). For simplicity we define the age difference function:\begin{equation}
    \begin{split}
&  L^{\alpha}_n(\delta,0,1)=Q^{\alpha}_n(\delta,0,0,1)-Q^{\alpha}_n(\delta-1,0,0,1), \\   & L^{\alpha}_n(\delta,0,2)=Q^{\alpha}_n(\delta,0,0,2)-Q^{\alpha}_n(\delta-1,0,0,2).  
    \end{split}
\end{equation}
We want to show that 
\begin{equation}\label{wantoshow}
    L^{\alpha}_n(\delta,0,1)\le L^{\alpha}_n(\delta,0,2).
\end{equation}
First, we derive    \begin{equation}\label{ln01}
         L^{\alpha}_n(\delta,0,1)=\sum_{i=0}^{n-1} (\alpha p)^i.
     \end{equation}
\begin{proof}
Similar to \eqref{A01}, we can get 
\begin{equation}\label{b00_n1}
    \begin{split}
     &Q^{\alpha}_{n}(\delta,0,0,1)\\=& \delta+\alpha pJ^{\alpha}_{n-1}(\delta+1,0,0)+\alpha (1-p)J^{\alpha}_{n-1}(1,1,0),      
    \end{split}
\end{equation} thus, 
\begin{equation}\label{modular1}
    \begin{split}
     L^{\alpha}_n(\delta,0,1) & = 1+\alpha p \big{(} J^{\alpha}_{n-1}(\delta+1,0,0)-J^{\alpha}_{n-1}(\delta,0,0) \big{)}.\end{split}\end{equation} Since  $\mu^{\alpha,*}_0(\delta,0,0)=...=\mu^{\alpha,*}_{n-1}(\delta,0,0)=1$ for all $\delta$, we have
\begin{equation}\label{modular2}
    L^{\alpha}_n(\delta,0,1)   =1+\alpha p L^{\alpha}_{n-1}(\delta+1,0,1). 
\end{equation} Since \eqref{modular2} holds for all $\delta$, we can iteratively use \eqref{modular2}, similar to \eqref{proof13_1}, \eqref{proof09}, to get \begin{equation}\label{b00_n} \begin{split}   
     L^{\alpha}_n(\delta,0,1) = \sum_{i=0}^{n-1} (\alpha p)^i+(\alpha p)^{n} L^{\alpha}_{0}(\delta+n,0,1).\end{split}\end{equation} 
Since $L^{\alpha}_0(\delta+n,0,1)=0$, \eqref{b00_n} directly gives \eqref{ln01}. \end{proof}
Then, we derive $L^{\alpha}_n(\delta,0,2)$ in \eqref{wantoshow}. Following the same steps that are used in Lemma \ref{g2m}, we can show that: 
\begin{equation}\label{b002}
    \begin{split}
     &   L^{\alpha}_n(\delta,0,2)  = \sum_{i=0}^{k-1}\alpha^i \\&+\alpha^k a_k \Big{(}J^{\alpha}_{n-k}(\delta+k,1,d-k)-J^{\alpha}_{n-k}(\delta+k-1,1,d-k)\Big{)} \\ & +\alpha^k b_k\Big{(}J^{\alpha}_{n-k}(\delta+k,0,d-k)-J^{\alpha}_{n-k}(\delta+k-1,0,d-k)\Big{)},
    \end{split}
\end{equation} where $k=\min\{n,d-1\}$, and $a_k,b_k$ are defined in \eqref{matrix1}.

If $n\le d-1$, then $k=n$ and the value functions $J^{\alpha}_{n-k}(\cdot)$ inside \eqref{b002} are $0$. Thus, $L_n(\delta,0,2)=\sum_{i=0}^{n-1}\alpha^i$.  

If $n>d-1$, then $k=d-1$. We will expand all the value functions in \eqref{b002}. Recall that for all age value $\delta_0>d-1$, we have the same equation as \eqref{equation04} except adding a subscription:
\begin{equation}\label{equation04_n}\begin{split}
  & J^{\alpha}_{n-d+1}(\delta_0,0,1) \\ = & \delta_0 + \alpha p J^{\alpha}_{n-d}(d,0,0)+ \alpha (1-p) J^{\alpha}_{n-d}(d,1,0),  \\
  & J^{\alpha}_{n-d+1}(\delta_0,1,1) \\ = & \delta_0 + \alpha (1-q) J^{\alpha}_{n-d}(d,0,0)+ \alpha q J^{\alpha}_{n-d}(d,1,0).\end{split}\end{equation} 
 Applying \eqref{equation04_n} and \eqref{matrixx} into \eqref{b002}, we get the following equation which is the same as \eqref{A002}, except adding a subscription: \begin{equation}\begin{split}
    L^{\alpha}_n(\delta,0,2)
    =& \sum_{i=0}^{d-2}\alpha^i + \alpha^{d-1} (a_{d-1}+b_{d-1}) \\ +&\alpha^{d}a_{d}  \Big{(}J^{\alpha}_{n-d}(d,1,0)-J^{\alpha}_{n-d}(d,1,0)\Big{)} \\ +&\alpha^{d}b_{d}\Big{(}J^{\alpha}_{n-d}(d,0,0)-J^{\alpha}_{n-d}(d,0,0)\Big{)}\\ =& \sum_{i=0}^{d-1}\alpha^i \triangleq m.
    \end{split}
\end{equation}Thus, 
\begin{equation}\label{ln02}
    L^{\alpha}_n(\delta,0,2)=\left\{
\begin{array}{lll}
  m  & \text{if } n\ge d;\\
  \sum_{i=0}^{n-1} \alpha^i  & \text{if } n< d. 
\end{array}
\right. 
\end{equation} Since $(p,q,d)\in \textbf{B}_1 \cup \textbf{B}_4$, we have $\sum_{i=0}^{\infty} (\alpha p)^i \le m$. Thus, from \eqref{ln01} and \eqref{ln02}, we get $L_n(\delta,0,1)\le L_n(\delta,0,2)$, which proves condition (i).


 We next show (ii). 
 We have a following statement:
 
 \begin{lemma}\label{q11}
Suppose that $k=\min \{ n,d-1 \}$. Then, we have:
\begin{equation}\label{q11eq}
    \begin{split}
      &  Q^{\alpha}_n(1,0,0,2)-Q^{\alpha}_n(1,0,0,1)
     \ge  \sum_{i=0}^{k-1}\alpha^i(1-p^i) \\  & + \alpha^k a_k \big{(} J^{\alpha}_{n-k}(k+1,1,d-k)-J^{\alpha}_{n-k}(1,1,0) \big{)} \\  & + \alpha^k(b_k-p^k)\big{(} J^{\alpha}_{n-k}(k+1,0,d-k)-J^{\alpha}_{n-k}(k,0,0) \big{)}\\  & + \alpha^k p^k\big{(} J^{\alpha}_{n-k}(k+1,0,d-k)-J^{\alpha}_{n-k}(k+1,0,0) \big{)},
    \end{split}
\end{equation}
where $a_k,b_k$ are defined in \eqref{matrix1}.
 \end{lemma}
 \begin{proof}
 See Appendix \ref{q11app}. Note that $a_k+b_k=1$.
 \end{proof}
If $n\le d-1$, then $k=n$. In this case, all the value functions in \eqref{q11eq} (of Lemma \ref{q11}) are $0$. Then,
\begin{align}
\nonumber & Q^{\alpha}_n(1,0,0,2)-Q^{\alpha}_n(1,0,0,1) \\
     \ge &  \alpha(1-p)+...+\alpha^{n-1}(1-p^{n-1})\ge 0.    
\end{align}
Thus, $\mu^{\alpha,*}_n(1,0,0)=1$ and (ii) holds.  

If $n> d-1$, then $k=d-1$. In \eqref{q11eq}, we expand $J^{\alpha}_{n-d+1}(d,1,1)$, $J^{\alpha}_{n-d+1}(d,0,1)$, $J^{\alpha}_{n-d+1}(1,1,0)$, $J^{\alpha}_{n-d+1}(d-1,0,0)$ and  $J^{\alpha}_{n-d+1}(d,0,0)$ respectively.

The expansions of $J^{\alpha}_{n-d+1}(d,1,1), J^{\alpha}_{n-d+1}(d,0,1)$ follow from \eqref{equation04_n}: 

\begin{equation}\label{equation04_n1}\begin{split}
  & J^{\alpha}_{n-d+1}(d,0,1) \\ = & d + \alpha p J^{\alpha}_{n-d}(d,0,0)+ \alpha (1-p) J^{\alpha}_{n-d}(d,1,0),  \\
  & J^{\alpha}_{n-d+1}(d,1,1) \\ = & d + \alpha (1-q) J^{\alpha}_{n-d}(d,0,0)+ \alpha q J^{\alpha}_{n-d}(d,1,0).\end{split}\end{equation}

The value functions $J^{\alpha}_{n-d+1}(d,0,0)$, and $J^{\alpha}_{n-d+1}(d+1,0,0)$ are expanded as following:
\begin{equation}\begin{split}\label{equation04n_2}
    &J^{\alpha}_{n-d+1}(\delta_0,0,0)\le Q^{\alpha}_{n-d+1}(\delta_0,0,0,1)\\=& \delta_0+\alpha p J^{\alpha}_{n-d}(2,0,0)+\alpha (1-p) J^{\alpha}_{n-d}(1,1,0), \end{split}\end{equation} where $\delta_0\ge 0$.


Also, $J^{\alpha}_{n-d+1}(1,1,0)$ are expanded as follows:
\begin{equation}\begin{split}\label{equation04n_3}
    &J^{\alpha}_{n-d+1}(1,1,0)\le Q^{\alpha}_{n-d+1}(1,1,0,1) \\ \le & Q^{\alpha}_{n-d+1}(d-1,1,0,1)\\=& d-1+\alpha (1-q) J^{\alpha}_{n-d}(d,0,0)+\alpha q J^{\alpha}_{n-d}(1,1,0). \end{split}\end{equation}
Applying \eqref{equation04_n1},\eqref{equation04n_2} and \eqref{equation04n_3} into \eqref{q11eq}, we get   
\begin{equation}\label{q11eq_2}
    \begin{split}
      &  Q^{\alpha}_n(1,0,0,2)-Q^{\alpha}_n(1,0,0,1)\\
     \ge & \alpha(1-p)+...+\alpha^{d-1}(1-p^{d-1})\\  & + \alpha^d a_d \big{(} J^{\alpha}_{n-d}(d,1,0)-J^{\alpha}_{n-d}(1,1,0) \big{)} \\  & + \alpha^d(b_d-p^d)\big{(} J^{\alpha}_{n-d}(d,0,0)-J^{\alpha}_{n-d}(d,0,0) \big{)}\\  & + (\alpha p)^d\big{(} J^{\alpha}_{n-d}(d,0,0)-J^{\alpha}_{n-d}(d+1,0,0) \big{)}.
    \end{split}
\end{equation} Because value function is increasing in age, 
\begin{equation}\label{tacotuesday}
    J^{\alpha}_{n-d}(d,1,0)-J^{\alpha}_{n-d}(1,1,0)\ge 0.
\end{equation}
Thus, \eqref{q11eq_2} gives \begin{equation}\label{q11eq_3}
    \begin{split}
      &  Q^{\alpha}_n(1,0,0,2)-Q^{\alpha}_n(1,0,0,1)\\
     \ge & \alpha(1-p)+...+\alpha^{d-1}(1-p^{d-1}) \\  & +  (\alpha p)^d\big{(} J^{\alpha}_{n-d}(d,0,0)-J^{\alpha}_{n-d}(d+1,0,0) \big{)}.
    \end{split}
\end{equation} 

Since by the hypothesis, $\mu^{\alpha,*}_{0}(\delta,0,0)=...=\mu^{\alpha,*}_{n-d}(\delta,0,0)$ $=1$ for all $\delta\ge 0$, \eqref{ln01} implies that 
\begin{equation}\label{q11eq_3pre}
   J^{\alpha}_{n-d}(d,0,0)-J^{\alpha}_{n-d}(d+1,0,0)= -\sum_{i=0}^{n-d-1} (\alpha p)^i \ge - \sum_{i=0}^{\infty} (\alpha p)^i. 
\end{equation}
 Thus, \eqref{q11eq_3} and \eqref{q11eq_3pre} give \begin{equation}
    \begin{split}
    & Q^{\alpha}_n(1,0,0,2)-Q^{\alpha}_n(1,0,0,1) \\
  \ge &  \sum_{i=1}^{d-1}\alpha^{i}- \sum_{i=1}^{d-1} (\alpha p)^i  -(\alpha p)^d\sum_{i=0}^{\infty} (\alpha p)^i \\ = & \sum_{i=0}^{d-1}\alpha^{i}-\sum_{i=0}^{\infty} (\alpha p)^i=m-\sum_{i=0}^{\infty} (\alpha p)^i\ge 0,  
\label{1A00}    \end{split}
\end{equation} where the last inequality is because $(p,q,d)\in \textbf{B}_1 \cup \textbf{B}_4$.

Thus, (ii) holds. We complete the proof.

\section{Proof of Lemma \ref{q11}}\label{q11app}

We show Lemma \ref{q11} by using recursion. 

First of all, same with \eqref{switchtype0} and \eqref{switchtype1}, we have 
\begin{equation}\begin{split}
  &  Q^{\alpha}_n(1,0,0,2)-Q^{\alpha}_n(1,0,0,1)\\
  =& \alpha p \big{(} J^{\alpha}_{n-1}(2,0,d-1)-J^{\alpha}_{n-1}(2,0,0) \big{)} \\
   & + \alpha (1-p) \big{(} J^{\alpha}_{n-1}(2,1,d-1)-J^{\alpha}_{n-1}(1,1,0) \big{)}.
\end{split}\label{q11_1_2}
\end{equation} Note that $1-p^0=0$. Thus, \eqref{q11_1_2} is the same with \eqref{q11eq} if $k=1$. 

Note that $J^{\alpha}_{n-1}(1,1,0)\le Q^{\alpha}_{n-1}(1,1,0,1)$,  and $J^{\alpha}_{n-1}(2,0,0)\le Q^{\alpha}_{n-1}(2,0,0,1)$. Then, the second term of \eqref{q11_1_2} is as follows:
\begin{equation}\label{q11eq_proof}
    \begin{split}
 &  J^{\alpha}_{n-1}(2,1,d-1)-J^{\alpha}_{n-1}(1,1,0) \\ \ge   &  1+ \alpha q \big{(} J^{\alpha}_{n-2}(3,1,d-2)-J^{\alpha}_{n-2}(1,1,0) \big{)} \\ & + \alpha (1-q)\big{(} J^{\alpha}_{n-2}(3,0,d-2)-J^{\alpha}_{n-2}(2,0,0) \big{)}.  \end{split}
\end{equation} The first term of \eqref{q11_1_2} is as follows:
 \begin{equation}\label{q11eq_proof2}
    \begin{split}
 &  J^{\alpha}_{n-1}(2,0,d-1)-J^{\alpha}_{n-1}(2,0,0) \\ \ge   &  1+ \alpha (1-p) \big{(} J^{\alpha}_{n-2}(3,1,d-2)-J^{\alpha}_{n-2}(1,1,0) \big{)} \\ & + \alpha p \big{(} J^{\alpha}_{n-2}(3,0,d-2)-J^{\alpha}_{n-2}(3,0,0) \big{)}.
    \end{split}
\end{equation} Thus, applying \eqref{q11eq_proof} and \eqref{q11eq_proof2} into \eqref{q11eq} with $k=1$, we get \eqref{q11eq} when $k=2$. By using \eqref{q11eq_proof} and \eqref{q11eq_proof2} iteratively for $\min \{ n,d-1 \}-2$ times, we finally derive \eqref{q11eq} when $k=\min \{ n,d-1 \}$ (note that if $\min \{ n,d-1 \}=1$, we have proved \eqref{q11eq} in \eqref{q11_1_2}).

\section{proof of Lemma \ref{mu110for1}}\label{mu110for1app}

Recall that we use $\mu^{\alpha,*}(\cdot)$ to denote the optimal policy of the discounted problem. From Lemma \ref{theorem1}, it is sufficient to show that: for all discount factor $\alpha$, $\mu^{\alpha,*}(1,1,0)=1$ if $(p,q,d)\in \textbf{B}_1(\alpha)$. 

The condition $(p,q,d)\in \textbf{B}_1(\alpha)$ implies that $H(p,q,d,\alpha)\le 0$ and $F(p,q,d,\alpha)\le 0$. From Theorem \ref{theorem2}, $\mu^{\alpha,*}(\delta,1,0)$ is non-increasing in $\delta$. We want to show that $Q^{\alpha}(1,1,0,2)\ge Q^{\alpha}(1,1,0,1)$. Then, $\mu^{\alpha,*}(\delta,1,0)=1$ for all $\delta$. 

Using the same technique with the proof of Lemma \ref{q11}, we get:
\begin{equation}\label{q11eq2}
    \begin{split}
      &  Q^{\alpha}(1,1,0,2)-Q^{\alpha}(1,1,0,1)
  \\   \ge & h'(d-1) +  \alpha^{d-1} a'_{d-1} \big{(} J^{\alpha}(d,1,1)-J^{\alpha}(1,1,0) \big{)} \\  & + \alpha^{d-1}(b'_{d-1}-(1-q)p^{d-1})\big{(} J^{\alpha}(d,0,1)-J^{\alpha}(d-1,0,0) \big{)}\\  & +  \alpha^{d-1} (1-q)p^{d-2}\big{(} J^{\alpha}(d,0,1)-J^{\alpha}(d,0,0) \big{)},
    \end{split}
\end{equation} where $a'_{d-1},b'_{d-1}$ are defined in \eqref{matrix1}, and the function $h'(k)$ for $k\in \{1,2,...\}$ is defined as follows:
\begin{equation*}
    h'(k) = \left\{
\begin{array}{lll}
  \sum_{i=1}^{k-1}\alpha^i (1-(1-q)p^{i-1}) & \text{if } k\ge 2;\\
  0  & \text{if } k=1. 
\end{array}
\right. 
\end{equation*}
Applying \eqref{equation04_n1}, \eqref{equation04n_2} and \eqref{equation04n_3} into \eqref{q11eq2} and we get
\begin{equation}\label{q11eq_4}
    \begin{split}
      &  Q^{\alpha}(1,1,0,2)-Q^{\alpha}(1,1,0,1)\\
     \ge & \alpha(1-(1-q)p^0)+...+\alpha^{d-1}(1-(1-q)p^{d-2})\\ & + \alpha^d a'_d \big{(} J^{\alpha}(d,1,0)-J^{\alpha}(1,1,0) \big{)} \\  & + \alpha^d(b'_d-(1-q)p^{d-1})\big{(} J^{\alpha}(d,0,0)-J^{\alpha}(d,0,0) \big{)}\\ & +  \alpha^d (1-q)p^{d-1}\big{(} J^{\alpha}(d,0,0)-J^{\alpha}(d+1,0,0) \big{)}\\  \ge & \alpha(1-(1-q)p^0)+...+\alpha^{d-1}(1-(1-q)p^{d-2}) \\  & + \alpha (1-q)(\alpha p)^{d-1}\big{(} J^{\alpha}(d,0,0)-J^{\alpha}(d+1,0,0) \big{)}\\ = & \sum_{i=0}^{d-1}\alpha^i-1-\alpha (1-q) \sum_{i=0}^{d-2} (\alpha p)^i  \\  & +  \alpha (1-q)(\alpha p)^{d-1}\big{(} J^{\alpha}(d,0,0)-J^{\alpha}(d+1,0,0) \big{)},
    \end{split}
\end{equation} where the second inequality is from \eqref{tacotuesday}. From Lemma \ref{mu00in1and4}, we know that $\mu^{\alpha,*}(\delta,0,0)=1$ for all $\delta$. Then, \eqref{proof09} implies that \begin{equation}
    J^{\alpha}(d,0,0)-J^{\alpha}(d+1,0,0)=-\sum_{i=0}^{\infty}(\alpha p)^i.
\end{equation}
 Thus, \eqref{q11eq_4} becomes
\begin{equation}
    \begin{split}
     & Q^{\alpha}(1,1,0,2)-Q^{\alpha}(1,1,0,1)\\
     \ge & \sum_{i=0}^{d-1}\alpha^i-1-\alpha (1-q) \sum_{i=0}^{d-2} (\alpha p)^i  -   \alpha (1-q)(\alpha p)^{d-1}\sum_{i=0}^{\infty}(\alpha p)^i\\
  = & \sum_{i=0}^{d-1}\alpha^i-1-\alpha (1-q) \sum_{i=0}^{\infty} (\alpha p)^i=-H(p,q,d,\alpha)\ge 0. \label{solution110}
    \end{split}
\end{equation} Thus, $\mu^{\alpha,*}(1,1,0)=1$.

\section{Diagrams and Derivations of Steady-State DTMCs}\label{markovapp}

This section provides the Markov chains corresponding to the cases in the proofs of Theorem \ref{theorem2b}---\ref{theorem2d} in Section \ref{proof3}. The Markov chains are described in Fig. \ref{markov1}---\ref{markov6}. The derivations of the expected age for each Markov chain are described later. 
We need to remark here for the descriptions of the following Markov chains. (i) We sometimes replace two states by a new "state" in the Markov chains. For example, in Fig. \ref{markov1}, we include the two states $(s+d,0,1),(s+d,1,1)$ into one circle (the same occurs for $(s+1,0,d-1),(s+1,1,d-1)$, etc). This means that we only consider the combined probability distribution of the two states $(s+d,0,1),(s+d,1,1)$. The combination of the two states can largely simplify the Markov chains figures. Also, it does not affect the derivations of the expected age. (ii) The values $a_d,b_d,a'_d,b'_d$ are defined in \eqref{matrix1}. Suppose that we choose Channel $2$ with $l_1=0$. Then \eqref{A002} and \eqref{A0002} imply that the probabilities of returning back to $(d,1,0)$, and $(d,0,0)$ are $a_d,b_d$ respectively (e.g., see Fig. \ref{markov1}). If $l_1=1$, then the probabilities are $a'_d,b'_d$ respectively (e.g., see the left part of Fig. \ref{markov3}).  


\subsection{} \label{markov1app} 

\begin{figure}[htbp]
\centerline{\includegraphics[width=0.5\textwidth]{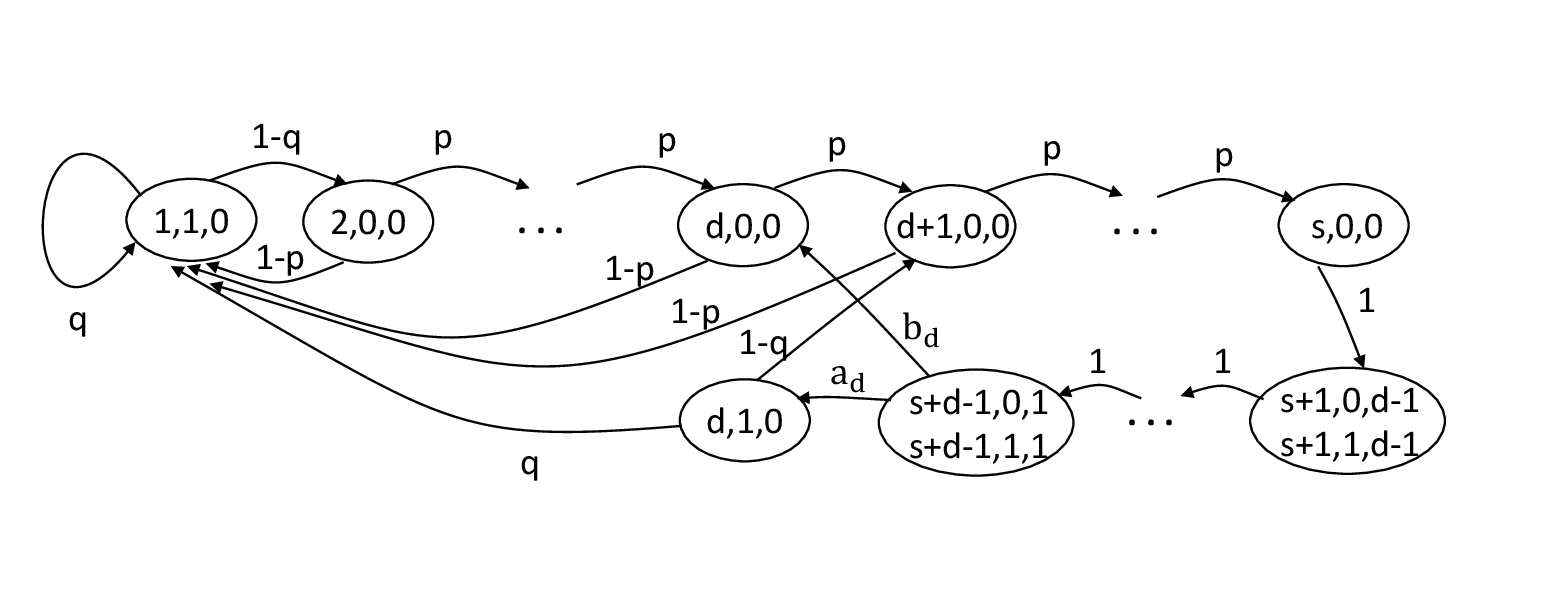}}
\caption{ The threshold $s>d$ and the optimal decisions $\mu^*(1,1,0)=\mu^*(d,1,0)=1$.}
\label{markov1}
\end{figure}

Referring to Fig. \ref{markov1}, we derive the balance equation on the states $(2,0,0),...,(d-1,0,0),(d+1,0,0),...,(s,0,0)$, and the $d-1$ combined states out of $(s,0,0)$ respectively. Then we get
\begin{equation}\begin{split}\label{markov1_1}
 & \pi(\delta,0,0)=p^{\delta-d-1}\pi(d+1,0,0) \ \ \delta=d+1,...,s, \\
 & \pi(\delta,0,0)=(1-q)p^{\delta-2}\pi(1,1,0) \ \ \delta=2,3,...,d-1,\\
   & \pi(s,0,0)=\pi(s+1,0,d-1)+\pi(s+1,1,d-1)  \\
  &...=\pi(s+d-1,0,1)+\pi(s+d-1,1,1). 
\end{split}
\end{equation} From \eqref{markov1_1}, the balance equation on the state $(d,1,0)$ implies \begin{equation}
    \pi(d,1,0)=a_d \pi(s,0,0).
\end{equation} The balance equation on the state $(d+1,0,0)$ implies \begin{equation}
    (1-q)\pi(d,1,0)+p\pi(d,0,0)=\pi(d+1,0,0).
\end{equation} The balance equation on the state $\pi(d,0,0)$ gives \begin{equation}
    p \pi(d-1,0,0)+b_d p^{s-d-1}\pi(d+1,0,0)=\pi(d,0,0).
\end{equation} 
The above equations give \begin{equation}\label{markov1_2}
\begin{split}
     & \pi(d,1,0)+\pi(d,0,0)\\ & = \big{(}\frac{p^{s-d+1}}{1-b_d p^{s-d}-(1-q)a_d p^{s-d-1}}+p \big{)} (1-q)p^{d-3} \pi(1,1,0), \\
     & \pi(d+1,0,0)=\frac{(1-q)p^{d-1}}{1-b_d p^{s-d}-(1-q)a_d p^{s-d-1}}\pi(1,1,0).
\end{split}
\end{equation} Thus, \eqref{markov1_1} and \eqref{markov1_2} directly implies that all the states in the Markov chain can be expressed in terms of $\pi(1,1,0)$. Since the summing up of all the states probabilities are $1$, we can directly get the distribution of $\pi(1,1,0)$: \begin{equation}
  \pi(1,1,0) = \frac{p c_1(s)}{(1-q) g_1(s)}.
\end{equation} Where $c_1(s),g_1(s)$ are described in Table \ref{fg}. The expected age is the summation of the probability of the state multiplied by the state's age value, which is given by 
\begin{equation}
  \pi(1,1,0)(1-q)/(p c_1(s))\times f_1(s)=f_1(s)/g_1(s)  
\end{equation}
The function $f_1(s)$ is in Table \ref{fg} as well.    

Thus, the expected age is $f_1(s)/g_1(s)$.

\subsection{} \label{markov2app} 

\begin{figure}[htbp]
\centerline{\includegraphics[width=0.5\textwidth]{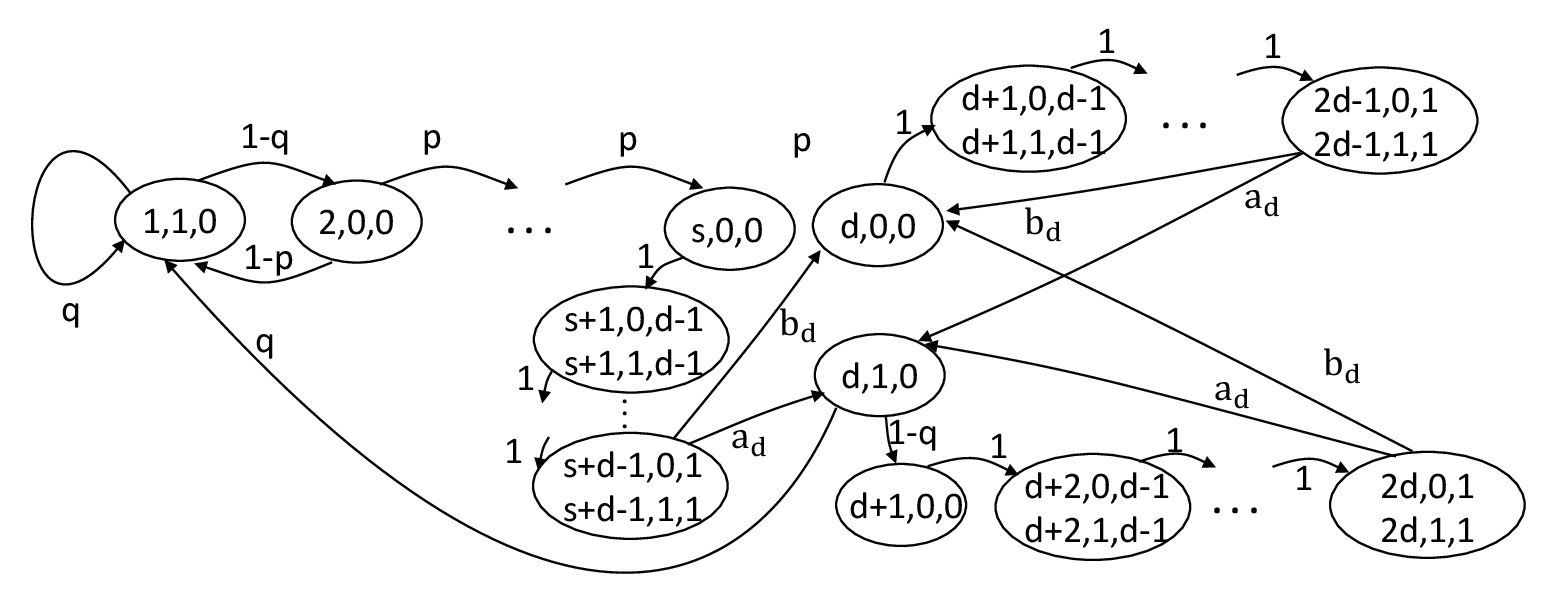}}
\caption{The threshold $s\le d$ and the optimal decisions $\mu^*(1,1,0)=\mu^*(d,1,0)=1$.}
\label{markov2}
\end{figure}

Referring to Fig. \ref{markov2}, we derive the balance equations on the states $(2,0,0),(3,0,0),...,(s,0,0)$, and the $d-1$ combined states out of $(s,0,0)$ , and get 
\begin{equation}\begin{split}\label{markov2_1}
  &  \pi(s,0,0)=\pi(s+1,0,d-1)+\pi(s+d-1,1,d-1)\\
  & ... =\pi(s+d-1,0,1)+\pi(s+d-1,1,1),\\
  & \pi(\delta,0,0)=(1-q)p^{\delta-2}\pi(1,1,0) \ \ \delta=2,3,...,s.
    \end{split}
\end{equation} We then observe the set $\{ (1,1,0),(2,0,0),...,(s,0,0)\}$: the inflow of $q\pi(d,1,0)$ equals to the outflow $(s,0,0)$. Thus, combined with \eqref{markov2_1}, \begin{equation}\label{markov2_2}
   q \pi(d,1,0) = \pi(s,0,0)=(1-q)p^{s-2}\pi(1,1,0).
\end{equation} The state $(d,1,0)$ gives 
\begin{equation}
    a_d \pi(d,0,0)=b_d\pi(s,0,0)+b_d(1-q)\pi(d,1,0),
\end{equation} thus,
\begin{equation}\label{markov2_3}
    \pi(d,0,0)=\frac{b_d}{a_d q} \pi(s,0,0)=\frac{b_d}{a_d q} (1-q)p^{s-2}\pi(1,1,0).
\end{equation} Thus, \eqref{markov2_1}, \eqref{markov2_2} and \eqref{markov2_3} imply that all the states in the Markov chain can be expressed in terms of $\pi(1,1,0)$. Also, the sums up of the probability of all the states is $1$: 
\begin{equation}\begin{split}
  & \pi(1,1,0)+ \sum_{\delta=2}^s \pi(\delta,0,0)+(d-1)\pi(s,0,0)+d \pi(d,0,0) \\ & +(d(1-q)+1)\pi(d,1,0)=1.\end{split}
\end{equation} Thus,\begin{equation}
  \pi(1,1,0) = \frac{p}{(1-q) g_2(s)}.
\end{equation}   
Thus, we give the expected age to be $f_2(s)/g_2(s)$ in Table \ref{fg}.

\subsection{} \label{markov3app} 

\begin{figure}[htbp]
\centerline{\includegraphics[width=0.5\textwidth]{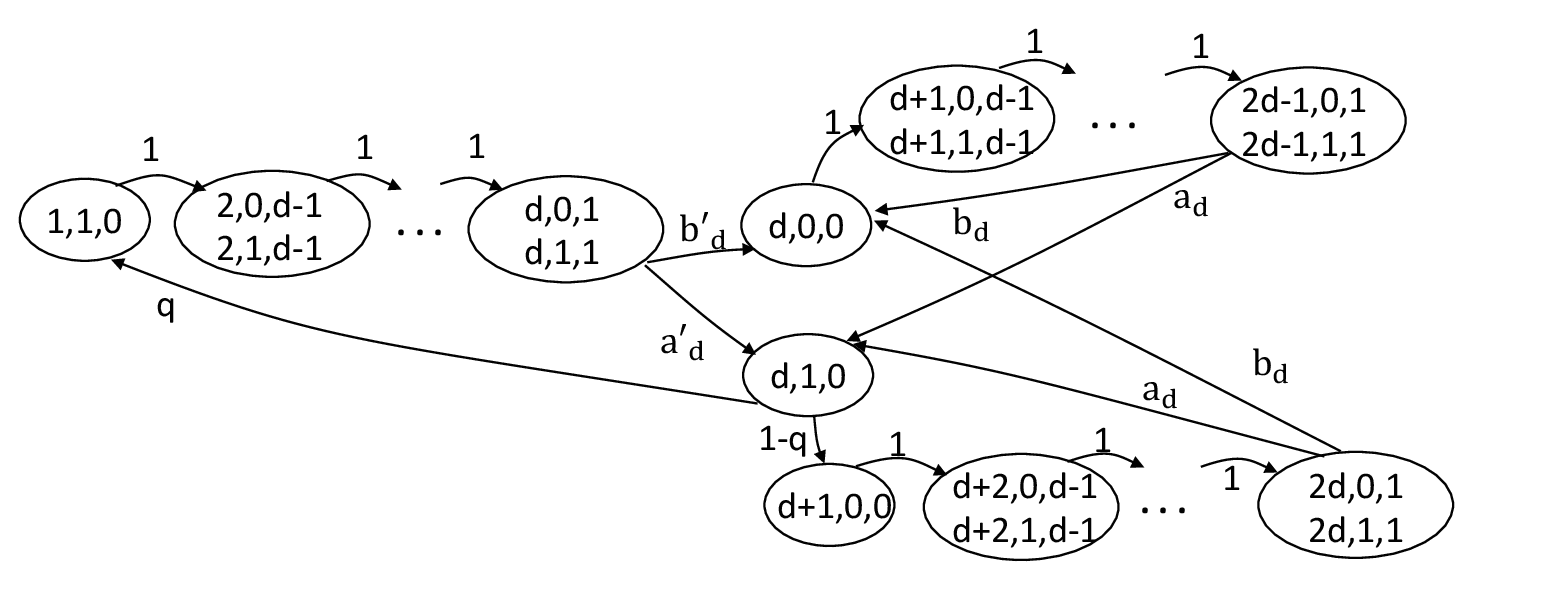}}
\caption{ The threshold $s=1$ and the optimal decisions $\mu^*(1,1,0)=2, \mu^*(d,1,0)=1$.}
\label{markov3}
\end{figure}
Referring to Fig. \ref{markov3}, the $d-1$ combinations states from $(1,1,0)$ gives
\begin{equation}\begin{split}
   & \pi(1,1,0)=\pi(2,0,d-1)+\pi(2,1,d-1) \\
   & ...=\pi(d,0,1)+\pi(d,1,1),\end{split}
\end{equation} the state $(1,1,0)$ gives \begin{equation}
    \pi(1,1,0)=q \pi(d,1,0).
\end{equation} Also, 
\begin{equation}\begin{split}
  &  \pi(d+1,0,0)=\pi(d+2,0,d-1)+\pi(d+2,1,d-1)\\
  & ... =\pi(2d,0,1)+\pi(2d,1,1)=(1-q)\pi(d,1,0).
  \end{split}
\end{equation} The state of $(d,0,0)$ gives 
\begin{equation}
    (1-b_d)\pi(d,0,0)=b'_d \pi(1,1,0)+b_d \pi(d,1,0),
\end{equation} thus, \begin{equation}
    \pi(d,0,0)=\frac{b'_d q + b_d}{1-b_d}\pi(d,1,0).
\end{equation}

Thus, all state distributions can be expressed in terms of $\pi(d,1,0)$, and
the expected age is $f_0/g_0$.

\subsection{}\label{markov4app} 

\begin{figure}[htbp]
\centerline{\includegraphics[width=0.5\textwidth]{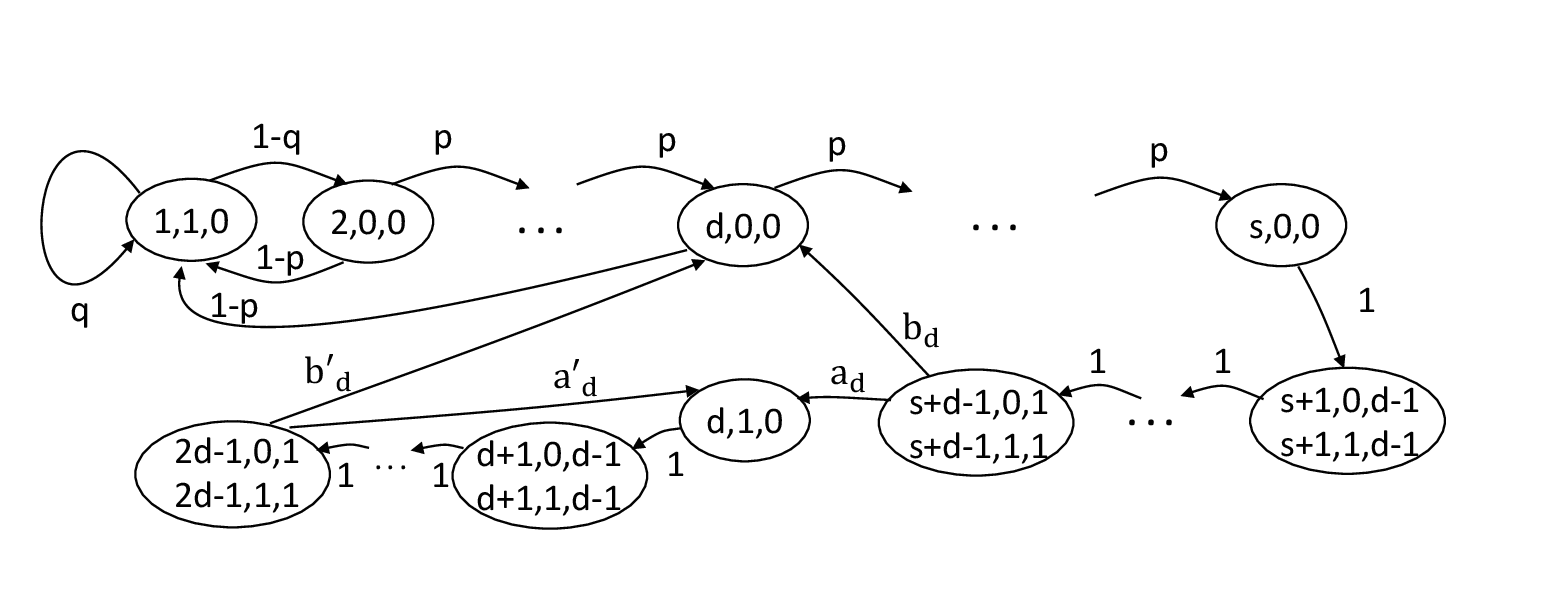}}
\caption{The threshold $s>d$ and the optimal decisions $\mu^*(1,1,0)=1, \mu^*(d,1,0)=2$.}
\label{markov4}
\end{figure}
Referring to Fig. \ref{markov4}, the states $(2,0,0),...,(d-1,0,0)$, $(d+1,0,0),...,(s,0,0)$, and $d-1$ states from $(s,0,0)$ give:
\begin{equation}\begin{split}\label{markov4_1}
  &  \pi(\delta,0,0)=p^{\delta-2}(1-q)\pi(1,1,0) \ \  \delta=2,3,...,d-1,\\
  & \pi(\delta,0,0)=p^{\delta-d}\pi(d,0,0) \ \  \delta=d+1,...,s, \\
 & \pi(s,0,0)=\pi(s+1,0,d-1)+\pi(s+1,1,d-1) \\
  & ...= \pi(s+d-1,0,1)+\pi(s+d-1,1,1).
  \end{split}
\end{equation} 
The combination of states $(d,0,0)$ and $(d,1,0)$ gives 
\begin{equation}
    \pi(s,0,0)+p\pi(d-1,0,0)=\pi(d,0,0).
\end{equation} Equation \eqref{markov4_1} implies that $\pi(d-1,0,0)=p^{d-3}(1-q)\pi(1,1,0)$, thus, 
\begin{equation}
    \pi(d,0,0)=\frac{p}{1-p^{s-d}}\pi(d-1,0,0)=\frac{p^{d-2}(1-q)}{1-p^{s-d}}\pi(1,1,0).
\end{equation} The state $(d,1,0)$ gives \begin{equation}
    \pi(d,1,0)(1-a'_d)=a_d \pi(s,0,0)=a_d p^{s-d}\pi(d,0,0).
\end{equation} Thus, all the states distributions in the Markov chain can be expressed in terms of $\pi(1,1,0)$. The expected age is $f_3(s)/g_3(s)$.

\subsection{} \label{markov5app} 

\begin{figure}[htbp]
\centerline{\includegraphics[width=0.5\textwidth]{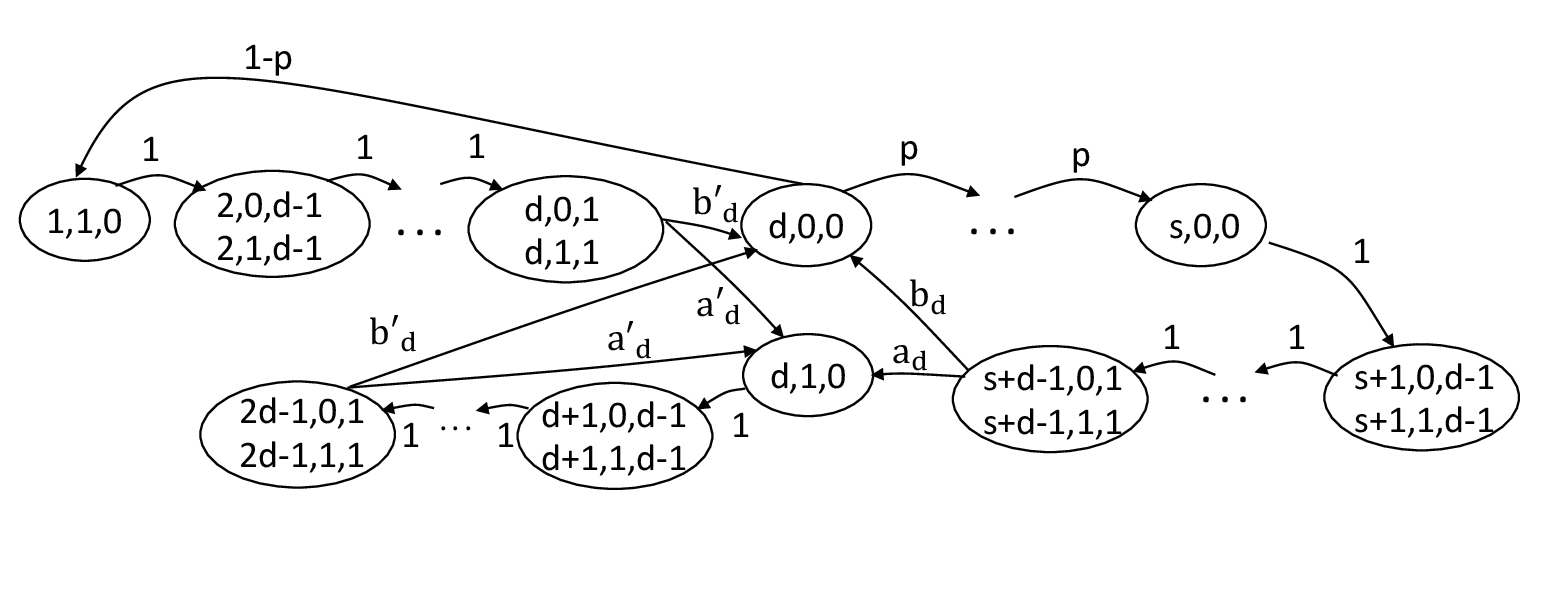}}
\caption{The threshold $s>d$ and the optimal decisions $\mu^*(1,1,0)= \mu^*(d,1,0)=2$.}

\label{markov5}
\end{figure}
Referring to Fig. \ref{markov5}, the balance equations of the $d-1$ states from $(s,0,0)$ and states $(d+1,0,0),...,(s,0,0)$ are given by:
\begin{equation}\begin{split}\label{markov5_1}
 & \pi(s,0,0) = \pi(s+1,0,d-1)+\pi(s+1,1,d-1) \\ & ...=\pi(s+d-1,0,1)+\pi(s+d-1,1,1),\\
 & \pi(\delta,0,0)=p^{\delta-d}\pi(d,0,0), \ \ \delta=d+1,...,s.  
 \end{split}
\end{equation} The combination of $(d,0,0),(d,1,0)$ gives \begin{equation}
    \pi(1,1,0)+\pi(s,0,0)=\pi(d,0,0),
\end{equation} thus, using \eqref{markov5_1}, we get 
\begin{equation}
    \pi(1,1,0)=(1-p^{s-d})\pi(d,0,0).
\end{equation} By looking at $\pi(d,1,0)$, \begin{equation}
    (1-a'_d)\pi(d,1,0)=a_d \pi(s,0,0)+a'_d \pi(1,1,0),
\end{equation} thus, \begin{equation}
    \pi(d,1,0)=\frac{a'_d+(a_d-a'_d)}{1-a'_d}\pi(d,0,0).
\end{equation} 
Thus, all the states distributions in the Markov chain can be expressed in terms of $\pi(d,0,0)$. Similar to previous sections, the distribution can be solved and the expected age is $f_4(s)/g_4(s)$.







\subsection{} \label{markov6app} 

\begin{figure}[htbp]
\centerline{\includegraphics[width=0.5\textwidth]{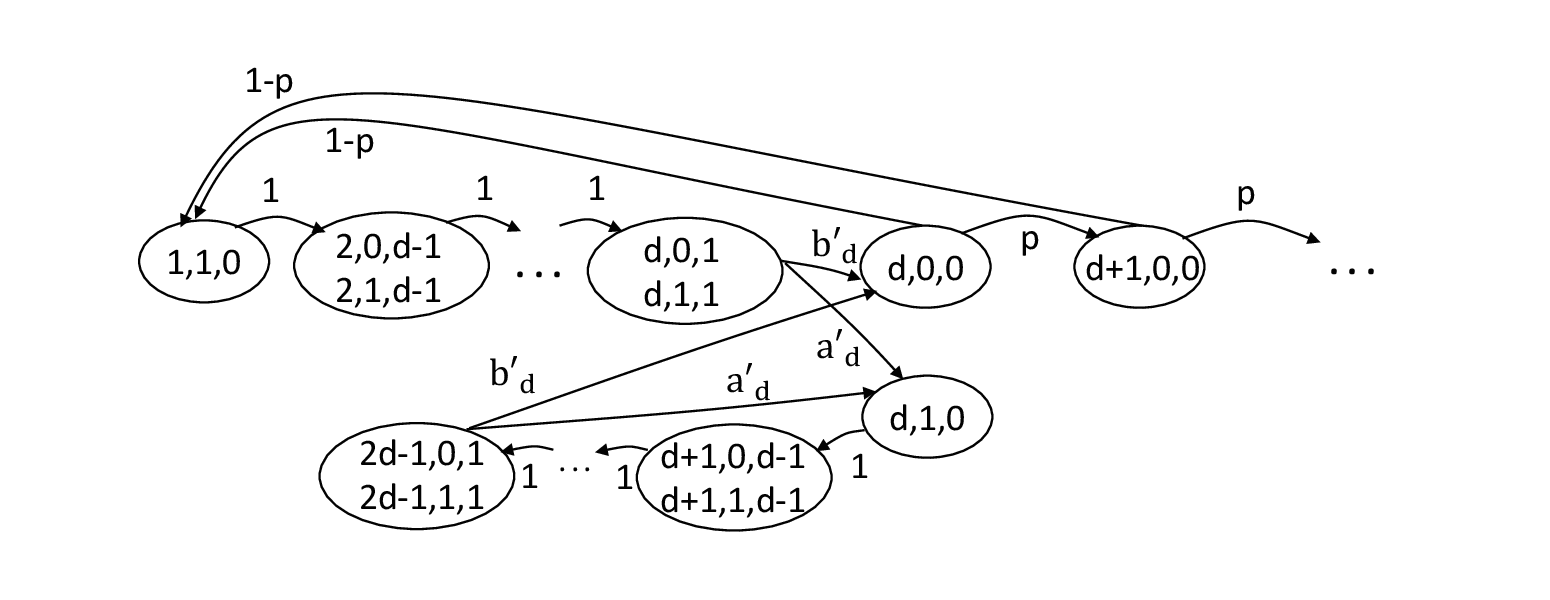}}
\caption{The optimal decisions $\mu^*(\delta,0,0)=1$ and $\mu^*(\delta,1,0)=2$ for all $\delta\geq 1$.}

\label{markov6}
\end{figure}

Referring to Fig. \ref{markov6}, the balance equations of the states $(d+1,0,0),...$ give \begin{equation}\label{markov6_1}
    \pi(\delta,0,0)=p^{\delta-d}\pi(d,0,0) \ \ \delta=d+1,d+2,...
\end{equation} The state $(1,1,0)$ and the $d-1$ combinations states from $(1,1,0)$ imply that \begin{equation}\label{markov6_2}\begin{split}
 &    \pi(1,1,0)=\pi(2,0,d-1)+\pi(2,1,d-1) \\
   & ...=\pi(d,0,1)+\pi(d,1,1),
 \\ &  \pi(1,1,0)=(1-p)\big{(} \pi(d,0,0)+\pi(d+1,0,0)+... \big{)}\end{split}
\end{equation} From \eqref{markov6_1},\eqref{markov6_2} we can get 
\begin{equation}
    \pi(1,1,0)=\pi(d,0,0).
\end{equation}
The $d-1$ combinations states from $(d,1,0)$ implies that \begin{equation}
    \begin{split}
    &   \pi(d,1,0)= \pi(d+1,0,d-1)+\pi(d+1,1,d-1)\\ & ...=\pi(2d-1,0,1)+\pi(2d-1,1,1).
    \end{split}
\end{equation}
The state $(d,1,0)$ gives \begin{equation}
    b'_d \pi(1,1,0)+b'_d \pi(d,1,0)=\pi(d,0,0).
\end{equation} thus, \begin{equation}
    \pi(d,1,0)=\frac{a'_d}{b'_d}\pi(d,0,0).
\end{equation} Thus, all the states probabilities can be expressed in terms of $\pi(d,0,0)$. By normalizing, we get $\pi(d,0,0)=1/g'_0$. Then the expected age is $f'_0/g'_0$.


\section{Proof of Lemma \ref{fractional}}\label{fractionalapp}

We rewrite $\beta_i'$:
\begin{align}
\beta'_{i}& =
\min_{s\in \{ d+1,...\} }\frac{f_{i}(s)}{g_{i}(s)}, \ i\in \{1,3,4\}, \label{algr2} \\ 
\beta'_{2}&=
\min_{s\in \{1,...d\} }\frac{f_{2}(s)}{g_{2}(s)}.
\label{algr2-2}\end{align}
Further, we rewrite \eqref{h-func-1} and \eqref{h-func-2}:
\begin{align}
h'_{i}(c) &=\min_{s\in \{ d+1,...\} }  f_{i}(s) - c g_{i}(s), \ i\in \{1,3,4\}, \label{inlayer2} \\
h'_{2}(c) &= \min_{s\in \{2,...d\} }  f_{2}(s) - c g_{2}(s). \label{inlayer2-2}
\end{align}

From Table \ref{fg}, it is easy to find that there exists a value $d'>0$ such that $g_{i}(s)>d'$ for all $s$ and $i$. Also, $g_i(s)$ is upper bounded. Thus, from \eqref{algr2},\eqref{algr2-2}, \eqref{inlayer2} and \eqref{inlayer2-2}, $h'_i(c)\lesseqqgtr 0$ is equivalent to $c \gtreqqless \beta_i'$, which proves our result.  

\section{proof of Lemma \ref{fracsolution}}\label{app2}
Notice that $(p,q,d)\in \textbf{B}_2\cup \textbf{B}_3$ if and only if $1-p<1/d$. Suppose that $r_{i}(c,s)\triangleq f_{i}(s)-cg_{i}(s)$. 


We find that:
\begin{equation}\label{f}
     p^{-(s-1)}\Big{(} f_{i}(s+1)- f_{i}(s)\Big{)}
  = \Big{(}1-d(1-p)\Big{)}s+l'_{i} 
    \end{equation}
    where $l'_{i}$ is not related to $c,s$ and is described in Table \ref{fg}. Also, 
    \begin{equation}\label{g}
    p^{-(s-1)}\Big{(} g_{i}(s+1)- g_{i}(s)\Big{)}  = o_{i},
\end{equation} where  $o_{i}$ is not related to $c,s$ and are described in Table \ref{fg} as well. Thus, \eqref{f} and \eqref{g} give: 
\begin{equation}
    \begin{split}
    & p^{-(s-1)} (r_{i}(c,s+1)-r_{i}(c,s)) \\ & = \Big{(}1-d(1-p)\Big{)}s+l'_{i}-co_{i}. \label{positive}
    \end{split}
\end{equation}  Note that \eqref{positive} holds for $i\in \{ 1,2,3,4 \}$. Since $1-(1-p)d>0$, \eqref{threshold1} (for $i\in \{1,3,4\}$) and \eqref{threshold2} (for $i=2$) are the minimum point of $r_{i}(c,s)$. Thus, we complete the proof.

\end{document}